\title{\bfseries Beyond Highway Dimension:\\ Small Distance Labels Using Tree Skeletons\thanks{Supported by Inria project GANG, ANR project DESCARTES, and NCN grant 2015/17/B/ST6/01897.}}
\author{Adrian Kosowski}
\author{Laurent Viennot}
\affil{Inria Paris and IRIF, Universit\'e Paris Diderot, France}
\date{}
\def\laurent#1{}
\def\adrian#1{}
\long\def\jump#1\finjump{}
\newtheorem{theorem}{Theorem}
\newtheorem{lemma}{Lemma}
\newtheorem{corollary}{Corollary}
\newtheorem{proposition}{Proposition}
\newtheorem{claim}{Claim}
\newcommand{\card}[1]{\left|{#1}\right|}
\newcommand{\floor}[1]{\left\lfloor{#1}\right\rfloor}
\newcommand{\ceil}[1]{\left\lceil{#1}\right\rceil}
\newcommand{\set}[1]{\left\{{#1}\right\}}
\newcommand{\bigo}[1]{O\mathopen{}\left(#1\right)}
\newcommand{\E}{\mathbb{E}}
\renewcommand{\P}{\mathcal{P}}
\newcommand{\wtilde}[1]{\widetilde{#1}}
\newcommand{\G}{{\wtilde{G}}}
\newcommand{\T}{{\wtilde{T}}}
\let\eps=\varepsilon
\let\DD=\delta
\let\cost=d
\let\symb=x
\let\a=\alpha
\let\b=\beta
\newcommand{\R}{\mathbb{R}}
\newcommand{\N}{\mathbb{N}}
\DeclareMathOperator{\reach}{\mathit{Reach}}
\DeclareMathOperator{\width}{\mathit{Width}}
\DeclareMathOperator{\sk}{\mathit{k}}
\DeclareMathOperator{\isk}{\mathit{\hat{k}}}
\DeclareMathOperator{\diam}{\mathit{D}}
\DeclareMathOperator{\cut}{\mathit{Cut}}
\DeclareMathOperator{\dist}{\delta}
\DeclareMathOperator{\aspect}{\diam}
\DeclareMathOperator{\rand}{rand}
\begin{document}

\maketitle

\begin{abstract}
The goal of a hub-based distance labeling scheme for a network $G = (V,E)$ is to assign a small subset $S(u) \subseteq V$ to each node $u \in V$, in such a way that for any pair of nodes $u, v$, the intersection of hub sets $S(u) \cap S(v)$ contains a node on the shortest $uv$-path.

The existence of small hub sets, and consequently efficient shortest path processing algorithms, for road networks is an empirical observation. A theoretical explanation for this phenomenon was proposed by Abraham et al.\ (SODA 2010) through a network parameter they called \emph{highway dimension}, which captures the size of a hitting set for a collection of shortest paths of length at least $r$ intersecting a given ball of radius $2r$. In this work, we revisit this explanation, introducing a more tractable (and directly comparable) parameter based solely on the structure of shortest-path spanning trees, which we call \emph{skeleton dimension}. We show that skeleton dimension admits an intuitive definition for both directed and undirected graphs, provides a way of computing labels more efficiently than by using highway dimension, and leads to comparable or stronger theoretical bounds on hub set size.
\end{abstract}

\bigskip\noindent
\textbf{Key Words:} Distance Labeling, Highway Dimension, Shortest Path Tree, Skeleton Dimension

\section{Introduction} 

The task of efficiently processing shortest path queries to a graph has been studied in a plethora of settings. One interesting observation is that for many real-world graphs of small degree in a geometric or geographical setting, such as road networks, it is possible to design compact data structures and schemes for efficiently answering shortest path queries. The general principle of operation of this approach consists in detecting and storing subsets of so-called transit nodes, which appear on shortest paths between many node pairs.

In an attempt to explain the efficiency of variants of the transit node routing (TNR) algorithm~\cite{tnr,tnr2}, Abraham et al.~\cite{hw10} introduced the concept of \emph{highway dimension} $h$. This parameter captures the intuition that when a map is partitioned into regions, all significantly long shortest paths out of each region can be hit by a small number of transit node vertices. The value of $h$ is presumed to be a small constant e.g.\ for road networks. However, the definition of highway dimension relies on the notion of a hitting set of shortest path sets within network neighborhoods, and hence, e.g., exact computation of the parameter is known to be NP-hard even for unweighted networks~\cite{hwhard}. This motivates us to look at other measures, which are both more locally defined and computationally tractable, while capturing essentially the same (or more) characteristics of the network's amenability to shortest path queries.

Looking more precisely at the TNR algorithm, one observes that it is built around the idea that for every source node, the set of transit nodes which are the first to be encountered when going a long way from the source, is small. This is a weaker assumption than the existence of a small hitting set for the set of shortest paths in a given network neighborhood, since different source nodes could use different transit nodes resulting in an overall large number of transit nodes around a given region. This source-centered approach leads us to the definition of \emph{skeleton dimension} $\sk$, to which we devote the remainder of this paper. Informally, the skeleton dimension is the maximum, taken over all nodes $u$ of the graph and all radii $r>0$, of the number of distinct nodes at distance $r$ from $u$ in the set of all shortest paths originating at $u$ and having length at least $3r/2$.\footnote{One may also define skeleton dimension with a different choice of constants, considering the set of shortest paths having length at least $\alpha r$, where $\alpha \in (1,2)$ is an absolute constant. The choice of $\alpha = 3/2$ is subsequently necessary only for establishing relations with highway dimension.} In transit node parlance, it states that the paths from $u$ that extend by $r/2$ at least outside the disk of radius $r$ pass through at most $\sk$ transit nodes at the disk border. This property ensures that each shortest-path spanning tree is built around a core skeleton with at most $\sk$ branches at a given distance range while the rest of the branches are relatively short. Bounding tree skeletons turns out to encompass a larger class of constant-degree graphs than the shortest path cover approach used in the definition of highway dimension, while still ensuring the existence of efficient labeling schemes.

Motivated by applications in distributed algorithms and distributed data representation, we will display the link between small skeleton dimension of a graph and efficient processing of shortest path queries using the framework of \emph{distance labeling}. Distance labeling schemes, popularized by Gavoille et al.~\cite{Gavoille:2004:DLG:1036161.1036165}, are among the most fundamental distributed data structures for graph data. Within distance labeling, we work with the most basic framework of transit-node based schemes, namely so-called \emph{hub labelings}, cf.~\cite{Abraham:2012:HHL:2404160.2404164} (this framework was first described in~\cite{Cohen:2003:RDQ:942270.944300} under the name of 2-hop covers, and is also referred to as landmark labelings~\cite{Abraham11onapproximate}). In this setting, each node $u\in U$ stores the set of its distances to some subset $S(u) \subseteq V$ of other nodes of the graph. Then, the computed distance value $d'(u,v)$ for a queried pair of nodes $u, v\in V$ is returned as:
\begin{equation}\label{eq:distance}
d'(u,v) := \min_{w \in S(u)\cap S(v)} \cost(u,w) + \cost(w,v),
\end{equation}
where $\cost$ denotes the shortest path distance function between a pair of nodes. The computed distance between all pairs of nodes $u$ and $v$ is exact if set $S(u)\cap S(v)$ contains at least one node on some shortest $u-v$ path. This property of the family of sets $(S(u) : u\in V)$ is known as \emph{shortest path cover}. The hub-based method of distance computation is in practice effective for two reasons. First of all, for transportation-type networks it is possible to show bounds on the sizes of sets $S$, which follow from the network structure. Notably, considering networks of bounded highway dimension $h$, Abraham et al.~\cite{hw10} show that an appropriate cover of all shortest paths in the graph can be achieved using sets $S$ of size $\widetilde O(h)$, where the $\widetilde O$-notation conceals logarithmic factors in the studied graph parameters.

Moreover, the order in which elements of sets $S(u)$ and $S(v)$ is browsed when performing the minimum operation is relevant, and in some schemes, the operation can be interrupted once it is certain that the minimum has been found, before probing all elements of the set. This is the principle of numerous heuristics for the exact shortest-path problem, such as contraction hierarchies and algorithms with arc flags~\cite{Kohler06fastpoint-to-point,Bauer:2010:SFR:1498698.1537599}.


\subsection{Results and Organization of the Paper}

In Section~\ref{sec:hd}, we formally define skeleton dimension $\sk$, and show that in the so-called continuous representation of the graph, the skeleton dimension is at most highway dimension, i.e. it satisfies the bound $\sk \leq h$. 
In all cases, $\sk = O(h)$ for graphs of bounded maximum degree. On the other hand, we show that skeleton dimension provides a better explanation for small hub set size in Manhattan-type networks than highway dimension. In particular, we provide a natural example of a weighted grid with $\sk = O(\log n)$ and $h = \Omega(\sqrt n)$.

In Section~\ref{sec:hub_labeling}, we show how to construct efficient hub labelings for networks of small skeleton dimension. The hub set sizes we obtain for a graph of weighted diameter $\diam$ are bounded by $O(\sk \log \diam )$ on average and $O(\sk \log \log \sk \log \diam )$ in the worst case (cf.~Corollaries~\ref{cor:klabelingaverage} and~\ref{cor:klabelingworstcase}, respectively), as compared to previous best bounds of $O(h \log h \log \diam )$ for labels computable in polynomial time based on highway dimension.

Our labeling technique, based on picking hubs through a random selection process on a subtree of the shortest-path tree, allows each node to compute its hub set independently in almost-linear time, and appears to be of independent interest. In particular, as an extension of our technique, we provide in Section~\ref{sec:dpreserving} improved bounds on label size (in general unweighted graphs) for the so-called \emph{$\DD$-preserving} distance labeling problem, in which the considered distance queries are restricted to nodes at distance at least $\DD$ from each other. The hub sets constructed using the hub-based method have average size $O(n/\DD)$. Their worst-case size is also bounded by $O(n/\DD)$ up to some threshold $\DD = \widetilde O(\sqrt n)$, and bounded by $O(\log \DD + (n / \DD) \log \log \DD)$ in general (Theorem~\ref{thm:dpreserving}). This improves upon previous $\DD$-preserving schemes, including the previously best result from~\cite{Sublinear}, where hub sets of worst-case size $O((n/\DD) \log \DD)$ are constructed by a more direct application of the probabilistic method to sets of randomly sampled vertices.

Finally, in Sections~\ref{sec:compute},~\ref{sec:generalizations},~and~\ref{sec:conclusion} we provide some concluding remarks on the computability of the proposed parameter of skeleton dimension, as well as its possible generalizations and applications.

\subsection{Other Related Work}

\paragraph{Distance Labelings.}
The distance labeling problem in undirected graphs was first investigated by Graham and Pollak~\cite{pollak}, who provided the first labeling scheme with labels of size $O(n)$. The decoding time for labels of size $O(n)$ was subsequently improved to $O(\log \log n)$ by Gavoille et al.~\cite{Gavoille:2004:DLG:1036161.1036165} and to $O(\log^* n)$ by Weimann and Peleg~\cite{WP11}. Finally, Alstrup et al.~\cite{DBLP:conf/soda/AlstrupGHP16} present a scheme for general graphs with decoding in $O(1)$ time using labels of size $\frac{\log 3}{2} n + o(n)$ bits. This matches up to low order terms the space of the currently best know distance oracle with $\bigo{1}$ time and $\frac{\log 3}{2} n^2 + o(n^2)$ total space in a \emph{centralized} memory model, due to Nitto and Venturini~\cite{NV08}. For specific classes of graphs, Gavoille et al.~\cite{Gavoille:2004:DLG:1036161.1036165} described a $O(\sqrt{n}\log n)$ distance labeling for planar graphs, together with $\Omega(n^{1/3})$ lower bound for the same class of graphs. Additionally, $O(\log^2 n)$ upper bound for trees and $\Omega(\sqrt{n})$ lower bound for sparse graphs were given.

\paragraph{Distance Labeling with Hub Sets.} For a given graph $G$, the computational task of minimizing the sizes of hub sets $(S(u) : u\in V)$ for exact distance decoding is relatively well understood. A $O(\log n)$-approximation algorithm for minimizing the average size of a hub set having the sought shortest path cover property was presented in Cohen et al.~\cite{Cohen:2003:RDQ:942270.944300}, whereas a $O(\log n)$-approximation for minimizing the largest hub set at a node was given more recently in Babenko et al.~\cite{DBLP:conf/icalp/BabenkoGGN13}. Rather surprisingly, the structural question of obtaining bounds on the size of such hub sets for specific graph classes, such as graphs of bounded degree or unweighted planar graphs, is wide open.

\paragraph{$\DD$-preserving Labeling.} The notion of $\DD$-preserving distance labeling, first introduced by Bollob\'as et al.~\cite{BCE05}, describes a labeling scheme correctly encoding every distance that is at least $\DD$. \cite{BCE05} presents such a $\DD$-preserving scheme of size $O(\frac{n}{\DD} \log^2 n)$. This was recently improved by Alstrup et al.~\cite{Sublinear} to a $\DD$-preserving scheme of size $O(\frac{n}{\DD}\log^2 \DD)$. Together with an observation that all distances smaller than $\DD$ can be stored directly, this results in a labeling scheme of size $O(\frac{n}{\symb}\log^2 \symb)$, where $\symb = \frac{\log n}{\log \frac{m+n}{n}}$. For sparse graphs, this is $o(n)$.

\paragraph{Road networks.}
Highway dimension $h$ guarantees the existence of distance labels of size $O(h\log D)$ where $D$ is the weighted diameter of the graph~\cite{hw10}. However, when restricting to polynomial time algorithms, such labels can only be approximated within a $\log n$ factor using shortest path cover algorithms~\cite{hw10} or a $\log h$ factor with a more involved procedure based on VC-dimension~\cite{hwVC}. In any case, this requires an all-pair shortest path computation. For large networks, labels can be practically computed when classical heuristics such as contraction hierarchies (CH) can be performed~\cite{hw10,hublab11,hublab12}. Low highway dimension guarantees that there exists an elimination ordering for CH such that the graph produced has bounded size~\cite{hw10}. However, it does not ensure running time faster than all pair shortest path computation.

Besides highway dimension, skeleton dimension is also related to the notion of \emph{reach} introduced in~\cite{reach} and also used in the RE algorithm~\cite{RE}. The reach of a node $v$ on a path $P$ is the minimum distance to an extremity of $P$, and the reach of $v$ is its maximum reach over all shortest paths $P$ containing $v$. Efficient algorithms are obtained by pruning nodes with small reach during Dijkstra search. Similarly, we obtain the skeleton of a tree by pruning nodes whose reach (in the tree) is less than half of their distance to the root.

\subsection{Notation and Parameters} \label{sec:definitions}

We consider a connected undirected graph $G$ and a non-negative length function
$\ell:E(G) \rightarrow {\R^+}$. Let $n$ denote the number of nodes
in $V(G)$. We let $\ell(P)$ denote the length of a path $P$ under the given length function. Given two nodes $u$ and $v$, we assume that there is a unique
shortest path $P_{uv}$ between them. This common assumption can
be made without loss of generality, as one can perturb the input to ensure
uniqueness. Given two nodes $u$ and $v$, their distance is
$d_G(u,v)=\ell(P_{uv})$. Let $\diam=\max_{u,v}d_G(u,v)$ denote the diameter of $G$.
For $u\in V(G)$ and $r>0$, the ball $B_G(u,r)$ of radius $r$ centered at $u$ is the
set of nodes $v$ with $d_G(u,v)\le r$.
In this paper, we assume that $\ell$ is non-negative and integral.
The notions presented here easily extend to non-negative real lengths, but we use integer lengths for a cleaner exposition of algorithms and theorems.

We also recall two structural parameters, which have application to networks in a geometric setting or low-dimensional topological embedding:  \emph{highway dimension} and \emph{doubling dimension}.

For $r>0$, let $P_G(r)$ denote the collection
of all shortest paths $P$ with $\frac{r}{2} < \ell(P) \le r$ in $G$. For $u\in V(G)$,
we consider the collection
$P_G(u,r)=\set{P\in P_r(G)\mid P\cap B_G(u,r)\not=\emptyset}$ of shortest
paths around $u$.
A hitting set for $P_G(u,r)$ is a set $H$ of nodes
such that any path in $P_G(u,r)$ contains a node in $H$. In \cite{hwVC}, the
\emph{highway dimension} of $G$ is defined as the smallest $h$ such that
$P_G(u,r)$ has a hitting set of size at most $h$ for all $u,r$.
(This definition is slightly less restrictive than that of
\cite{hw10} while allowing to prove similar results with improved bounds.)

The notion of highway dimension is related
to that of doubling dimension. Recall that a graph is $h$-doubling if
any ball can be covered by at most $h$ balls of half the radius.
That is, for all $u,r$, there exists $H$ with $\card{H}\le h$
such that $B_G(u,r)\subseteq \cup_{v\in H}B_G(v,\frac{r}{2})$.
It is shown in \cite{hw10} that if the geometric realization of a graph $G$
has highway dimension $h$, then $G$ is $h$-doubling. Informally, the \emph{geometric realization} $\G$  can be seen as the ``continuous'' graph where each edge
is seen as infinitely many vertices of degree two with infinitely small edges, such that for any $uv\in E(G)$ and $t\in [0,1]$, there
is a node in $\G$ at distance $t\ell(u,v)$ from $u$ on edge $uv$.
(The proof in \cite{hw10} consists in proving that any node in $B_G(u,r)$
is at distance at most $\frac{r}{2}$ from any hitting set of $P_\G(u,r)$ in $\G$ and holds also for the highway dimension definition of
\cite{hwVC}.)

\section{A Presentation of Skeleton Dimension}\label{sec:hd}

We start by providing a standalone definition of skeleton dimension based on size of cuts in shortest path trees, and then show its relation to the previously considered parameters of highway and doubling dimension.

\subsection{Definition of the Parameter}
\label{sec:skeldef}

\paragraph{Tree skeleton.}
Given a tree $T$ rooted at node $u$ with length function
$\ell:E(T)\rightarrow{\mathbb R^+}$, we treat it as directed
from root to leaves and consider the geometric realization $\T$
of this directed graph.
We define the \emph{reach} of $v\in V(\T)$ as
$\reach_{\T}(v) := \max_{x\in V(\T)}d_{\T}(v,x)$.
We then define the \emph{skeleton} $T^*$ of $T$ as the subtree
of $\T$ induced by nodes with reach at least half their distance from the root. More precisely, $T^*$ is the subtree of $\T$ induced by
$\{v\in V(\T)\mid \reach_{\T}(v) \geq \frac{1}{2} d_{\T}(u,v)\}$.

\paragraph{Width of a tree.}
The \emph{width} of a tree $T$ with root $u$
is defined as the maximum number of nodes (points)
in $\T$ at a given distance from its root. More precisely, the width of $T$
is $\width(T)=\max_{r>0}|\cut_r(\T)|$ where $\cut_r(\T)$ is the set of nodes
$v\in V(\T)$ with $d_{\T}(u,v) = r$.

\paragraph{Skeleton dimension.}
The \emph{skeleton dimension} $\sk$ of a graph $G$ is
defined as the maximum width of the
skeleton of a shortest path tree, that is $\sk=\max_{u\in V(G)}\width(T_u^*)$,
where $T_u$ denotes the shortest path tree of $u$ obtained as the
union of shortest paths from $u$ to all $v\in V(G)$.

We remark that, under the assumption of scale-invariance of the graph, different cuts of the tree skeleton have similar width, and the definition of the skeleton dimension is a meaningful measure of the structure of the tree. A smoothed (integrated) variant of skeleton dimension is also discussed further on, cf.~Eq.~\eqref{eq:isk_def}.


\subsection{Skeleton Dimension is at most (Geometric) Highway Dimension}

\begin{claim}\label{claim:skhd}
If the geometric realization $\G$ of a graph $G$ has highway dimension
$\tilde{h}$, then $G$ has skeleton dimension $k\leq \tilde{h}$.
\end{claim}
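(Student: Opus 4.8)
The plan is to show that for any node $u \in V(G)$ and any radius $r > 0$, the cut $\cut_r(\widetilde{T_u^*})$ of the skeleton at distance $r$ from $u$ has size at most $\tilde{h}$, by exhibiting a hitting set argument that uses the highway dimension of $\G$.

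Key steps:
1. Fix $u$ and $r$. Consider the nodes $v$ in the skeleton $T_u^*$ at distance exactly $r$ from $u$. By definition of the skeleton, each such $v$ has $\reach_{\widetilde{T_u}}(v) \geq \frac{1}{2} d_{\widetilde{T_u}}(u,v) = \frac{r}{2}$, meaning there is a descendant $x_v$ in $\widetilde{T_u}$ with $d_{\widetilde{T_u}}(v, x_v) \geq \frac{r}{2}$.
2. For each such $v$, the path from $u$ to $x_v$ in $\widetilde{T_u}$ is a shortest path in $\G$ (since $T_u$ is a shortest path tree and we pass to the geometric realization). Its length is $d(u, v) + d(v, x_v) \geq r + \frac{r}{2} = \frac{3r}{2}$.
3. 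I would like to invoke highway dimension on a ball of radius $r$ — say $B_\G(u, r)$. The collection $P_\G(u, r)$ consists of shortest paths $P$ with $\frac{r}{2} < \ell(P) \le r$ that intersect $B_\G(u,r)$. Here is the issue: the full path $u \to x_v$ has length $\geq \frac{3r}{2}$, which is too long to be in $P_\G(u,r)$. So I would instead take a *subpath* of $u \to x_v$ of length exactly, say, $r$ (truncating appropriately): specifically, take the subpath $Q_v$ from $u$ to the point at distance $r$ along this path — but that point is just $v$ itself, giving a degenerate situation.

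Let me reconsider — the natural fix is to use a *different* sub-path not starting at $u$. For each $v$ in the cut, consider the subpath $Q_v$ of $u \to x_v$ that starts at the point at distance $\frac{r}{2}$ from $u$ (call it $y_v$) and ends at $x_v$. Then $\ell(Q_v) = d(u, x_v) - \frac{r}{2} \in [\frac{r}{2}, \ldots]$; to land in $P_\G(\cdot, \cdot)$ I should truncate $Q_v$ to have length in $(\frac{r}{2}, r]$ — concretely, take the length-$r$ prefix of $Q_v$ starting at $y_v$ (valid since $Q_v$ has length $\geq r$). This prefix $Q_v$ is a shortest subpath, contains $v$ (which is at distance $\frac{r}{2}$ along $Q_v$), and intersects the ball $B_\G(y, r)$... but now the ball is centered at $y_v$, which differs for each $v$.

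The right move: all the points $y_v$ lie in $B_\G(u, r)$, and all paths $Q_v$ have length in $(\frac{r}{2}, r]$ and intersect $B_\G(u, \frac{r}{2}) \subseteq B_\G(u, r)$. Hence $\{Q_v : v \in \cut_r\} \subseteq P_\G(u, r)$ (after truncating each $Q_v$ to length at most $r$ while keeping length $> \frac{r}{2}$ and keeping the point at distance $\frac{r}{2}$ from $u$, which is $v$). So a hitting set $H$ for $P_\G(u, r)$, of size $\le \tilde{h}$, hits every $Q_v$. Finally I must argue the map $v \maps$ (a hit point in $H \cap Q_v$) can be chosen injectively, or rather that $|\cut_r| \le |H|$: since the paths $Q_v$ for distinct $v$ in the cut are internally disjoint except at $u$'s side (they share the segment from $u$ to their common ancestors in the tree, but beyond $v$ they branch into disjoint subtrees), and since a hit must occur at or beyond $v$ on $Q_v$...

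The main obstacle: making the injectivity/disjointness argument airtight. I would need that the portion of $Q_v$ that a hitting-set node can occupy forces distinct nodes of $H$ for distinct $v$. The cleanest route is: truncate $Q_v$ so that it *starts* at $v$ (length from $v$ to $x_v$ is $\geq \frac{r}{2}$) — then take the prefix of length in $(\frac{r}{2}, r]$. These truncated paths start at the distinct cut-points $v$, which lie in pairwise-disjoint subtrees of $\widetilde{T_u}$ hanging below the cut, hence the paths $Q_v$ are pairwise disjoint; $B_\G(u,r)$ is intersected by each since $v \in B_\G(u,r)$... wait, $d(u,v) = r$ so $v$ is on the boundary, which is fine for "intersecting the ball" under the $\le r$ convention. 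A hitting set $H$ for $P_\G(u,r)$ then contains a distinct node for each disjoint $Q_v$, yielding $|\cut_r(\widetilde{T_u^*})| \le |H| \le \tilde{h}$. Taking the max over $r$ and $u$ gives $k \le \tilde{h}$. I expect the disjointness and the boundary-case bookkeeping ($\frac{r}{2} < \ell \le r$ strictness, geometric realization subtleties) to be where the real care is needed.
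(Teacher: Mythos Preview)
Your final approach---take, for each cut point $v$ at distance $r$, the downward path from $v$ into its subtree, truncate to length in $(r/2,r]$, observe these paths are pairwise node-disjoint and lie in $P_\G(u,r)$, hence a hitting set of size $\tilde h$ bounds $|\cut_r|$---is exactly the paper's argument.

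The one issue you correctly flag but do not resolve is the strictness $\ell > r/2$: a point $v$ at distance exactly $r$ in the skeleton may have $\reach(v)=r/2$ on the nose (this happens precisely at leaves of $T_u^*$), and then no subpath of length strictly greater than $r/2$ exists below $v$. The paper's fix is a small $\eps$-shift: for sufficiently small $\eps>0$ the cuts $\cut_r(T_u^*)$ and $\cut_{r-\eps}(T_u^*)$ have the same cardinality, and each $v\in\cut_{r-\eps}(T_u^*)$ has a descendant $w\in\cut_r(T_u^*)$ with $\reach(w)\ge r/2$, so $\reach(v)\ge r/2+\eps>r/2$. One then runs your disjoint-paths argument from these $v$'s instead. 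With that adjustment your proof is complete and matches the paper's.
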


\begin{proof}
Consider a node $u$ and the skeleton $T_u^*$ of its shortest path tree $T_u$.
For $r>0$, consider the cut $\cut_r(T_u^*)$.
For $\eps>0$ sufficiently small, $\cut_r(T_u^*)$ and
$\cut_{r-\eps}(T_u^*)$ have same size. Now, for
$v\in\cut_{r-\eps}(T_u^*)$, consider a node $x$ in $T_u$
such that $d_{T_u}(v,x)=\reach_{T_u}(v)$.
The shortest path $P_{vx}$ intersects $B_G(u,r)$ and has length
$\ell(P_{vx}) = \reach_{T_u}(v) \ge r/2+\eps > r/2$.
$P_{vw}$ is thus in $P_\G(u,r)$.
For each node in $\cut_{r-\eps}(T_u^*)$, we get a similar path in $P_\G(u,r)$.
All these paths are pairwise node-disjoint as they belong to disjoint sub-branches
of $T_u$. Their number is thus upper-bounded by the size of any hitting set
of $P_\G(u,r)$. We then get $\card{\cut_r(T_u^*)}\le \tilde{h}$ for all $u,r$
and the skeleton dimension of $G$ is at most $\tilde{h}$.
\end{proof}

Note that a (discrete) graph $G$ has highway dimension $h\le\tilde{h}$,
where $\tilde{h}$ is the highway dimension of its
geometric realization $\G$.
In road networks it is expected that the continuous and the discrete versions of highway dimension coincide almost exactly, in particular due to the constant maximum degree and bounded length of edges in these graphs. In a more general setting, one can easily show $\tilde{h}\le (\Delta+1) h$ where $\Delta$ is the maximum degree of $G$, with a star being a worst-case example. (Indeed, a hitting set $H$ of $P_G(u,r)$ may miss some shortest path
$P\in P_\G(u,r)$. Making $P$ longer to have extremities in $V(G)$
transforms it into a path of $P_G(u,r)$ that is hit by $H$.
It is thus possible to hit all $P_\G(u,r)$ by adding at most one node per edge adjacent to a node in $H$.)

We remark that the extended tech-report version~\cite{MS-TR} of~\cite{hw10} introduces a modified notion of highway dimension, in a way more closely related to its geometric variant, which we can denote here as $h^*$. For this modified parameter, we have: $k \leq h^* \leq \tilde{h} \leq 2h^*$, where the first inequality follows from an analysis similar to the proof of Claim~\ref{claim:skhd}, while the latter two are shown in~\cite{MS-TR}[Section 11].


\subsection{Low Skeleton Dimension Implies Low Doubling Dimension}

It is known~\cite{hw10} that a graph having a geometric realization with highway dimension $\tilde h$ is at most $\tilde h$-doubling. However, the relation $\sk \leq \tilde h$ need not be tight, and it turns out that the link between skeleton dimension and doubling dimension holds in a slightly weaker form.

\begin{proposition}\label{pro:doubling}
If a graph $G$ has skeleton dimension $\sk$, then $G$ is $(2\sk+1)$-doubling.
\end{proposition}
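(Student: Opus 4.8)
The plan is to show that any ball $B_G(u,r)$ can be covered by at most $2\sk+1$ balls of radius $r/2$, by exploiting the structure of the skeleton $T_u^*$ of the shortest-path tree rooted at $u$. First I would observe that every node $v\in B_G(u,r)$ lies on the unique shortest path $P_{uv}$ in $T_u$; the goal is to route $v$ to a nearby center. The natural centers are $u$ itself (covering everything within $r/2$ of $u$) together with the points where the branches of the skeleton cross the cut at distance $r/2$ from $u$. So I would set the center set to be $\{u\}\cup\cut_{r/2}(T_u^*)$, and the main work is to bound $|\cut_{r/2}(T_u^*)|$ by $2\sk$ and to verify the covering property.

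For the covering property, fix $v\in B_G(u,r)$ with $d_G(u,v)>r/2$ (otherwise $v\in B_G(u,r/2)$ and we are done). Walk along $P_{uv}$ from $u$; in the geometric realization $\T_u$ there is a point $w$ on this path with $d_{\T_u}(u,w)=r/2$. The key claim is that $w\in V(T_u^*)$, i.e.\ $w$ survives in the skeleton: indeed $\reach_{T_u}(w)\ge d_{T_u}(w,v)=d_G(u,v)-r/2 > 0$, and since $d_G(u,v)\le r$ we get $d_{T_u}(w,v)\ge \frac{1}{2}d_{\T_u}(u,w)$ exactly when $d_G(u,v)\ge r/2 + r/4$; this only handles $v$ far enough out. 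The cleaner route is to instead take $w$ to be the point on $P_{uv}$ at distance $d_G(u,v)/2$ (rather than $r/2$) from $u$: then $\reach_{T_u}(w)\ge d_{T_u}(w,v)=d_G(u,v)/2 = \frac12 d_{\T_u}(u,w)$, so $w$ is genuinely in the skeleton, and $w$ lies on the skeleton branch through $v$ at some cut $\cut_{s}(T_u^*)$ with $s=d_G(u,v)/2 \le r/2$. Now I would argue that the point $w'$ where this same branch of $T_u^*$ meets the cut at exactly $r/2$ is a valid center close to $v$: since both $w$ (at depth $d_G(u,v)/2$) and $v$ (at depth $d_G(u,v)\le r$) lie on a common root-to-leaf ray of $T_u$ that reaches depth $r/2$, and $d_G(u,v)\le r$, we have $d_G(v,w') = |d_G(u,v) - r/2| \le r/2$, so $v\in B_G(w',r/2)$. (If the branch containing $v$ does not reach depth $r/2$ in $T_u^*$, then $\reach_{T_u}$ of its last skeleton point is too small — but one checks that in that case $d_G(u,v)$ itself is below $r/2$, contradiction, so this case does not arise.)

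Finally, for the size bound: $\cut_{r/2}(T_u^*)$ has at most $\width(T_u^*)\le\sk$ points by definition of skeleton dimension, so naively $\{u\}\cup\cut_{r/2}(T_u^*)$ has size at most $\sk+1$, which is even better than claimed — so the slack of $2\sk+1$ must come from the fact that different points $v$ at the same skeleton depth need not be covered by the cut point on \emph{their} branch if that branch dies before depth $r/2$; to handle those one also takes the cut at a slightly smaller radius, say using both $\cut_{r/2}(T_u^*)$ and the set of ``dead ends'' of branches that terminate between depths $r/2$ and $r$, and a counting argument (two events per branch) gives $2\sk$. The main obstacle I anticipate is precisely this bookkeeping: carefully defining which point serves as the center for a node $v$ whose skeleton branch is short, and making the case analysis on whether $d_G(u,v)$ and $\reach_{T_u}$ values force membership in the skeleton clean enough that the $2\sk$ count (rather than $\sk$ or $3\sk$) falls out exactly. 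Everything else — the triangle-inequality estimate $d_G(v,w')\le r/2$ and the invocation of $\width(T_u^*)\le\sk$ — is routine.
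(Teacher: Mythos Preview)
There is a genuine gap. Your parenthetical claim—that if the skeleton branch along $P_{uv}$ fails to reach depth $r/2$ then $d_G(u,v)<r/2$—is false. Concretely, let $G$ be a tree in which $u$ has two branches: a long path $A$ and a short path $B$ of length $0.6r$ ending at $v$. On branch $B$ the point at depth $s$ has reach $0.6r-s$, so the skeleton along $B$ dies at $s=0.4r<r/2$; meanwhile $\cut_{r/2}(T_u^*)$ contains only the single point on $A$, which is at distance $1.1r$ from $v$. Thus $\{u\}\cup\cut_{r/2}(T_u^*)$ does not cover $v$, even though $d_G(u,v)=0.6r>r/2$. Your proposed patch via ``dead ends of branches that terminate between depths $r/2$ and $r$'' does not help here either: the relevant skeleton leaf sits at depth $0.4r$, below $r/2$; and in general the number of skeleton leaves in a depth interval is not controlled by the width (a comb-shaped skeleton can have width $2$ and arbitrarily many leaves), so no ``two events per branch'' count yields $2\sk$.

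The paper's argument avoids this by taking cuts at \emph{smaller} radii. The clean observation you are missing is that the point at depth $r'$ on $P_{uv}$ is \emph{guaranteed} to lie in $T_u^*$ whenever $d_G(u,v)>\tfrac{3}{2}r'$; hence the (at most $\sk$) centers coming from $\cut_{r'}(T_u^*)$ cover precisely the annulus $\bigl(\tfrac{3}{2}r',\,r'+\tfrac{r}{2}\bigr]$ with balls of radius $r/2$. Since such an annulus has width $\tfrac{r}{2}-\tfrac{r'}{2}<\tfrac{r}{2}$, a single cut can never span all of $(r/2,r]$; two are genuinely needed, and this—not bookkeeping slack—is the source of the factor $2$ in $2\sk+1$. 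In your scaling, $r'=r/3$ covers $(r/2,5r/6]$ and $r'=5r/9$ covers $(5r/6,19r/18]\supseteq(5r/6,r]$; together with $\{u\}$ this gives at most $2\sk+1$ centers.
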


\begin{proof}
We show the stronger requirement that each ball of radius $19r/9$ can be
covered by $2\sk+1$ balls of radius $r$.
For $u\in V(G)$, consider the shortest path
tree $T_u$ of $u$. For $r'>0$, consider the set $C_{r'}$ of the edges
containing a node in $\cut_{r'}(T_u^*)$ and let
$I_{r'} = \set{w\mid vw\in C_{r'} \mbox{ and } d_G(u,w) \ge r'}$ be
the (at most $\sk$) far extremities of edges cutting distance $r'$
in the skeleton of $T_u$.
Each node $v$ at distance greater than $\frac{3}{2} r'$ from $u$ is
descendant in $T_u$ of a node $x\in I_{r'}$ by skeleton definition
and is thus in $B_G(x,r)$ if $d_G(u,v)\le r'+r$. Considering $r'=2r/3$,
we obtain that any node $v$ with $r < d_G(u,v) \le \frac{5}{3} r$ is at distance at most
$r$ from a node in $I_{2r/3}$. Similarly any node $v$ with
$\frac{5}{3} r < d_G(u,v) \le \frac{19}{9} r$ is at distance at most $r$
from a node in $I_{10r/9}$. The ball $B_G(u,19r/9)$ is thus covered by balls
of radius $r$ centered at the at most $2\sk+1$ nodes in
$\set{u}\cup I_{2r/3} \cup I_{10r/9}$.
\end{proof}

\subsection{Separating Skeleton Dimension and Highway Dimension}

We now provide a family of graphs which exhibit an exponential gap between skeleton and
highway dimensions, in a setting directly inspired by Manhattan-type road networks. The idea is to consider the usual square grid and define edge lengths, which give priority to certain transit ``arteries''. In our example, paths using edges whose coordinates are multiples of high powers of 2 have slightly lower transit times.

For $L>0$, let $G_L$ denote the $2^L\times 2^L$ grid
with length function $\ell$ defined as follows. We identify a node with its
coordinates $(x,y)$ with $1\le x\le 2^L$ and $1\le y \le 2^L$.
We consider small length perturbations $p_{xy}$ for every horizontal edge
$\{(x,y),(x+1,y)\}$ and $q_{xy}$ for every vertical edge $\{(x,y),(x,y+1)\}$,
and define $Q=1+\max_{x,y}\max\{p_{xy},q_{xy}\}$. These non-negative
integers will be chosen to ensure uniqueness of shortest paths.
For $x=2^ix'$ with $0\le i \le L$ and $x'$ odd,
we define $\ell((x,y), (x+1,y))=Q((D+2)L-i)-q_{xy}$ for all $y$
where $D=2^{L+3}$.
For $y=2^jy'$ with $0\le j\le L$ and $y'$ odd,
we define $\ell((x,y), (x,y+1))=Q((D+2)L-j)-p_{xy}$ for all $x$.
A possible choice for perturbations ensuring uniqueness of shortest paths is
$p_{xy}=0$ and $q_{xy}=x$ for all $x,y$ as will be clear later on.

\begin{proposition}
For any $L>0$, grid $G_L$ has highway dimension $\Omega(\sqrt{n})$
and skeleton dimension $O(\log n)$, where $n=2^{2L}$ is the number of nodes
in $G_L$.
\end{proposition}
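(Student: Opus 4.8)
The plan is to establish the two bounds separately, since they are essentially independent.

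\textbf{Lower bound on highway dimension.} First I would isolate a single long ``artery'' and show it forces a large hitting set. Consider the middle horizontal line $y = 2^L$ (so $j = L$ and its vertical-edge lengths use $i$ up to $L$; actually take the horizontal line whose $y$-coordinate has the highest $2$-adic valuation, so that every edge along it is ``cheap''). Along this line the horizontal edges all have length $Q((D+2)L - 0) = Q(D+2)L$ minus perturbations, since $x$ ranges over all residues but the relevant cost is governed by the valuation of $x$ which varies; the key point is that traversing this line is cheaper per step than detouring through any vertical edge, because $D = 2^{L+3}$ is chosen huge compared to $L$. So the horizontal segment of this line is the unique shortest path between its endpoints, and more generally between many pairs of nodes on it. Now pick $r$ to be about a constant fraction of this line's length: the sub-paths of the line of length in $(r/2, r]$ that all pass through a fixed central ball $B_G(u, r)$ form a family of $\Omega(2^L) = \Omega(\sqrt n)$ shortest paths, and — this is the crux — they are pairwise ``spread out'' along the line so that any single vertex hits only $O(1)$ of them. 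Hence the hitting set must have size $\Omega(\sqrt n)$. I would need to verify carefully that each such sub-path is genuinely a shortest path (not just within the line but in the whole grid), which is where the precise choice $D = 2^{L+3}$ and the perturbations $p_{xy} = 0$, $q_{xy} = x$ enter.

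\textbf{Upper bound on skeleton dimension.} For this I would fix a source $u$ and understand the structure of its shortest-path tree $T_u$. The intuition the paper already gave is that the skeleton keeps only branches whose reach is at least half their distance from the root, i.e.\ the ``long'' branches. In this grid, a shortest path from $u$ quickly climbs onto a high-valuation artery (a row or column whose coordinate is a large power of $2$) and then travels along arteries, switching to ever-higher-valuation arteries as it goes further, because the cost function rewards high valuation. The number of arteries of valuation $\geq i$ within any bounded region is $O(2^{L-i})$, and a branch of the skeleton that survives to distance $r$ from $u$ must be carried by an artery whose valuation is large enough that the branch can extend by another $r/2$ — this forces the valuation up to roughly $\log$ of the relevant scale, leaving only $O(L) = O(\log n)$ surviving branches in any cut $\cut_r(T_u^*)$. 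So the plan is: (i) show a shortest path from $u$ has a canonical ``staircase'' form, alternating horizontal and vertical runs along arteries of non-decreasing valuation; (ii) bound, for each distance $r$, how many distinct arteries can carry a skeleton branch at that distance, getting $O(L)$; (iii) conclude $\width(T_u^*) = O(L) = O(\log n)$.

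\textbf{Main obstacle.} I expect the hard part to be step (i)–(ii) of the skeleton bound: precisely characterizing shortest paths in $G_L$ and proving that skeleton branches at a given distance are confined to $O(L)$ arteries. One must argue that a path cannot profitably use many low-valuation arteries over a long stretch, which requires comparing the accumulated savings $Q \cdot i$ per edge of valuation $i$ against the detour costs, and controlling how these interact over paths of length up to the diameter $\Theta(2^L \cdot D L) = \Theta(L 2^{2L})$. The perturbation bookkeeping (ensuring $Q$ is large enough that perturbations never change which path is shortest, only break ties) is routine but must be stated cleanly. The highway-dimension lower bound, by contrast, is more robust: once a single cheap line is identified and shown to host genuinely shortest sub-paths, the pairwise-disjointness / $O(1)$-hitting argument is straightforward.
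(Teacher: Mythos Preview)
Your highway-dimension argument has a real gap. On a \emph{single} line, sub-paths of length in $(r/2,r]$ that all intersect a fixed ball $B_G(u,r)$ are not ``spread out'': if $u$ lies on the line, every such sub-path can be chosen to contain $u$, so one vertex hits them all. More quantitatively, a vertex $v$ on the line lies in $\Theta(r)$ sub-paths of each fixed length $\approx r$, so the $O(1)$-hitting claim fails unless $r=O(1)$, in which case only $O(1)$ sub-paths meet the ball. The paper's argument is different and avoids this entirely: it first shows (using the choice $D=2^{L+3}$) that every $G_L$-shortest path is also a shortest path of the unit-weight grid $U_L$, hence every full horizontal row is a shortest path; then it takes the $\sqrt n$ pairwise-disjoint rows, all of length $>r/2$ and all meeting a large ball centered at $(1,1)$. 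Disjointness is immediate, and the lower bound follows in one line. You also misread the weights: horizontal edges on the line $y=2^L$ do not all have the same cost (their cost depends on the valuation of $x$, not $y$), so there is no single ``cheap line'' in the sense you describe.

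For the skeleton bound, your high-level picture (paths climb to high-valuation arteries) is in the right spirit, but the paper does not prove a canonical staircase form. Its key structural lemma is sharper and easier to use: a shortest path originating outside a $2^i\times 2^j$ axis-aligned rectangle cannot pass through the \emph{interior} of that rectangle --- it must enter via a corner and then stay on the border. From this it follows that a node at distance $\approx r\sim 2^i$ from $u$ with reach $>r/2$ cannot be interior to a $2^{i-2}\times 2^{i-2}$ rectangle, and a short counting of border positions gives the $O(L)$ bound per quadrant. Both halves of the paper's proof rest on the preliminary fact that $G_L$-shortest paths coincide with $U_L$-shortest paths; this is what lets you translate between weighted distance and hop count, and it is missing from your outline. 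Finally, note your difficulty assessment is inverted: the highway lower bound is where your plan actually breaks, not the skeleton bound.
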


\begin{proof}
We first prove that the shortest paths of $G_L$ are also shortest path of the
$2^L\times 2^L$ grid $U_L$ with unit edge lengths, that is those paths that use
a minimum number of edges.
Any path $P$ with $p$ edges has length at most $pQ(D+2)L$
and at least $pQ((D+2)L - L - 1)\ge pQDL$.
Given two nodes $u$ and $v$, let $p$ denote the minimum number of edges of
a path from $u$ to $v$ and let
$q$ denote the number of edges of the shortest path $P_{uv}$ from $u$ to
$v$ in $G_L$. We then have $p\le q \le \frac{p(D+2)L}{DL} \le p+\frac{1}{2}$ since
$p\le 2^{L+1}=D/4$. This
implies $q=p$. $P_{uv}$ is thus a shortest path of the grid $U_L$.
Note that balls must then also be almost identical: for all $u,p$, we have
$B_{G_L}(u,pQ(D+2)L)=B_{U_L}(u,p)$.

\smallskip

This implies that the highway dimension of $G_L$ is
$\Omega(\sqrt{n})$ since the
ball of radius $r=Q(D+2)L\sqrt{n}$ centered at $(1,1)$ intersects at least
$\sqrt{n}$ horizontal shortest paths of length $QDL\sqrt{n} > r/2$ at least.

\smallskip

We define the $2^i\times 2^j$ rectangle $R$ at $(x,y)$ with odd x and y
as the set of nodes
with coordinates $(x',y')$ such that $2^ix\le x'\le 2^i(x+1)$
and $2^jy\le y'\le 2^j(y+1)$. Its border is the set of nodes for
which one inequality at least is indeed an equality. Other nodes are said
to be interior.  The main argument for bounding skeleton dimension
is that a shortest path from $u=(x',y')$ with $x'<2^ix$ and $y'<2^jy$
cannot traverse the interior of $R$: a shortest path passing through
an inner node of $R$ necessarily ends inside $R$. The reason
is that such a path necessarily passes through the lower left corner
at $(2^ix,2^jy)$. It is then
shorter to reach a border node by following the border rather
than using edges inside the rectangle.
Note that two possible choices of shortest paths could be possible
when going from a corner of a $2^i\times 2^i$ rectangle to the corner diagonally
opposed. However the choice of perturbing lengths by decreasing the length
of any vertical edge in position $(x,y)$ by $q_{xy}=x$ ensures that
the path through the rightmost side is preferred.

Now consider a node $u=(x,y)$ and a radius $r$ with
$2^iQDL\le r < 2^{i+1}QDL$.  Set $p=\floor{\frac{r}{QDL}}$.
According to the first part of the proof, the ball $B_{G_L}(u,r)$ is
in sandwich between the balls of radius $p$ and $p+1$ in $U_L$:
$B_{U_L}(u,p)\subseteq B_{G_L}(u,r)\subseteq B_{U_L}(u,p+1)$.
We first consider the upper
right quadrant of $B_{U_L}(u,p)$ and the border set $S_{UR}$
of nodes $v=(x+a,y+b)$ with
$a,b\ge 0$ and $a+b=p$.  We now bound the number of nodes $v\in S_{UR}$ such that
$\reach_{T_u}(v)>\frac{r}{2}$ where $T_u$ denote the shortest path tree of $u$
in $G_L$. As $\frac{r}{2}\ge 2^{i-1}QDL$, such a node $v$ cannot be interior to
a $2^{i-2}\times 2^{i-2}$ rectangle as shortest paths interior to the rectangle
have length at most $(2^{i-1}-4)Q(D+2)L$.  The number of nodes
in $S_{UR}$ whose $x$
coordinate is a multiple of $2^{i-2}$ is bounded by $8$ as $p<2^{i+1}$.
Similarly, the number of nodes whose $y$ coordinate is a multiple of $2^{i-2}$
is also bounded by $8$. Apart from these $16$ nodes, we have to consider nodes
$v=(x',y')$ where $x'$ (resp. $y'$) is less than the smallest multiple of
$2^{i-2}$ greater than $x$ (resp. $y$). Such a node cannot be interior to
a $2^j\times 2^{i-2}$ (resp. $2^{i-2}\times 2^j$) rectangle for $j\le i-3$.
For $j=i-3$, we obtain at most two nodes whose $x$ coordinate is a multiple
of $2^j$. By repeating the argument for $j=i-4..1$, we can finally bound
the number of nodes $v\in S_{UR}$ with reach greater than $r/ 2$ by
$4(i-3)+16=4(i+1)$. Nodes in the upper right quadrant that are
at distance $r$ from $u$ in $\wtilde{T_u}$
must be on edges outgoing from nodes in $S_{UR}$ and $\cut_r(T_u^*)$
has thus size at most $8(i+1)=O(\log n)$.
By symmetry, the bound holds for other quadrants
and the skeleton dimension of $G_L$ is $O(\log n)$.
\end{proof}

We remark that there exist different lengths functions on the grid for which the skeleton dimension is also as large as $\Theta(\sqrt n)$. This is the case, for example, for a grid with unit lengths of all edges except for edges intersecting its major diagonal, which is configured to be a fast transit artery (it suffices to set $\ell((x,y),(x+1,y)) = 0.5$ for all $x=y$).

\medskip

We complement this result with experimental observation in real grid like networks such as encountered in Brooklyn. We computed the skeleton dimension of the New York travel-time graph proposed in the 9th DIMACS challenge~\cite{dimacs9th} which turns out to be $k=73$. (Average skeleton tree width is 30, but a maximum width of 73 is encountered for a skeleton tree rooted in Manhattan.)  In order to estimate the highway dimension of this graph, we have implemented a heuristic for finding a large packing of paths near a given ball, that is a set of disjoint paths intersecting the ball and having length greater than half radius. We could find a packing of 172 paths in Brooklyn.
This proves that the highway dimension of this graph is 172 at least ($h\ge 172$). In comparison the skeleton tree of the center of the corresponding ball has width 48, and 42 branches are cut at radius distance (see Figure~\ref{fig:brooklyn}).

\begin{figure}
\centerline{%
\includegraphics[width=.466\textwidth]{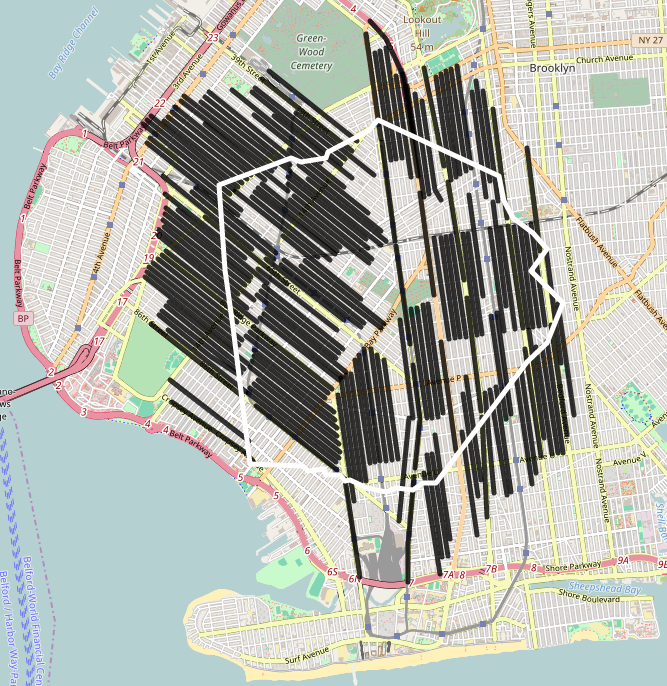}%
\hfill
\includegraphics[width=.512\textwidth]{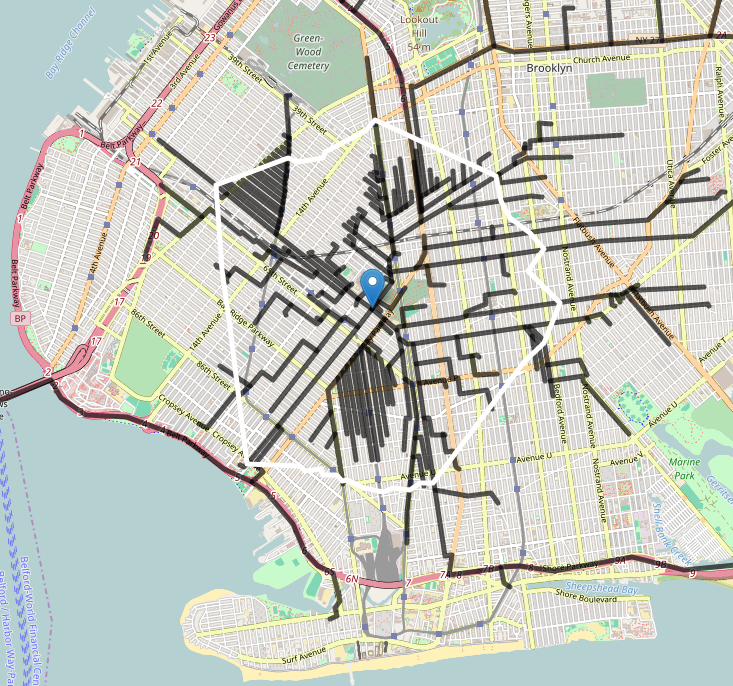}%
}
\caption{An OpenStreetMap view of Brooklyn, with a packing of 172 paths (in black) intersecting a ball of radius 720 seconds (with white border) on the left, and the skeleton tree (in black) of the center of the ball on the right.}
\label{fig:brooklyn}
\end{figure}


\section{Hub Labeling using Tree Skeletons} 
\label{sec:hub_labeling}

In this section, we assign shortest-path-intersecting hub sets to a set of (terminal) nodes $V$ of the considered network. We will assume that the length function $\ell$ on edges is integer weighted. To emulate the geometric realization of the graph, we subdivide edges into sufficiently short fragments by inserting a set of additional nodes $V^+$ into the network. For convenience of subsequent analysis, we assume that an edge $vw$, for $v, w\in V$, of integer length $\ell(vw)$ is subdivided into $12 \ell(vw)$ edges of length $1/12$ each. After this, all edges have the same length, and we subsequently treat the graph as unweighted. All the parameter definitions carry over directly from the geometric setting; for the sake of precision, we formally state the assumptions on the studied setting below.

We consider an unweighted graph $G = (V \cup V^+,E)$, with a distinguished set of terminal nodes $V$ and where all nodes from $V^+$ have degree $2$. We denote $n:=|V|$. We assume that every node $u\in V$ is associated with a fixed (unweighted) tree $T_u \subseteq G$. Throughout the section, we will denote by $P_u(v,w)$ the unique path between the pair of nodes $v$ and $w$ in tree $T_u$, and more concisely $P_u(v) := P_u(u,v)$. Where this does not lead to confusion, we will identify a path with its edge set, and we will also use the symbol $|P|$ to denote the \emph{length} of path $P$, i.e., the number of edges belonging to $P$. We write $d_u (v) := |P_u(v)|$. We require that the collection of trees $\{T_u\}_{u \in V}$ satisfies the following property: For any pair of nodes $u,v\in V$, we have $P_{u}(v) = P_{v}(u)$. We will also assume that for all $u, v, w\in V$, we have that $|P_u(v,w)|$ is an integer multiple of $12$.

We remark that if the graph $G$ was obtained by a distance-preserving subdivision of nodes of an edge-weighted graph on node set $V$ under some distance metric $\ell$, then each tree $T_u \in G$  corresponds to the shortest path tree of node $u$ under the original distance metric, and the assumption $P_{u}(v) = P_{v}(u)$ corresponds to the assumption of uniqueness of shortest paths under the original metric $\ell$.

An \emph{edge hub labeling} is an assignment of a set of edges $S(u) \subseteq E$ to each node $u \in V$, such that the following property is fulfilled: for every pair of nodes $u, v \in V$, there exists an edge $\eta \in S(u) \cap S(v)$ such that $\eta  \in P_{u}(v)$. The set $S(u)$ is known as the \emph{edge hub set} of $u$. We remark that this edge-based notion of hub sets is slightly stronger than an analogous vertex-based notion: indeed, knowing that an edge $\eta \in P_u(v)$, we also conclude that both of the endpoints of edge $\eta$ belong to $P_{u}(v)$. We choose to work with edge hub sets rather than node hub sets in this Section for compactness of arguments.

We restate in the setting of the family of trees $\{T_u\}_{u\in V}$ the notion of the \emph{skeleton} $T^*_u = T_u [V^*_u] \subseteq T_u$ as the subtree of $T_u$ induced by node set $V_u^*  = \{v \in V(T_u) : \reach_{T_u}(v) \geq \frac{1}{2} d_u(v)\}$. For $u \in V$, the width $\width(T_u^*)$ of the skeleton $T_u^*$ may be written as $\width(T_u^*) = \max_{r\in \N} |\cut_u^{*(r)}|$, where:
$$
\cut_u^{*(r)} := \{v \in V_u^* : d_u(v) = r\}.
$$
Finally, we note that the skeleton dimension of graph $G$ may be written as $\sk = \max_{u \in V} \width(T_u^*)$.

\subsection{Construction of the Hub Sets}

The edge hub sets $S(u)$, $u\in V$, are obtained by the following randomized construction. Assign to each edge $e\in E$ a real value $\rho(e) \in [0,1]$, uniformly and independently at random. We condition all subsequent considerations on the event that all values $\rho$ are distinct, $|\rho(E)| = |E|$, which holds with probability $1$.

For all $u, v\in V$, we define the \emph{central subpath} $P_u'(v) \subseteq P_u(v)$ as the subpath of $P_u(v)$ consisting  of its middle $\frac{d_u(v)}{6}$ edges; formally, $P_u'(v) := P_u(u', v')$, where $u', v' \in P_u(v)$ are nodes given by: $d_u(u') = \frac{5}{12} d_u(v)$ and $d_u(v') = \frac{7}{12} d_u(v)$.
Next, for all $u, v \in V$, $v\neq u$, we define the hub edge $\eta_u(v) \in P_u(v)$ as the edge with minimum value of $\rho$ on the central subpath between $u$ and $v$:
$$
\eta_u(v) = \arg\min_{e \in P'_u(v)} \rho(e).
$$
Finally, for each node $u\in V$, we adopt the natural definition of edge hub set $S(u)$ as the set of all edge hubs of node $u$ on paths to all other nodes:
$$
S(u) := \left\{\eta_{u}(v) : v \in V, v\neq u \right\}.
$$

\paragraph{Proof of Correctness.} Taking into account that for all $u, v\in V$, $P_u(v) = P_v(u)$, we observe that by symmetry of the central subpath with respect to its two endpoints, we also have $P_u'(v) = P_v'(u)$. It follows directly that $\eta_u(v) = \eta_v(u)$. Hence, we have $\eta_u(v) \in S(u) \cap S(v)$, and also $\eta_u(v) \in P_u(v)$, which completes the proof of correctness of the edge hub labeling.

We devote the rest of this Section to bounding the size of hub sets $S(u)$.

\subsection{Bounding Average Hub Set Size}

In subsequent considerations we will fix a node $u \in V$, and restrict considerations to the tree $T_u$. We will assume that tree $T_u$ is oriented from its root $u$ towards its leaves, and we will call a path $P\subseteq T_u$ a \emph{descending path} in $T_u$ if one of its endpoints is a descendant of the other in $T_u$. In particular, every path $P_u(v)$ is a descending path. For an edge $e\in T_u$, we will denote by $e^+$ and $e^-$ the two endpoints of $e$, with $e^-$ being the one further away from the root ($d_u(e^-) = d_u(e^+) + 1$). Likewise, for a descending path $P$, we denote by $P^+$ and $P^-$ its two extremal vertices, closest and furthest from the root $u$, respectively. We also denote by $d_u(e) := d_u(e^-)$ the distance of edge $e$ from the root.

In order to bound the expected size of the hub set $S(u)$, we will observe that elements of $S(u)$ necessarily belong to the skeleton $T^*_u$ and satisfy certain minimality constraints with respect to descending paths of sufficiently large length, contained entirely within the skeleton $T^*_u$.

\begin{lemma}\label{lem:hubset_properties}
Let $\eta \in S(u)$, for some $u \in V$. Then, the following claims hold:
\begin{enumerate}
\item[(1)]  $\eta \in E(T^*_u)$.
\item[(2)]  $\reach_{T^*_u} (\eta^-) \geq \frac{1}{7} d_u (\eta)$.
\item[(3)]  There exists a descending path $P \subseteq T^*_u$, such that $\eta = \arg\min_{e \in P} \rho(e)$ and $|P| \geq \frac{2}{7} d_u (\eta)$.
\end{enumerate}
Furthermore, the following claims hold for any edge $\eta \in E$ satisfying Claims (1) and (3):
\begin{enumerate}
\item[(4)]  There exists a descending path $P \subseteq T^*_u$, such that $\eta = \arg\min_{e \in P} \rho(e)$, $\eta$ is one of the two extremal edges of $P$ (i.e., $P^+=\eta^+$ or $P^-=\eta^-$), and $|P| = \left \lceil \frac{1}{7} d_u (\eta) \right \rceil$.
\item[(5)]  There exists a descending path $P_u(x,y)$, for some $y \in V^*_u$ and $x\in V^*_u$ satisfying $d_u(x) = \left \lfloor\frac{7}{8}d_u(y)\right\rfloor$, such that $\eta = \arg\min_{e \in P_u(x,y)} \rho(e)$ and $\eta$ is one of the two extremal edges of $P_u(x,y)$.
\end{enumerate}
\end{lemma}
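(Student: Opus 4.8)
The plan is to establish the five claims roughly in the order stated, since each builds on the previous ones, and to keep the geometric realization / subdivision picture firmly in mind: every ``distance'' is a multiple of $1/12$, the central subpath $P_u'(v)$ occupies the distance range $[\frac{5}{12}d_u(v), \frac{7}{12}d_u(v)]$ from the root, and $\eta = \eta_u(v)$ is the $\rho$-minimizer on that range. For Claim~(1), I would take an arbitrary $\eta = \eta_u(v) \in S(u)$ and show every vertex of $\eta$ has reach at least half its depth. The key estimate is that $\eta^-$ has depth at most $\frac{7}{12}d_u(v)$, while $v$ is a descendant of $\eta^-$ at depth $d_u(v)$, so $\reach_{T_u}(\eta^-) \ge d_u(v) - \frac{7}{12}d_u(v) = \frac{5}{12}d_u(v) \ge \frac{5}{7}d_u(\eta^-) \ge \frac12 d_u(\eta^-)$; the same bound with room to spare handles $\eta^+$. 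Hence $\eta \in E(T_u^*)$.

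For Claims~(2) and~(3), I would reuse the same two anchor points: $v$ is a descendant of $\eta^-$ witnessing reach, and (for (3)) the descending subpath of $P_u(v)$ from $\eta$ down to $v$ lies inside $T_u^*$ — this needs the observation that all vertices between $\eta^-$ and $v$ on $P_u(v)$ have depth in $[\frac{5}{12}d_u(v), d_u(v)]$ and reach (via $v$) at least $d_u(v) - d \ge \frac{1}{2}d$ there, so they are in $V_u^*$; thus the subpath is a legal choice of $P$. Its length is $d_u(v) - d_u(\eta^-) \ge d_u(v) - \frac{7}{12}d_u(v) = \frac{5}{12}d_u(v)$, and since $d_u(\eta) = d_u(\eta^-) \le \frac{7}{12}d_u(v)$ we get $d_u(v) \ge \frac{12}{7}d_u(\eta)$, giving $|P| \ge \frac{5}{12}\cdot\frac{12}{7}d_u(\eta) = \frac{5}{7}d_u(\eta) \ge \frac27 d_u(\eta)$; and $\eta$ minimizes $\rho$ on $P$ because $P'_u(v) \subseteq P$. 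The reach bound in (2) is proved identically, using that this witnessing descendant stays in the skeleton.

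Claims~(4) and~(5) are \emph{shrinking} steps: given a long skeleton descending path $P$ on which $\eta$ is the $\rho$-minimum (Claims (1) and (3)), I want a shorter, more canonical one still containing $\eta$ as minimizer and with $\eta$ extremal. The idea is simply to \emph{truncate} $P$ to a subpath of the prescribed length that still contains $\eta$ at one end — because $\eta$ is already the global $\rho$-minimizer of $P$, it remains the minimizer of any subpath containing it, and it can be made an endpoint by choosing the subpath to start (or end) at $\eta$. For (4) one extends from $\eta$ in the direction of more edges (toward the root or toward the leaves, whichever side of $\eta$ within $P$ is longer — at least half of $|P| \ge \frac27 d_u(\eta)$, hence at least $\frac17 d_u(\eta)$ edges are available on one side) and stops after exactly $\lceil \frac17 d_u(\eta)\rceil$ edges, staying inside $T_u^* $ since $P \subseteq T_u^*$. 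For (5) I would take the path of (4), with endpoints $\eta$ (extremal) and some $z \in V_u^*$ at distance $\lceil\frac17 d_u(\eta)\rceil$ from $\eta$ along it, and re-describe it in the $P_u(x,y)$ form: set $y$ to be whichever of the two endpoints is deeper and $x$ the shallower, then verify $d_u(x) = \lfloor \frac78 d_u(y)\rfloor$ using $d_u(y) - d_u(x) = \lceil \frac17 d_u(\eta)\rceil$ and $d_u(\eta) \in \{d_u(x)+1,\dots\}$ together with the fact that these depths are integer multiples of $12$ (so the floor/ceiling arithmetic is exact); a short case analysis on whether $\eta = P^+$ or $\eta = P^-$ in (4) pins down which endpoint is $y$.

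The routine parts are the fraction bookkeeping, which is why I would state the anchor inequality $d_u(v) \ge \frac{12}{7} d_u(\eta)$ once and reuse it. The main obstacle is the divisibility/rounding juggling in Claim~(5): making $d_u(x) = \lfloor \frac78 d_u(y) \rfloor$ come out \emph{exactly} requires being careful that $\lceil \frac17 d_u(\eta)\rceil$ and the depths $d_u(x), d_u(y)$ interact correctly given that everything is a multiple of $1/12$ before subdivision but, after subdivision into $12\ell$ pieces, only $|P_u(v,w)|$ is a multiple of $12$; I would handle this by working throughout with the post-subdivision integer edge counts and checking the identity $d_u(y) - \lfloor\frac78 d_u(y)\rfloor = \lceil\frac18 d_u(y)\rceil$ matches $\lceil \frac17 d_u(\eta)\rceil$ up to the slack permitted by $d_u(\eta)$ ranging over the length of the path from (4) — i.e. choosing $y$ so that $d_u(\eta)$ lands in the admissible window. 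A secondary subtlety is confirming in Claims (3)--(5) that the truncated/witnessing subpaths genuinely stay inside the \emph{skeleton} $T_u^*$ and not merely in $T_u$; this follows from the reach-via-$v$ argument above for (3) and from $P \subseteq T_u^*$ for (4)--(5), but it should be spelled out.
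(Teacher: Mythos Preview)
Your argument for Claims~(2) and~(3) contains a genuine gap. You take $P$ to be the descending path from $\eta$ all the way down to $v$, and assert two things: that this path lies in $T_u^*$, and that $\eta$ is the $\rho$-minimizer on it. Both assertions fail.

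First, the inequality $d_u(v) - d \ge \tfrac12 d$ that you invoke for vertices at depth $d$ along $P_u(v)$ is equivalent to $d \le \tfrac23 d_u(v)$; it is simply false for depths in $(\tfrac23 d_u(v),\, d_u(v)]$. In particular $v$ itself may be a leaf of $T_u$ with zero reach, so the tail of your path need not lie in the skeleton at all. Second, even ignoring that issue, you only know $\rho(\eta) \le \rho(e)$ for edges $e$ of the central subpath $P_u'(v)$ (depth range $[\tfrac{5}{12},\tfrac{7}{12}]\,d_u(v)$); for edges of your $P$ at depth beyond $\tfrac{7}{12}d_u(v)$ you have no minimality information whatsoever, so $\eta$ need not be $\arg\min_{e\in P}\rho(e)$. (Your containment $P_u'(v)\subseteq P$ is also wrong unless $\eta$ happens to be the very first edge of the central subpath, and even when it holds it proves nothing about edges of $P\setminus P_u'(v)$.)

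The fix is exactly what the paper does: stop short of $v$. For Claim~(3), take $P$ to be the central subpath $P_u(u',v')$ itself --- then $\eta=\arg\min_{e\in P}\rho(e)$ holds by definition, $P$ lies in $T_u^*$ because every vertex on it has depth at most $\tfrac{7}{12}d_u(v)<\tfrac23 d_u(v)$, and $|P|=\tfrac16 d_u(v)\ge\tfrac27 d_u(\eta)$. For Claim~(2), the reach witness inside $T_u^*$ is the point $v''$ at depth $\tfrac23 d_u(v)$ (the deepest point for which your reach-via-$v$ inequality still works), giving $\reach_{T_u^*}(\eta^-)\ge d_u(v'')-d_u(\eta^-)\ge\tfrac23 d_u(v)-\tfrac{7}{12}d_u(v)=\tfrac{1}{12}d_u(v)\ge\tfrac17 d_u(\eta)$. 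Your treatment of~(1) and~(4) is fine and essentially matches the paper; for~(5) the paper resolves the rounding by a case split on whether $\eta$ is the top or bottom extremal edge of the path from~(4), shortening the path toward the root in the first case (setting $y=\eta^-$) and sliding $y$ downward via an intermediate-value step in the second --- your sketch is in the right spirit but will not close without this split.
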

\begin{proof}
Select $\eta \in S(u)$ arbitrarily. Let $v \in V$ be any node such that $\eta = \eta_u(v)$. We recall that for the descending path $P_u(u',v') \subseteq P_u(v)$, we have $\eta = \arg\min_{e \in P_u(u',v')} \rho(e)$, where $d_u(u') = \frac{5}{12} d_u(v)$ and $d_u(v') = \frac{7}{12} d_u(v)$. Let $v'' \in  P_u(v)$ be a node such that $d_u(v'') = \frac{2}{3} d_u(v)$ (we recall that $12 | d_u(v)$ by assumption). By the definition of skeleton $T^*_u$, we have $v'' \in V^*_u$, and clearly $P_u(v'') \subseteq T^*_u$. We note that $\eta \in P_u(v'')$ and moreover:
$$
\reach_{T^*_u} (\eta^-) \geq |P_u(\eta^-,v'')| \geq |P_u(v',v'')| = \frac{2}{3} d_u(v) - \frac{7}{12} d_u(v) = \frac{1}{12} d_u (v) = \frac{1}{7} d_u (v') \geq \frac{1}{7} d_u (\eta),
$$
hence Claims (1) and (2) follow. To show Claim (3), we put $P = P_u(u',v')$ and observe that $\eta = \arg\min_{e \in P} \rho(e)$ by definition, and: $$
|P| = |P_u(u',v')| = \frac{7}{12} d_u(v) - \frac{5}{12} d_u(v) = \frac{1}{6} d_u(v) = \frac{2}{7} d_u(v') \geq \frac{2}{7} d_u (\eta).
$$
Next, to show Claim (4), we observe that by Claim (3), $\eta = \arg \min_{e \in P_u(u', \eta^-)} \rho(e)$ and $\eta = \arg \min_{e \in P_u(\eta^+, v')} \rho(e)$. Moreover, since $|P_u(u',v')| \geq \frac{2}{7} d_u (\eta)$, we have $|P_u(u',\eta^-)| + |P_u(\eta^+,v')| \geq 1+ \frac{2}{7} d_u (\eta) \geq 2 \left \lceil \frac{1}{7} d_u (\eta) \right \rceil$, and so  $|P_u(u',\eta^-)| \geq \left \lceil \frac{1}{7} d_u (\eta) \right \rceil$ or $|P_u(\eta^+,v')| \geq \left \lceil \frac{1}{7} d_u (\eta) \right \rceil$. Claim (4) thus follows for an appropriate choice of descending path $P \subseteq |P_u(u',\eta^-)|$ or $P \subseteq |P_u(\eta^+,v')|$, respectively, with $|P| = \left \lceil \frac{1}{7} d_u (\eta) \right \rceil$.

Finally, to show Claim (5), we consider separately the two cases from the proof of Claim (4).

If $P \subseteq P_u(u',\eta^-)$, then we set $y = \eta^-$, and choose $x \in P_u(\eta^-)$ so that $d_u(x) = \lfloor \frac{7}{8} d_u(\eta^-) \rfloor$. We have $d_u(x,y) = \lceil \frac{1}{8} d_u(\eta^-) \rceil \leq \lceil \frac{1}{7} d_u(\eta^-) \rceil = |P|$, hence $P_u(x,y) \subseteq P$, $\eta = \arg \min_{e\in P_u{(x,y)}} \rho(e)$, and the claim follows.

If $P \subseteq P_u(\eta^+,v')$, then we set $x = \eta^+$, and choose $y \subseteq P_u(x,v')$ so that $d_u(x) = \lfloor \frac{7}{8} d_u(y) \rfloor$. Such a choice of $y$ is always possible since, when moving with $y$ along the path $P$, the value of $\lfloor \frac{7}{8} d_u(y) \rfloor$ increases by at most $1$ in every step; moreover, for the lower end node $v^*$ of path $P$ we have $d_u(v^*) > d_u(x) + \frac{1}{7} d_u (x)$, and so $d_u(x) \leq \lfloor \frac{7}{8} d_u(v^*) \rfloor$. We again obtain $\eta = \arg \min_{e\in P_u{(x,y)}} \rho(e)$, and the claim follows.
\end{proof}

We remark that the remainder of our analysis is valid for any construction of hub sets $S$ which satisfies Claims (1), (2), and (3) of Lemma~\ref{lem:hubset_properties}.\footnote{One may, in particular consider an alternative construction of a hub set $S^+(u)$, defined as the set of all edges $\eta \in E(T_u)$ which satisfy Claims (1), (2), and (3) of the Lemma. All of our bounds on hub set size also hold in the case of $\{S^+(u)\}_{u\in V}$. 
The definition results in larger labels in practice: we always have $S(u) \subseteq S^+(u)$ (correctness results from this observation). On the other hand, hub sets $S^+(u)$ may sometimes be constructed more efficiently than $S(u)$: the definition of $S(u)$ requires a scan of the entire tree $T(u)$, whereas hub set $S^+(u)$ may be constructed based only on the smaller skeleton $T^*(u)$.}

To bound the average hub set size precisely, we introduce for each node $u$ a parameter called \emph{integrated skeleton dimension} $\isk(u)$, defined through a sum of inverse distances to $u$ over nodes of its tree skeleton:
\begin{equation}
\label{eq:isk_def}
\isk(u) :=\sum_{v \in V_u^*} \frac{1}{d_u(v)} = \sum_{r \in \N} \frac{|\cut_u^{*(r)}|}{r},
\end{equation}
where the equivalence of the two definitions follows directly from the definition of cuts, $\cut_u^{*(r)} = \{v \in V^*(u) : d_u(v) = r\}$.

Taking into account that $|\cut_u^{*(r)}| \leq \width(T^*_u)$, we have $\isk(u) = O(\width(T^*_u) \log \diam(T_u^*))$, and even more roughly, we have for all $u \in V$:
\begin{equation}\label{eq:isk_sk}
\isk(u)= O(\sk \log \aspect),
\end{equation}
where we recall that $\aspect = \max_{u \in V}\diam(T_u)$.

\begin{lemma}\label{lem:expected_hub_precise}
The expected hub set size of a node $u \in V$ satisfies the bound:
$$
\E |S(u)| \leq 16 \isk(u).
$$
\end{lemma}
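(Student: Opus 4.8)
The plan is to route the whole argument through Claim~(5) of Lemma~\ref{lem:hubset_properties}, which is tailor-made for it: that claim attaches to every hub edge $\eta\in S(u)$ a \emph{canonical} short descending subpath of $T_u$ on which $\eta$ is at once the $\rho$-minimizer \emph{and} an extremal edge. A union bound over this (small) family of paths, combined with the exchangeability of the i.i.d.\ weights $\rho$, then yields the claimed bound.

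Concretely, for each $y\in V_u^*\setminus\set{u}$ let $x_y$ denote the unique ancestor of $y$ in $T_u$ at depth $d_u(x_y)=\floor{\frac{7}{8}d_u(y)}$, and put $Q_y:=P_u(x_y,y)$. This descending path depends on $y$ alone, has $L_y:=d_u(y)-\floor{\frac{7}{8}d_u(y)}\geq\frac{1}{8}d_u(y)$ edges (in particular $L_y\geq 1$), and has at most two extremal edges. By Claim~(5) of Lemma~\ref{lem:hubset_properties}, every $\eta\in S(u)$ equals $\arg\min_{e\in Q_y}\rho(e)$ for some such $y$, and that minimizer is moreover an extremal edge of $Q_y$. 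Let $A_y$ be the event ``$\arg\min_{e\in Q_y}\rho(e)$ is an extremal edge of $Q_y$''. Since each $y$ can in this way contribute at most one edge to $S(u)$, namely the minimizer of $\rho$ on $Q_y$ (and only when $A_y$ holds), it follows that $\card{S(u)}$ is at most the number of $y\in V_u^*\setminus\set{u}$ for which $A_y$ holds, i.e.,
$$\card{S(u)}\;\leq\;\sum_{y\in V_u^*\setminus\set{u}}\mathbf{1}[A_y].$$

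It remains to estimate $\Pr[A_y]$ and take expectations. Conditioned on the almost-sure event that all values $\rho$ are distinct, the ranks of the $L_y$ edges of $Q_y$ form a uniformly random permutation, so each edge of $Q_y$ attains the minimum on $Q_y$ with probability $1/L_y$; as $A_y$ asks the minimizer to be one of at most two prescribed edges, $\Pr[A_y]\leq 2/L_y$. By linearity of expectation, together with $L_y\geq\frac{1}{8}d_u(y)$,
$$\E\card{S(u)}\;\leq\;\sum_{y\in V_u^*\setminus\set{u}}\frac{2}{L_y}\;\leq\;\sum_{y\in V_u^*\setminus\set{u}}\frac{16}{d_u(y)}\;=\;16\,\isk(u),$$
the last equality being the definition~\eqref{eq:isk_def} of $\isk(u)$ (whose defining sum effectively runs over $r\geq1$, equivalently over $V_u^*\setminus\set{u}$).

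I do not anticipate a genuine obstacle: the structural content is already packaged in Lemma~\ref{lem:hubset_properties}, and what remains is a union bound plus a symmetry argument. The two points that need a moment's care are (i) confirming that restricting attention to the family $\set{Q_y}_y$ costs no multiplicative factor --- it does not, because each $y$ supplies at most one candidate hub edge, the almost-surely-unique $\rho$-minimizer of $Q_y$ --- and (ii) the bookkeeping that $y\neq u$ throughout (since $Q_y$ must contain the edge $\eta$, so $d_u(y)\geq1$), which is precisely what makes the final sum coincide with $\isk(u)$.
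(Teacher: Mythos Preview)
Your proof is correct and follows essentially the same approach as the paper: both invoke Claim~(5) of Lemma~\ref{lem:hubset_properties} to associate each hub edge with a canonical path $P_u(x_y,y)$ on which it is an extremal $\rho$-minimizer, then bound $|S(u)|$ by the sum of the corresponding indicator variables and use symmetry to get $\Pr[A_y]\leq 2/L_y\leq 16/d_u(y)$. Your explicit exclusion of $y=u$ from the sum is a slight notational improvement over the paper's version, which leaves this implicit in the definition of $\isk(u)$.
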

\begin{proof}
Fix $u\in V$ arbitrarily. For $y \in V^*_u$, we define $x_y$ as the unique node on the path $P_u(y)$ such that $d_u(x_y) = \left \lfloor\frac{7}{8}d_u(y)\right\rfloor$. We define random variable $Q_u(y) \in \{0,1\}$ as the number of extreme edges $e$ of the path $P_u(x_y, y)$ (i.e., $e^+ = x_y$ or $e^- = y$), which satisfy $e = \arg \min \rho(P_u(x_y, y))$. We have:
$$
\Pr [Q_u(y) = 1] = \frac{2}{|P_u(x_y, y)|} \leq \frac{2}{d_u(y) - \frac{7}{8}d_u(y)} = \frac{16}{d_u(y)}.
$$
By Claim (5) of Lemma~\ref{lem:hubset_properties}, it follows that by summing random variables $Q_u(y)$ exhaustively over all vertices $y$ we count each element $\eta \in S(u)$ at least once, hence:
$$
|S(u)| \leq \sum_{y \in V^*_u} Q_u(y).
$$
By linearity of expectation, it follows that:
$$
\E |S(u)| \leq \sum_{y \in V^*_u} \E Q_u(y) = \sum_{y \in V^*_u} \Pr [Q_u(y) = 1] \leq \sum_{y \in V^*_u} \frac{16}{d_u(y)}.
\vspace*{-10mm}
$$
\vspace*{3mm}
\end{proof}

A direct application of Markov's inequality to the bound from Lemma~\ref{lem:expected_hub_precise}, combined with Eq.~\eqref{eq:isk_sk}, gives the following Corollary.

\begin{corollary}\label{cor:klabelingaverage}
The average hub set size satisfies $\frac{1}{n}\sum_{u \in V}|S(u)| = O(\frac{1}{n}\sum_{u \in V} \isk(u)) \leq O(\sk \log \aspect)$, with probability at least $1/2$ w.r.t.\ choice of random values $\rho$.\qed
\end{corollary}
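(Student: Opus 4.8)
The plan is to derive Corollary~\ref{cor:klabelingaverage} as an immediate consequence of Lemma~\ref{lem:expected_hub_precise} together with the bound~\eqref{eq:isk_sk} relating $\isk(u)$ to $\sk\log\aspect$. Concretely, I would proceed in three short steps.

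First, I would aggregate the per-node expectation bound over all $u\in V$. By Lemma~\ref{lem:expected_hub_precise}, $\E\bra{\frac{1}{n}\sum_{u\in V}\card{S(u)}} = \frac{1}{n}\sum_{u\in V}\E\card{S(u)} \le \frac{16}{n}\sum_{u\in V}\isk(u)$, using linearity of expectation. Then I would invoke~\eqref{eq:isk_sk}, which states $\isk(u) = O(\sk\log\aspect)$ for every $u\in V$; averaging over $u$ preserves this, so $\frac{1}{n}\sum_{u\in V}\isk(u) = O(\sk\log\aspect)$, and hence the \emph{expected} average hub set size is $O(\sk\log\aspect)$.

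Second, to turn the statement about expectation into a high-probability statement, I would apply Markov's inequality to the non-negative random variable $X := \frac{1}{n}\sum_{u\in V}\card{S(u)}$. Since $\E X \le \frac{16}{n}\sum_{u\in V}\isk(u)$, Markov gives $\Pr\bra{X \ge 2\,\E X} \le \frac{1}{2}$, so with probability at least $1/2$ we have $X \le 2\,\E X = O(\frac{1}{n}\sum_{u\in V}\isk(u)) = O(\sk\log\aspect)$. This establishes both the middle and the right-hand estimate in the corollary's chain of inequalities simultaneously (the constant factor $2$ is absorbed into the $O(\cdot)$).

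There is essentially no obstacle here: the corollary is a packaging result, and the only mild care needed is to note that all randomness is over the single collection of values $\rho$, so the event ``average hub size is $O(\sk\log\aspect)$'' is a single event whose probability Markov controls — there is no union bound over nodes and no independence assumption required. (If one wished, the probability $1/2$ could be boosted to $1-\epsilon$ by replacing the constant $2$ with $1/\epsilon$, but the stated form suffices.) The content of the corollary is entirely front-loaded into Lemma~\ref{lem:expected_hub_precise} and Eq.~\eqref{eq:isk_sk}, which are assumed; what remains is the one-line Markov argument sketched above.
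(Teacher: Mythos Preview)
Your proposal is correct and matches the paper's own argument essentially line for line: the paper states only that the corollary follows by ``a direct application of Markov's inequality to the bound from Lemma~\ref{lem:expected_hub_precise}, combined with Eq.~\eqref{eq:isk_sk},'' and your three steps (linearity of expectation, the bound $\isk(u)=O(\sk\log\aspect)$, then Markov with threshold $2\E X$) are precisely that.
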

Obtaining concentration bounds on the maximal size of a hub set, $\max_{u\in V} |S(u)|$, requires some more care, and we proceed with the analysis in the following subsection.

\subsection{Concentration Bounds on Hub Set Size}

For fixed $u\in V$, we consider the size of the hub set of $u$ given by the random variable:
\begin{equation}\label{eq:size}
|S(u)| = \sum_{\eta \in E(T^*_u)} X(\eta),
\end{equation}
where $X(\eta) \in \{0,1\}$ is the indicator variable for the event ``$\eta\in S(u)$''. The random variables $\{X(\eta)\}_{\eta \in E(T^*_u)}$ need not, in general, be independent or negatively correlated. In the subsequent analysis, for fixed $\eta$, we make use of Claim (4) of Lemma~\ref{lem:hubset_properties} to bound random variable $X(\eta)$. By the Claim of the Lemma, we can decompose $X(\eta)$ into the contributions from descending paths located towards the root and away from the root with respect to $\eta$:
$$
X(\eta) \leq X^+(\eta) + X^-(\eta),
$$
where we define:
\begin{itemize}
\item $X^+(\eta) \in \{0,1\}$ is the indicator variable for the event: ``for the unique descending subpath $P \subseteq P_u(\eta^-)$ of length $|P| = \left \lceil \frac{1}{7} d_u (\eta) \right \rceil$ ending in edge $\eta$ (i.e., $P^- = \eta^-$), it holds that $\eta = \arg\min \rho(P)$'',
\item $X^-(\eta) \in \{0,1\}$ is the indicator variable for the event: ``there exists a descending path $P \subseteq T^*_u$ of length $|P| = \left \lceil \frac{1}{7} d_u (\eta) \right \rceil$ starting in edge $\eta$ (i.e., $P^+ = \eta^+$), such that $\eta = \arg\min \rho(P)$''.
\end{itemize}
Moreover, by Claim (2) of Lemma~\ref{lem:hubset_properties}, we may have $X(\eta)\neq 0$ only for those edges $\eta$ for which $\reach_{T^*_u} (\eta^-) \geq \frac{1}{7} d_u (\eta)$. We denote $V^{**}_u = \{ v \in V^*_u : \reach_{T^*_u} (v) \geq \frac{1}{7} d_u (v)\}$ and $T^{**}_u = T_u [V^{**}_u]$.
We can now rewrite Eq.~\eqref{eq:size} as:
\begin{equation}
\label{eq:StoX}
|S(u)| \leq \sum_{\eta \in E(T^{**}_u)} X^+(\eta) + \sum_{\eta \in E(T^{**}_u)} X^-(\eta),
\end{equation}
and proceed to bound both of these sums separately. In order to be able to manipulate the sums more conveniently, we first introduce a partition of the tree according to geometrically increasing scales of distance.

\paragraph{Partition of $T^*_u$ into Layers.} We consider a sequence of increasing integer radii $(r^{[i]})_{i\in \N}$, given as $r^{[0]} = 0$ and $r^{[i]} = \left\lceil 15 \left( \frac{16}{15}\right)^i\right\rceil,$ for $i\geq 1$. The last non-empty layer corresponds to index $i_{max} < \log_{16/15} \diam < 16 \ln \diam$.
Cutting the edge set of tree $T^*_u$ at vertices located at distances $\{r^{[i]}\}_{i\in \N}$ from the root $u$ yields the following partition into layers:
$$
E(T^*_u) = \bigcup_{i \in \N} L^{*[i]}, \quad \text{where:}\quad L^{*[i]}:= \{e\in E(T^*_u) : r^{[i]} < d_u(e) \leq r^{[i+1]}\}.
$$
We further denote the subset of each layer restricted to edges from $T^{**}_u$ as $L^{**[i]} := L^{*[i]} \cap E(T^{**}_u)$, $i\in \N$.

\begin{lemma}\label{lem:path_decomposition}
For all $i\geq 1$, edge set $L^{**[i]}$ admits a partition into paths, $L^{**[i]} = \bigcup_{j=1}^{l_i} P^{[i,j]}$, such that $l_i < 2\min_{r \in [r^{[i+1]},r^{[i+2]}]} |\cut_u^{*(r)}| \leq 2\sk$, each $P^{[i,j]}$ is a descending path, and all internal nodes of all paths $P^{[i,j]}$ have degree exactly $2$ in $T^{**}_u$.
\end{lemma}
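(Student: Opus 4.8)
The idea is to build the path partition of $L^{**[i]}$ greedily by walking down the tree $T^{**}_u$, and then bound the number of paths by relating ``branching events'' in $T^{**}_u$ to cut sizes slightly deeper in the skeleton. First I would observe that $T^{**}_u$ is itself a tree (an induced subtree of $T_u$), and consider the forest $F^{[i]}$ obtained by keeping only those edges of $T^{**}_u$ lying in layer $L^{*[i]}$, i.e., with $r^{[i]} < d_u(e) \le r^{[i+1]}$. A natural path partition of any forest is obtained by repeatedly extracting root-to-leaf descending paths: start at each root of $F^{[i]}$ (a node at distance $> r^{[i]}$ whose parent edge is not in the layer, or a node of degree $\ge 2$), and at each branching node split into one path per child. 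With this construction, every internal node of every path $P^{[i,j]}$ has exactly one child in $F^{[i]}$, hence degree $2$ in $T^{**}_u$ restricted to the layer; one must take a small amount of care to also ensure degree $2$ in all of $T^{**}_u$, which can be arranged by additionally cutting a path whenever the current node has a sibling branch leaving the layer — this only increases the path count by a controlled amount and is folded into the factor $2$ below.

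The core of the argument is the bound $l_i < 2\min_{r \in [r^{[i+1]}, r^{[i+2]}]} |\cut_u^{*(r)}|$. The number of paths in the partition equals the number of ``starts'': roughly, the number of sources of $F^{[i]}$ plus the number of extra branchings inside the layer. I would charge each such path-start to a distinct skeleton node living at depth in the range $[r^{[i+1]}, r^{[i+2]}]$. The key geometric fact, coming from the skeleton definition (reach at least half the distance to the root) together with the reach bound $\reach_{T^*_u}(e^-) \ge \tfrac{1}{7} d_u(e)$ defining $T^{**}_u$, is that any edge $e$ in layer $i$ — so $d_u(e) \le r^{[i+1]}$ — has enough reach inside $T^*_u$ to be an ancestor of some skeleton node at depth $r$ for every $r$ up to roughly $d_u(e) + \tfrac{1}{7}d_u(e)$, and in particular at the chosen target depth $r \in [r^{[i+1]}, r^{[i+2]}]$; here the precise choice $r^{[i]} = \lceil 15(16/15)^i\rceil$ guarantees $r^{[i+2]} \le (16/15)^2 r^{[i+1]} + O(1) < \tfrac{8}{7} r^{[i]} \le \tfrac{8}{7} d_u(e)$ type inequalities needed to keep the target depth within reach. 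Two distinct path-starts, being in distinct subtrees hanging off a common ancestor (or distinct sources), have descendant sets at depth $r$ that are disjoint, so they are charged to distinct nodes of $\cut_u^{*(r)}$; taking $r$ to be the minimizer over $[r^{[i+1]}, r^{[i+2]}]$ and absorbing the ``extra branching vs. sibling-leaving-layer'' bookkeeping into a factor of $2$ yields $l_i < 2|\cut_u^{*(r)}| \le 2\sk$.

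Concretely, the steps in order: (i) define $F^{[i]}$ and the greedy root-to-leaf path partition, verifying it is a valid edge partition of $L^{**[i]}$ into descending paths; (ii) modify the partition minimally so that every internal node of every path has degree exactly $2$ in $T^{**}_u$, not just in the layer — each extra cut is charged to a branch leaving the layer, i.e. to a node of $T^{**}_u$ at depth $> r^{[i+1]}$; (iii) fix $r := \arg\min_{r' \in [r^{[i+1]}, r^{[i+2]}]} |\cut_u^{*(r')}|$ and, for each path $P^{[i,j]}$, exhibit a node $z_j \in \cut_u^{*(r)}$ that is a descendant (in $T^*_u$) of the top endpoint $(P^{[i,j]})^+$, using that $(P^{[i,j]})^+ \in V^{**}_u$ has reach $\ge \tfrac{1}{7}$ of its depth in $T^*_u$ and that $r$ is within that reach, which is where the numeric choice of the $r^{[i]}$ is used; (iv) argue the $z_j$ are pairwise distinct because distinct paths of the partition hang off distinct subtrees; (v) combine (ii) and (iv) to conclude $l_i < 2|\cut_u^{*(r)}| \le 2\sk$. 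The main obstacle I anticipate is step (iii) together with the interaction in step (ii): one must verify that the reach guarantee $\tfrac17 d_u(e)$ on edges of $T^{**}_u$ genuinely survives passage to the top endpoints of the partition paths and still reaches depth $r^{[i+2]}$, i.e. that $r^{[i+2]} - r^{[i]} \le \tfrac17 r^{[i]}$ up to the rounding in the ceiling — this is a tight-but-true inequality for the growth rate $16/15$, and getting the off-by-one bookkeeping (ceilings, the ``$<$'' versus ``$\le$'', the degree-$2$ patching) to line up is the delicate part.
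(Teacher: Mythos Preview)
Your overall strategy is right---decompose $L^{**[i]}$ into maximal descending chains and then charge paths to nodes of a deep cut via the $\tfrac{1}{7}$-reach bound---but the charging argument in step~(iv) has a genuine gap, and step~(ii) is based on a misunderstanding.

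First, step~(ii) is unnecessary. If $v$ is an internal node of one of your chains in $F^{[i]}$, then both its parent edge and some child edge lie in $L^{**[i]}$, which forces $r^{[i]} < d_u(v) \le r^{[i+1]} - 1$. At such a depth \emph{every} child edge of $v$ in $T^{**}_u$ has $d_u$-value at most $r^{[i+1]}$ and hence also lies in $L^{**[i]}$; since $T^{**}_u$ is a subtree containing the root, the parent edge is in $T^{**}_u$ as well. So $\deg_{F^{[i]}}(v) = \deg_{T^{**}_u}(v)$ for every internal $v$, and no patching is needed. This means the factor of $2$ cannot come from step~(ii).

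Second, and more seriously, step~(iv) fails as stated. Distinct chains of the maximal-degree-$2$ decomposition do \emph{not} hang off disjoint subtrees: if a chain $P^{[i,j_1]}$ ends at a branching node $v$ and another chain $P^{[i,j_2]}$ begins at $v$, then $(P^{[i,j_2]})^+ = v$ is a descendant of $(P^{[i,j_1]})^+$, so any $z_{j_2}$ you pick is automatically also a descendant of $(P^{[i,j_1]})^+$. Charging all $l_i$ top endpoints to distinct cut nodes is therefore impossible in general.

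The paper's fix is to charge \emph{leaves} of $F^{[i]}$ rather than top endpoints. There are only $l$ leaves, and the elementary forest identity (leaves minus components equals the sum of $(\text{children}-1)$ over branching nodes, hence at least the number of branching nodes) gives $l_i \le 2l - c < 2l$. Each leaf $P^{[i,j]-}$ lies in $V^{**}_u$, so it has $\reach_{T^*_u}(P^{[i,j]-}) \ge \tfrac{1}{7} d_u(P^{[i,j]-}) \ge \tfrac{1}{7} r^{[i]} \ge r^{[i+2]} - r^{[i]} - 1$; thus each leaf extends along an independent descending path in $T^*_u$ all the way to depth $r^{[i+2]}$. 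These extensions are pairwise disjoint (leaves are incomparable in $F^{[i]}$), so $l \le |\cut_u^{*(r)}|$ for every $r \in [r^{[i+1]}, r^{[i+2]}]$, and $l_i < 2l$ gives the claim. Your numerics in step~(iii) for the inequality $\tfrac{1}{7} r^{[i]} \ge r^{[i+2]} - r^{[i]} - 1$ are correct; you just need to apply them at the leaves instead of the tops.
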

\begin{proof}
Define partition $L^{**[i]} = \bigcup_{j=1}^{l_i} P^{[i,j]}$ of the edge set of the considered layer so that each path $P^{[i,j]}$, $1\leq j \leq l_i$, is a maximal descending path whose internal nodes all have degree exactly $2$ in $T^{**}_u$. Let $F = T_u(L^{**[i]})$ be the oriented sub-forest of $T_u$ induced by the edges of $L^{**[i]}$. Let $l$ be the number of leaves of $F$ and $c$ be the number of its connected components. An elementary relation between the number of leaves and the number of nodes of degree more than $2$ in a forest gives $l_i \leq 2 l - c < 2l$. Moreover, by the definition of  $T^{**}_u$, we have that for each $\eta \in L^{**[i]}$, we have $\reach_{T^*_u} (\eta^-) \geq \frac{1}{7} d_u (\eta^-)$. It follows that each of the $l$ paths $P^{[i,j]}$, such that $P^{[i,j]-}$ is a leaf of $F$, can be extended along a descending path in $T^*_u$ by a distance of $\frac{1}{7} d_u (P^{[i,j]-}) \geq \frac{1}{7} r^{[i]} \geq r^{[i+2]} - r^{[i]}-1$. It follows that each of the $l$ leaves of $F$ can be extended along a (independent) descending path until radius $r^{[i+2]}$ inclusive. Thus, $l < \min_{r \in [r^{[i+1]},r^{[i+2]}]} |\cut_u^{*(r)}| \leq \sk$, which completes the proof.
\end{proof}

\paragraph{Bounding the Sum of $X^+(\eta)$.}

Denote by $\mathcal{P}^{[i]}$ the set of descending paths $P$ in $T_u$ which stretch precisely between the endpoints of the $i$-th layer: $d_u(P^+) = r^{[i]}$, $d_u(P^-) = r^{[i+1]}$. For a fixed path $P^{[i,j]} \subseteq L^{**[i]}$, $i\geq 1$, we denote by $Q^{[i,j]}$ the unique path in $\mathcal{P}^{[i-1]}$ which extends to $P^{[i,j]}$, i.e., $Q^{[i,j]} \in \mathcal{P}^{[i-1]}$ and $Q^{[i,j]} \subseteq P_u(P^{[i,j]+})$.

Consider now an arbitrary edge $\eta$ which does not belong to layers $0$ or $1$ of the tree partition, $\eta \in E(T^{**}_u) \setminus (L^{**[0]} \cup L^{**[1]})$. Taking into account the above decomposition of set $T^{**}_u$ into layers, and of layers into paths, there exists a unique path $P^{[i,j]}$, such that $\eta\in P^{[i,j]}$.
Then, we observe that for the event $X^+(\eta)=1$ to hold, it is necessary that two conditions are jointly fulfilled: $\eta$ must satisfy the \emph{prefix minimum} condition on the path $P^{[i,j]}$:
\begin{equation}\label{eq:prefix_min}
\eta = \arg\min_{e \in P^{[i,j]} \cap P_u(\eta^-)} \rho(e),
\end{equation}
and moreover, we must have $\rho(\eta) < \min \rho(Q^{[i,j]})$. Indeed, considering the definition of $X^+(\eta)$, the unique descending subpath $P \subseteq P_u(\eta^-)$ of length $|P| = \left \lceil \frac{1}{7} d_u (\eta) \right \rceil$ which ends with edge $\eta$ has its other endpoint in $L^{*[i-2]}$. We have $\eta = \arg\min \rho(P)$, and path $P$ includes as subpaths both the entire prefix $P^{[i,j]} \cap P_u(\eta^+)$, and the path $Q^{[i,j]}$.

We denote by $M^{+[i,j]} \subseteq P^{[i,j]}$ the set of all edges $\eta \in P^{[i,j]}$ satisfying $\rho(\eta) < \min \rho(Q^{[i,j]})$. We further denote by $\eta^{+[i,j,k]}$ the $k$-th edge in $M^{+[i,j]}$, when ordering edges of $M^{+[i,j]}$ by increasing distance from the root $u$, $1\leq k \leq |M^{+[i,j]}|$. Finally, we denote by $Z^{+[i,j,k]} \in \{0,1\}$ the indicator random variable for the event that ``edge $\eta^{+[i,j,k]}$ satisfies the prefix minimum condition~\eqref{eq:prefix_min} on path $P^{[i,j]}$''. It follows that:
$$
\sum_{\eta \in E(T^{**}_u)} X^+(\eta) \leq \sum_{\eta \in L^{**[0]} \cup L^{**[1]}} X^+(\eta)  + \sum_{\substack{i = 2,3,4,\ldots\\ 1\leq j \leq l_i}} \sum_{k=1}^{|M^{+[i,j]}|} Z^{+[i,j,k]},
$$
where we note that the ranges of sum indices $i,j$ do not depend on the random choice of $\rho$ in our setting. We further rewrite the above expression, roughly bounding the first sum by cardinality, and splitting the second double sum according to even and odd values of $i$:
\begin{equation}
\label{eq:Xplus}
\sum_{\eta \in E(T^{**}_u)} X^+(\eta) \leq \sum_{r=0}^{19} |\cut_u^{*(r)}| +  \underbrace{\sum_{\substack{i = 2,4,6,\ldots\\ 1\leq j \leq l_i}} \sum_{k=1}^{|M^{+[i,j]}|} Z^{+[i,j,k]}}_{:=A_{even}^+} + \underbrace{\sum_{\substack{i = 3,5,7,\ldots\\ 1\leq j \leq l_i}} \sum_{k=1}^{|M^{+[i,j]}|} Z^{+[i,j,k]}}_{{:=A_{odd}^+}}.
\end{equation}
We subsequently consider only bounds on the summed expression $A_{even}^+$ for $2|i$ (bounds on the expression $A_{odd}^+$ follow by identical arguments).

We observe that for fixed $i$, $i=2a$ for some $a\in N^+$, the random variables $|M^{+[i,j]}|$ depend only on the choice of random values $\rho(e)$ for $e \in L^{*[i-1]}$. Now, conditioning on a choice $\rho_{odd}$ of random values $\rho(e)$ for $e \in L^{*[2a-1]}$, for all $a\in \N^+$, we observe that $\{Z^{+[i,j,k]}\}_{2|i, j, 1\leq k\leq |M^{+[i,j]}|}$ is a set of independent random variables, with:
$$
\Pr[Z^{+[i,j,k]} =1 | \rho_{odd}] = 1/k.
$$
The above probability and independence follows directly from a well-known characterization of the probability that the $k$-th element of a uniformly random permutation (ordering) is its prefix minimum.

We have:
\begin{equation}\label{eq:N}
\E [A_{even}^+ | \rho_{odd}] = \sum_{\substack{i = 2,4,6,\ldots\\ 1\leq j \leq l_i}} \sum_{k=1}^{|M^{+[i,j]}|} \frac1k <
\sum_{\substack{i = 2,4,6,\ldots\\ 1\leq j \leq l_i}} (1+\ln^\circ |M^{+[i,j]}|)\quad =: N_{even}^+,
\end{equation}
where $\ln^\circ x = \ln x$ for $x>0$ and $\ln^\circ 0 = 0$. By an application of a simple Chernoff bound for the sum of variables $\{Z^{+[i,j,k]}\}_{i,j,k}$, we have:
\begin{equation}\label{eq:Abound}
\Pr [A_{even}^+ < N_{even}^+ + 3c\ln n] >1- n^{-c}, \text{\quad for $c>1$}.
\end{equation}

It now remains to provide bounds on the concentration of random variable $N_{even}^+$ in its upper tail. If our only goal is to bound the hub set size as $O(\sk \log \diam \log \log n + \log n)$, then obtaining such bounds becomes a relatively straightforward exercise in Chernoff bounds over individual paths $P^{[i,j]}$. In this work, we go for a more pedestrian approach through a type of balls-into-bins process, optimizing bounds over larger path sets, which will eventually give us slightly tighter bounds, including a bound of $O(\sk \log \diam \log \log \sk)$.

Denote $M^+_i := \sum_{j=1}^{l_i}|M^{+[i,j]}|$, for $i\geq 2$. Then, we have the following bound.
\begin{lemma}\label{lem:layer_geometric}
Fix $i \geq 2$. Then, for any $c\geq 1$:
$$
\Pr \left[M^+_i > 12 c l_i \right] < 2^{-c}.
$$
\end{lemma}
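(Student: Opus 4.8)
The plan is to bound $M^+_i=\sum_{j=1}^{l_i}Y_j$, where $Y_j:=|M^{+[i,j]}|$, through its integer moments; since $l_i$ and the paths $P^{[i,j]},Q^{[i,j]}$ are determined by the tree $T_u$ alone (not by $\rho$), the number of summands is fixed, and a moment argument lets us avoid any independence assumption. (If layer $i$ is empty in $T^{**}_u$ then $M^+_i=0$ and the bound is immediate, so assume $l_i\ge 1$.) Set $w:=r^{[i+1]}-r^{[i]}$ and $w':=r^{[i]}-r^{[i-1]}$, so that $|P^{[i,j]}|\le w$ and $|Q^{[i,j]}|=w'$ for every $j$. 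First I would dispose of small $i$: if $w\le 12$ then $Y_j\le|P^{[i,j]}|\le 12$, hence $M^+_i\le 12l_i\le 12cl_i$ and $\Pr[M^+_i>12cl_i]=0$. Otherwise a short calculation with $r^{[i]}=\lceil 15(16/15)^i\rceil$ (using $w\le (16/15)^i+1$ and $w'\ge (16/15)^{i-1}-1$, together with $(16/15)^i>11$ in this case) gives $w\le 2w'$; then for each $j$ the quantity $q:=\frac{|P^{[i,j]}|}{|P^{[i,j]}|+|Q^{[i,j]}|}$ satisfies $q\le\frac{w}{w+w'}\le\frac{2}{3}$.

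Next I would establish a geometric-type tail for each summand: for every integer $t\ge 1$, $\Pr[Y_j\ge t]\le q^t$. Since all $\rho$-values are distinct, the event $\{Y_j\ge t\}$ is precisely the event that among the $|P^{[i,j]}|+|Q^{[i,j]}|$ edges of the \emph{disjoint} union $P^{[i,j]}\cup Q^{[i,j]}$ (one set at depths $>r^{[i]}$, the other at depths $\le r^{[i]}$) the $t$ edges of smallest $\rho$-value all belong to $P^{[i,j]}$; as $\rho$ induces a uniformly random order on these edges, this probability equals $\prod_{s=0}^{t-1}\frac{|P^{[i,j]}|-s}{|P^{[i,j]}|+|Q^{[i,j]}|-s}$, and every factor is at most $q$. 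Hence each $Y_j$ is stochastically dominated by a geometric variable of parameter $q\le 2/3$, and a routine evaluation of geometric moments gives $\E[Y_j^k]\le (2k)^k$, i.e.\ $\lVert Y_j\rVert_k\le 2k$, for every integer $k\ge 1$.

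The one genuinely delicate point is that the $Y_j$ need \emph{not} be independent: distinct ancestral paths $Q^{[i,j]}$ can share a common prefix near depth $r^{[i-1]}$, which makes the thresholds $\min\rho(Q^{[i,j]})$ positively correlated, so a Chernoff bound over independent summands is not available. I would sidestep this using Minkowski's inequality (the triangle inequality for the $L^k$-norm), which requires no independence: for every integer $k\ge 1$,
$$\bigl(\E[(M^+_i)^k]\bigr)^{1/k}=\Bigl\lVert\sum_{j=1}^{l_i}Y_j\Bigr\rVert_k\le\sum_{j=1}^{l_i}\lVert Y_j\rVert_k\le 2kl_i,$$
so that $\E[(M^+_i)^k]\le (2kl_i)^k$. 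Applying Markov's inequality to $(M^+_i)^k$ with $k:=\lceil 2c\rceil$ then yields
$$\Pr[M^+_i>12cl_i]\le\frac{(2kl_i)^k}{(12cl_i)^k}=\Bigl(\frac{k}{6c}\Bigr)^k\le\Bigl(\frac{1}{2}\Bigr)^{\lceil 2c\rceil}\le 2^{-2c}<2^{-c},$$
where we used $\lceil 2c\rceil<2c+1\le 3c$ for $c\ge 1$. Everything besides the dependence issue — the combinatorial tail identity, the moment estimate for the geometric distribution, and the verification of the two regimes of $i$ — is routine, and the generous constant $12$ in the statement leaves ample slack in all the numeric bounds.
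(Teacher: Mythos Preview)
Your argument is correct and takes a genuinely different route from the paper. The paper proves the lemma by a coupling: it reveals the values $\rho$ on $F=\bigcup_j(P^{[i,j]}\cup Q^{[i,j]})$ in increasing order, tracks which paths $P^{[i,j]}$ have already been ``cut off'' by a value landing in $Q^{[i,j]}$, and shows via a submartingale bound that the process terminates within $12cl_i$ steps with probability $>1-2^{-c}$. This is a process-level argument that deals with the shared ancestry of the $Q^{[i,j]}$'s implicitly, through the evolution of the surviving forest.

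You instead treat each $Y_j=|M^{+[i,j]}|$ separately, observe the clean combinatorial identity $\Pr[Y_j\ge t]=\prod_{s<t}\frac{|P^{[i,j]}|-s}{|P^{[i,j]}|+|Q^{[i,j]}|-s}\le q^t$ with $q\le 2/3$, and then absorb all dependence between the $Y_j$'s into Minkowski's inequality. This is shorter and more elementary; the only real work is the moment estimate $\|Y_j\|_k\le 2k$ for a geometric tail (which does check out: equality at $k=1$, and for $k\ge 2$ the integral bound $\E[Y_j^k]\le k!/(\ln(3/2))^k$ together with $k!\le(2k\ln(3/2))^k$ suffices). The paper's process argument, on the other hand, is what adapts more naturally to the companion Lemma~\ref{lem:layer_geometric_suffix}, where each $P^{[i,j]}$ has \emph{several} descendant paths $R$ and the relevant threshold $\max_R\min\rho(R)$ does not yield such a clean per-summand tail; there the coupling gives a sharper $e^{-4cl_i}$ bound. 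For the present lemma, though, your moment approach is a perfectly good and arguably cleaner substitute.
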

\begin{proof}
For fixed $i$, we consider the edge set $F := \bigcup_{j=1}^{l_i} P^{[i,j]} \cup \bigcup_{j=1}^{l_i}Q^{[i,j]}$ (forming a subforest of $T^*_u$, contained entirely within layers $L^{*[i-1]}$ and $L^{*[i]}$). See Figure~\ref{fig:proof} for an illustration.

\begin{figure}\label{fig:proof}
\centering\hspace*{1.5cm}\includegraphics[width=15cm]{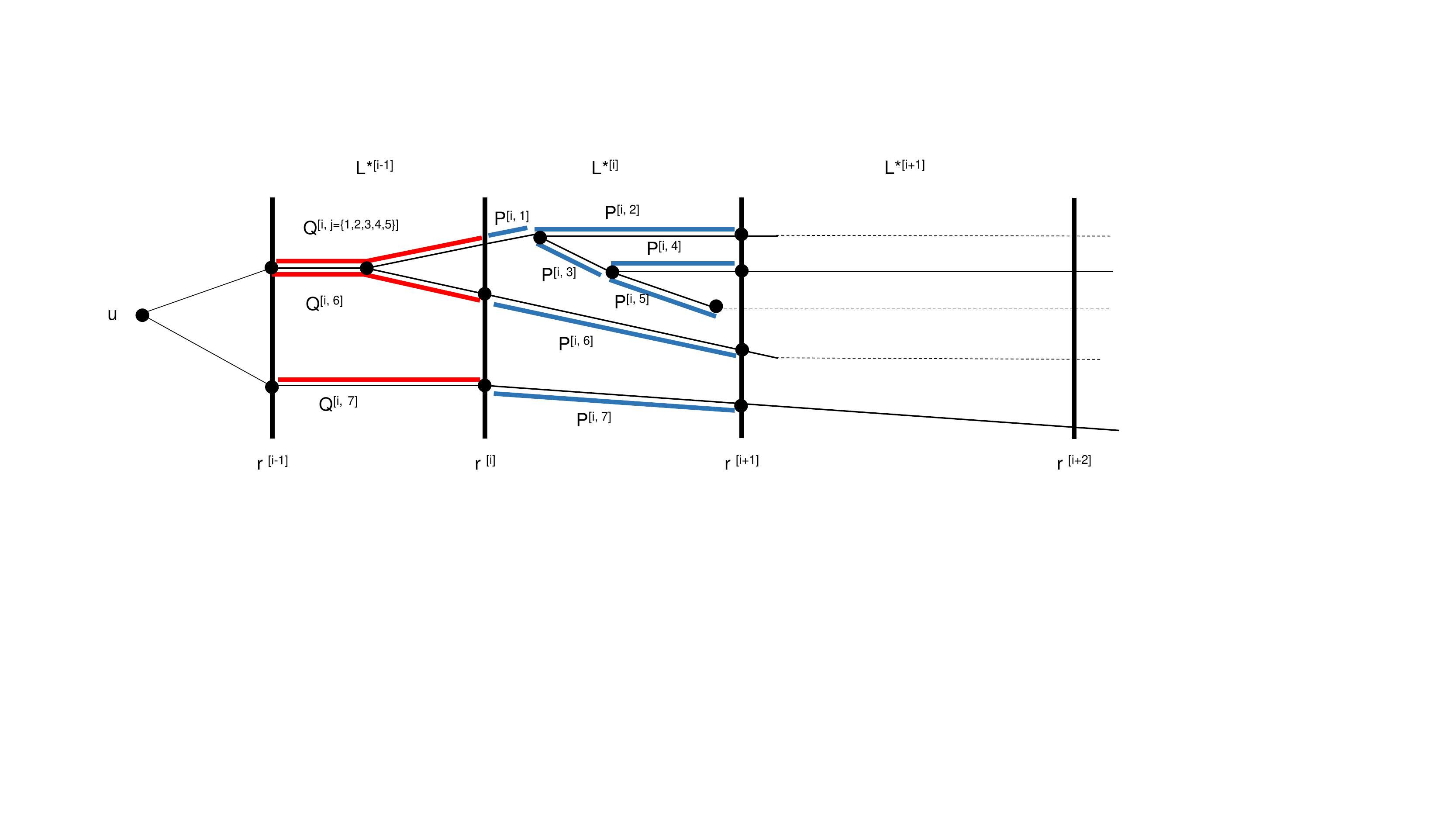}
\vspace*{-3.8cm}
\caption{Illustration of paths $P^{[i,j]}$ and $Q^{[i,j]}$. Edges from $T_u^{**}$ are marked with solid lines, remaining edges from $T_u^*$ are marked with dashed lines.}
\end{figure}
 Choose arbitrarily the set $I$ of (necessarily distinct) values of $\rho$ which appear within $F$, $I = \rho(F)$. Let  $I = \{i_1,i_2,\ldots, i_{|F|}$\}, with $i_1 \leq i_2 \leq \ldots \leq i_{|F|}$.

We couple the sampling of values $\rho$ on $F$ with the following two-stage process. First, we fix set $I$. Then, given a choice of $I$, we select a uniformly random permutation to perform the assignment of values from $I$ to edges in $F$ ($|I|=|F|$). The latter permutation is defined iteratively, by assigning to successive values $i_t$, $1\leq t \leq |F|$, an as yet unoccupied edge (site) from $F$. The value of $|M^{+[i,j]}|$ is given as the number of elements of $I$ which are placed in sites from $P^{[i,j]}$ before the smallest index $t_j\in \N$, such that $i_{t_j} \in \rho(Q^{[i,j]})$. We will refer to the index $t$ as representing moments of time, and we will then say that path $P^{[i,j]}$ was \emph{cut off} at time $t_j$.

For successive moments of time $t \in N$, we denote by $J_t$ the set of surviving path indices at time $t$, i.e., $J_t = \{ j : \forall_{t'\leq t}\ i_{t'} \notin \rho(Q^{[i,j]})\}$. We then obtain subforest $F_t \subseteq F$ by restricting $F$ to its surviving part, $F_t = \bigcup_{j \in J_t} P^{[i,j]} \cup \bigcup_{j \in J_t} Q^{[i,j]}$.

To prove the claim, we will consider how the random variable $M^+_i$ increases over time, until $F_t = \emptyset$. We again couple our sampling process by first deciding in each time step whether to place $i_t$ in forest $F\setminus F_{t-1}$ or in forest $F_{t-1}$, and only afterwards fixing for $i_t$ a specific free site with uniform probability within the chosen subforest. Observe that if $i_{t}$ is placed $F\setminus F_{t-1}$, then in the given step of the considered process, the value of $M^+_i$ remains unchanged at time $t$. We will thus eliminate from the process all time steps $t$ such that  $i_{t} \in F\setminus F_{t-1}$, and by a slight abuse of notation, we will relabel time indices as if these steps never occurred. Thus, in each time step $t$, we assume that a free site is picked for $i_t$ from $F_{t-1}$ uniformly at random.

Consider now the random variable $\phi_t = |J_{t-1}| - |J_t|$, representing the number of paths cut off in time step $t$. The expectation of $\phi_t$ can be lower-bounded, regardless of the history of the process.

\smallskip
\emph{Claim (*).} $\E [\phi_t | F_{t-1}] > 1/3$, for any $F_{t-1}\neq \emptyset$.

\smallskip
\emph{Proof (of Claim).} Fix forest $F_{t-1}$. We assign to each edge $e \in \bigcup_{j \in J_{t-1}} Q^{[i,j]}$ a weight $w(e) \in \N^+$, given as the number of $j \in J_{t-1}$ such that $P^{[i,j]}$ can be reached from $e$ by a descending path. For all edges $e \in \bigcup_{j \in J_{t-1}} P^{[i,j]}$, we put $w(e) = 0$. Now, if $e$ is chosen as the $t$-th edge in the process, we have $\phi_t = |J_{t-1}| - |J_t| = w(e)$. It follows that:
$$
\E [\phi_t | F_{t-1}] = \frac{\sum_{e\in F_{t-1}} w(e)}{|F_{t-1} \setminus \{e\in F : \rho(e) < i_t \}|} \geq \frac{\sum_{e\in F_{t-1}} w(e)}{|F_{t-1}|} \geq \frac{|J_{t-1}| (r^{[i]} - r^{[i-1]})}{|J_{t-1}| (r^{[i+1]} - r^{[i-1]})} > \frac{1}{3},
$$
which completes the proof of the claim.

\smallskip
Moreover, taking into that $\phi_t$ has bounded range $\phi_t \in [0,l_i]$, and that $\sum_{t\in \N} \phi_t = l_i$, we obtain a concentration result on the number of steps until the stopping of the process ($F_t = \emptyset$); for completeness, we provide a standalone proof.

\smallskip
\emph{Claim (**).} For $T = 12 c l_i$, we have: $\Pr [F_T \neq \emptyset] < 2^{-c}$.

\smallskip
\emph{Proof (of Claim).} We define the non-negative submartingale $\Phi'_T = \sum_{t=1}^T \phi'_t$ as follows. When $F_{t-1}\neq \emptyset$, we choose $\phi'_t \in [0,l_i]$ to be dominated by $\phi_t$, so that $\phi'_t \leq \phi_t$ and $\E [\phi'_t | F_{t-1}] = 1/3$. (The latter condition can always be satisfied by Claim (*)). When $F_{t-1} = \emptyset$, we put $\phi'_t = 1/3$. Observe that when $\Phi'_T > l_i$ for some $T$, we necessarily have $\phi'_t > \phi_t$ for some $t\leq T$, hence $F_{T} =\emptyset$. To lower-bound the probability of the event $\Phi'_T > l_i$, we remark that the bounds $\phi'_t \in [0,l_i]$ and $\E [\phi'_t | F_{t-1}] = 1/3$ imply the following bound on variance of the process: $\sigma^2 [\phi'_t | F_{t-1}] \leq l_i/3$. Now, using a standard martingale bound (cf. e.g.~\cite{concen.pdf}[Thm.~18] applied to the process $X_t = \Phi'_t - t/3$), we obtain for any $\lambda > 0$:
$$
\Pr [ \Phi'_T \leq T/3 - \lambda] \leq \exp\left(\frac{-\lambda^2/2}{T l_i/3 + \lambda l_i/3}\right).
$$
Substituting $T = 12 c l_i$ and $\lambda = 3 c l_i$, we obtain:
$$
\Pr [ \Phi'_T \leq c l_i] \leq e^{-0.9c} < 2^{-c}.
$$
Taking into account that $c\geq 1$ by assumption, the claim follows directly.

\smallskip
Recalling that in each time step $t$ with $F_t \neq \emptyset$, the value of random variable $M^+_i$ increases by at most $1$, we obtain directly from Claim (**) that $\Pr [M^+_i > 12 c l_i] < 2^{-c}$, which completes the proof.
\end{proof}

Next, for $2 |i$, let $C_i \in \N^+$ be a random variable defined as the smallest integer such that $M^+_i \leq 12 C_i l_i$. Since $C_i$ depends only on random values $\rho$ chosen with $L^{*[i-1]} \cup L^{*[i]}$, the random variables $\{C_i\}_{2|i}$ are independent. Moreover, by Lemma~\ref{lem:layer_geometric}, each $C_i$ may be stochastically dominated by a (independent) geometrically distributed random variable $\Gamma_i$ with parameter $p=1/2$. It follows that:
$$
\sum_{\substack{2|i,\\i\leq i_{max}}} C_i \leq \sum_{\substack{2|i,\\i\leq i_{max}}} \Gamma_i \sim \mathrm{NB}(\lfloor i_{max}/2 \rfloor,1/2),
$$
where the parameters of the negative binomial distribution $\mathrm{NB}(r,p)$ represent the number of trials with success probability $p$ until $r$ successes are reached. An application of a rough tail bound for $\mathrm{NB}(\lfloor i_{max}/2 \rfloor,1/2)$ gives:
\begin{equation}
\label{eq:Cbound}
\Pr\left[\sum_{2|i} C_i < 2i_{max} + 4c \ln n\right] > 1- n^{-c}, \text{\quad for any $c>1$}.
\end{equation}
Recalling that $M^+_i = \sum_{j=1}^{l_i}|M^{+[i,j]}| \leq 12 C_i l_i$, we may write by concavity of the logarithm function:
$$
\sum_{j=1}^{l_i} (1 + \ln^\circ |M^{+[i,j]}|) \leq l_i + l_i \ln^\circ \left(\frac{1}{l_i}\sum_{j=1}^{l_i}|M^{+[i,j]}|\right) \leq l_i + l_i \ln (12 C_i) < l_i \ln (33 C_i).
$$
We can therefore bound variable $N_{even}^+$, taking into account its definition \eqref{eq:N}:
\begin{equation}
\label{eq:Nbound}
N_{even}^+  = \sum_{\substack{2|i\\ 1\leq j \leq l_i}} (1+\ln |M^{+[i,j]}|) \leq \sum_{2|i} l_i \ln (33 C_i).
\end{equation}
We now apply a union bound over the two events given by~\eqref{eq:Abound} and~\eqref{eq:Cbound}, which hold w.h.p., From Eq.~\eqref{eq:Abound}, \eqref{eq:Cbound}, and~\eqref{eq:Nbound}, we have that for any $c>1$, the following event holds w.p.\ at least $1 - 2n^{-c}$:
\begin{equation}\label{eq:Nevenfinal}
N_{even}^+ \leq 3c \ln n + \max_{(c_i)}\sum_{\substack{2|i\\i\leq i_{max}}} l_i \ln (33 c_i),
\end{equation}
where $(c_i)_{i\leq i_{max}, 2|i}$ are positive integers satisfying the condition: $\sum_{2|i} c_i < 2i_{max} + 4c \ln n$.

Returning to Eq.~\eqref{eq:Xplus}, with respect to $N_{odd}^+$ an analogous technique gives us that w.p.\ at least $1 - 2n^{-c}$:
\begin{equation}\label{eq:Noddfinal}
N_{odd}^+ \leq 3c \ln n + \max_{(c_i)}\sum_{\substack{2\neq i\\i\leq i_{max}}} l_i \ln (33 c_i),
\end{equation}
where $(c_i)_{i\leq i_{max}, 2\not|\; i}$ are likewise positive integers satisfying the condition: $\sum_{2\not|\; i} c_i < 2i_{max} + 4c \ln n$.

Combining Eq.~\eqref{eq:Xplus} with Eq.~\eqref{eq:Nevenfinal} and~\eqref{eq:Noddfinal} through a union bound (and substituting $\gamma_i := 33c_i$), we eventually obtain that w.p.\ at least $1 - 4n^{-c}$:
\begin{equation}
\label{eq:Xplusfinal}
\sum_{\eta \in E(T^{**}_u)} X^+(\eta) \leq \sum_{r=0}^{19} |\cut_u^{*(r)}| +  6c \ln n + \max_{(\gamma_i)}\sum_{i\leq i_{max}} l_i \ln \gamma_i,
\end{equation}
where $(\gamma_i)_{i\leq i_{max}}$ satisfy the condition:
\begin{equation}
\label{eq:cond_gamma}
\forall_{i} \gamma_i \in \N^+ \quad \text{and} \quad \sum \gamma_i < 132 i_{max} + 264 c \ln n.
\end{equation}

\paragraph{Bounding the Sum of $X^-(\eta)$.}

For the random variables $X^-(\eta)$, the main arguments required to establish the bound are similar to those in the case of $X^+(\eta)$; we confine ourselves to an exposition of the differences. The main difference is that for a path $P^{[i,j]}$, instead of a unique predecessor path $Q^{[i,j]}$ in layer $L^{*[i-1]}$, we now have to deal with multiple possible descendant paths in layer $L^{*[i+1]}$; on the other hand, the outward-branching structure of the tree means that we can show tighter concentration bounds in this case.

We recall that $\mathcal{P}^{[i]}$ is the set of descending paths $P$ in $T_u$ which stretch precisely between the endpoints of the $i$-th layer, and for $1 \leq j \leq l_i$, we denote by $\mathcal R^{[i,j]}$ the set of all paths in $\mathcal{P}^{[i+1]}$ which are extensions of $P^{[i,j]}$, i.e., $\mathcal R^{[i,j]} \subseteq \mathcal{P}^{[i+1]}$ and for all $R \in \mathcal R^{[i,j]}$, we have $ P^{[i,j]} \subseteq P_u(R^+)$.  For the event $X^-(\eta)=1$ to hold, it is now necessary that two conditions are jointly fulfilled: $\eta$ must satisfy the \emph{suffix minimum} condition on the path $P^{[i,j]}$:
\begin{equation}\label{eq:suffix_min}
\eta = \arg\min_{e \in P^{[i,j]} \setminus P_u(\eta^+)} \rho(e),
\end{equation}
and moreover, $\rho(\eta) < \max_{R \in \mathcal R^{[i,j]}} \min \rho(R)$. We next denote by $M^{-[i,j]} \subseteq P^{[i,j]}$ the set of all edges $\eta \in P^{[i,j]}$ satisfying $\rho(\eta) <  \max_{R \in \mathcal R^{[i,j]}} \min \rho(R)$. We further denote by $\eta^{-[i,j,k]}$ the $k$-th edge in $M^{-[i,j]}$, when ordering edges of $M^{-[i,j]}$ by decreasing distance to the root $u$, $1\leq k \leq |M^{-[i,j]}|$. Finally, we denote by $Z^{-[i,j,k]} \in \{0,1\}$ the indicator random variable for the event that ``edge $\eta^{-[i,j,k]}$ satisfies the suffix minimum condition~\eqref{eq:suffix_min} on path $P^{[i,j]}$''.

The subsequent analysis proceeds as before, and we obtain direct analogues of Eq.~\eqref{eq:Xplus}, \eqref{eq:N}, and~\eqref{eq:Abound}, replacing superscripts ``$+$'' of all random variables by ``$-$''.

We next denote $M^-_i := \sum_{j=1}^{l_i}|M^{-[i,j]}|$, for $i\geq 2$, and obtain the following analogue of Lemma~\ref{lem:layer_geometric}.
\begin{lemma}\label{lem:layer_geometric_suffix}
Fix $i \geq 2$. Then, for any $c\geq 1$:
$$
\Pr \left[M^-_i > 12 c l_i \right] < e^{-4 c l_i} < 2^{-c}.
$$
\end{lemma}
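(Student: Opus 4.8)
The plan is to transcribe the proof of Lemma~\ref{lem:layer_geometric}, substituting for the single predecessor path $Q^{[i,j]}$ the collection of descendant paths $\mathcal{R}^{[i,j]}$. I would again fix the subforest $F := \bigcup_{j=1}^{l_i} P^{[i,j]} \cup \bigcup_{j=1}^{l_i}\bigcup_{R\in\mathcal{R}^{[i,j]}} R$, which now lives within layers $L^{*[i]}$ and $L^{*[i+1]}$, and couple the values $\rho$ on $F$ with the two-stage process: fix $I = \rho(F)$, then assign values in increasing order to uniformly random free sites. With this coupling, $|M^{-[i,j]}|$ is exactly the number of values placed into $P^{[i,j]}$ before the step at which the \emph{last} of the paths $R\in\mathcal{R}^{[i,j]}$ receives its first edge (since $\min\rho(R)$ is revealed at the first touch of $R$, so $\max_{R}\min\rho(R)$ is the value at that last first-touch); I call that step the moment $P^{[i,j]}$ is \emph{cut off}. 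I keep $J_t$ for the set of surviving indices and $F_t$ for the surviving subforest --- now consisting, for each surviving $j$, of the yet-unrevealed edges of $P^{[i,j]}$ together with the edges of every \emph{untouched} extension $R\in\mathcal{R}^{[i,j]}$ --- and eliminate steps whose value lands outside $F_{t-1}$ (these do not change $M^-_i$), relabelling time exactly as in Lemma~\ref{lem:layer_geometric}.

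The structural replacement of Claim~(*) is the following: while path $j$ is alive it has at least one completely untouched extension, of length $r^{[i+2]}-r^{[i+1]}$, whereas the unrevealed part of its own block $P^{[i,j]}$ has length at most $|P^{[i,j]}|\le r^{[i+1]}-r^{[i]}$. A computation identical to the one behind Claim~(*) --- using $r^{[i]}=\lceil 15(16/15)^i\rceil$ --- shows $\frac{r^{[i+1]}-r^{[i]}}{(r^{[i+1]}-r^{[i]})+(r^{[i+2]}-r^{[i+1]})}<\tfrac12$, and since $F_{t-1}$ is the disjoint union over surviving paths of these blocks, at every non-eliminated step the probability that the value falls on a surviving $P$-edge (the only event that increments $M^-_i$) is at most a constant $\beta<\tfrac12$. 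Every non-eliminated step is therefore either a $P$-reveal or the first touch of an extension, and the process halts only once all $\sum_{j}|\mathcal{R}^{[i,j]}|$ extensions have been touched. Hence, conditioned on the coupling, $M^-_i$ is dominated by the number of ``heads'' (each with conditional probability $\le\beta<\tfrac12$) in a sequence of Bernoulli trials whose number of ``tails'' is forced to reach $\sum_j|\mathcal{R}^{[i,j]}|$; the submartingale bound used for Claim~(**) --- applied now to the running count of extension-first-touches, which must attain $\sum_j|\mathcal{R}^{[i,j]}|$ --- then gives exponential concentration of $M^-_i$, and, because the margin here is $1-\beta>\tfrac12$ rather than the weaker margin available in the prefix case, the tail improves from $2^{-c}$ to $e^{-4cl_i}$ after the constants are chased.

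The step I expect to be the main obstacle is closing the gap between the ``number of extensions that must be touched'', $\sum_j|\mathcal{R}^{[i,j]}|$, and the quantity $l_i$ appearing in the statement: the submartingale argument controls $M^-_i$ directly in terms of $\sum_j|\mathcal{R}^{[i,j]}|$, and one must show this sum is itself $O(l_i)$ (with an exponential tail to match) in the regime that matters. Here one exploits, as in Lemma~\ref{lem:path_decomposition}, that every $P^{[i,j]}\subseteq T^{**}_u$ reaches by the reach inequality to radius $r^{[i+2]}$ and beyond, that every layer-$(i+1)$ thread of $T^*_u$ in turn reaches radius $r^{[i+3]}$ (so the number of such threads is at most $\sk$ by the width bound at that deeper cut), and that each thread extends exactly one $P^{[i,j]}$; together with the genuine independence of the variables $\{|M^{-[i,j]}|\}_j$ --- which does hold in the suffix case, the blocks $P^{[i,j]}\cup\bigcup_{R\in\mathcal{R}^{[i,j]}}R$ being pairwise edge-disjoint, unlike the $Q$-sharing of the prefix case --- this yields a per-block geometric tail on $|M^{-[i,j]}|$ that, summed, gives the claimed bound. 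The remainder of the proof --- the coupling, the elimination of steps, the submartingale tail estimate --- is a direct transcription of the proof of Lemma~\ref{lem:layer_geometric}.
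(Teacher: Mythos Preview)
Your overall architecture matches the paper's proof closely: the same forest $F = \bigcup_j P^{[i,j]} \cup \bigcup_j \bigcup_{R\in\mathcal{R}^{[i,j]}} R$, the same two-stage coupling, the same elimination of steps landing outside $F_{t-1}$, and a submartingale tail bound. The paper tracks the set $K_t$ of surviving extension paths $R_k$ (so $\phi_t = |K_{t-1}| - |K_t| \in \{0,1\}$ with $\E[\phi_t \mid F_{t-1}] \geq 1/2$) and invokes Azuma--McDiarmid directly; your ``heads/tails'' formulation is simply the complementary view of the same process.

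Where your proposal breaks down is precisely at the ``main obstacle'' you flag. Two of the ingredients you invoke to close it are incorrect:
\begin{itemize}
\item The blocks $P^{[i,j]} \cup \bigcup_{R\in\mathcal{R}^{[i,j]}} R$ are \emph{not} pairwise edge-disjoint in general. Whenever $L^{**[i]}$ contains an internal branching point (a node of degree $>2$ in $T^{**}_u$), the path $P^{[i,j_1]}$ above that point and any path $P^{[i,j_2]}$ below it satisfy $\mathcal{R}^{[i,j_2]} \subseteq \mathcal{R}^{[i,j_1]}$, since every $R$ extending $P^{[i,j_2]}$ a fortiori extends $P^{[i,j_1]}$. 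Hence the variables $\{|M^{-[i,j]}|\}_j$ are not independent, and your per-block summation does not go through.
\item The claim that ``every layer-$(i+1)$ thread of $T^*_u$ reaches radius $r^{[i+3]}$'' is unjustified. A point $v \in V^*_u$ at distance $r^{[i+2]}$ satisfies $\reach_{T_u}(v) \geq r^{[i+2]}/2$, but this is reach in $T_u$; it does not force the descendants of $v$ to remain in $T^*_u$ out to $r^{[i+3]}$. Lemma~\ref{lem:path_decomposition} does not help either, since it controls paths in $T^{**}_u$, not arbitrary threads of $T^*_u$.
\end{itemize}
So while the skeleton of your argument coincides with the paper's, the mechanism you propose for bounding $\sum_j |\mathcal{R}^{[i,j]}|$ by $O(l_i)$ does not work as written.

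It is worth noting that the paper's own proof is extremely terse at exactly this point: it records ``$\sum_{t\in\N}\phi_t \le l_i$'' (which amounts to $|\mathcal{R}^{[i]}|\le l_i$) without further comment and proceeds to the martingale bound. Your instinct that this step carries real content is therefore well placed; but your proposed justification for it is not correct.
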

\begin{proof}
The proof follows along similar lines as that of Lemma~\ref{lem:layer_geometric}.

For fixed $i$, let $\mathcal{R}^{[i]} := \bigcup_{j=1}^{l_i}\mathcal{R}^{[i,j]}$. We consider the edge set $F := \bigcup_{j=1}^{l_i} P^{[i,j]} \cup \mathcal{R}^{[i]}$ (forming a subforest of $T^*_u$, contained entirely within layers $L^{*[i]}$ and $L^{*[i+1]}$). Choose arbitrarily the set $I$ of (necessarily distinct) values of $\rho$ which appear within $F$, $I = \rho(F)$. Let  $I = \{i_1,i_2,\ldots, i_{|F|}$\}, with $i_1 \leq i_2 \leq \ldots \leq i_{|F|}$.

As in the proof of Lemma~\ref{lem:layer_geometric}, we couple the sampling of values $\rho$ on $F$ with the following two-stage process. First, we fix set $I$. Then, given a choice of $I$, we select a uniformly random permutation to perform the assignment of values from $I$ to edges in $F$ ($|I|=|F|$). The latter permutation is defined iteratively, by assigning to successive values $i_t$, $1\leq t \leq |F|$, an as yet unoccupied edge (site) from $F$. The value of $|M^{-[i,j]}|$ is given as the number of elements of $I$ which are placed in sites from $P^{[i,j]}$ before the smallest index $t\in \N$, such that for all paths $R\in \mathcal{R}^{[i,j]}$, there exists $t' < t$ such that $i_{t'} \in \rho(R)$. We will refer to the index $t$ as representing moments of time.

Let $\mathcal{R}^{[i]} = \{R_1, \ldots, R_{|\mathcal{R}^{[i]}|}$. We say that a path $R_k \in \mathcal{R}^{[i]}$ was \emph{cut off} at time $t$ if $t$ is the smallest time such that $i_t \in \rho (R_k)$. For successive moments of time $t \in N$, we denote by $K_t$ the set of surviving indices $k$ of paths $R_k$ which have not been cut off at time $t$. We obtain subforest $F_t \subseteq F$ by restricting $F$ to its surviving part, where we treat a path $P^{[i,j]}$ if it extends to at least one surviving path $R_k$:
$$
F_t = \bigcup_{k \in K_t} R_k \cup \bigcup_{\substack{j :\\ \exists_{k \in K_t}\ P^{[i,j]} \subseteq P_u(R_k^+)}} P^{[i,j]}.
$$
Exactly as in the proof of Lemma~\ref{lem:layer_geometric},  we will consider how the random variable $M^-_i$ increases over time, until $F_t = \emptyset$. We again couple our sampling process by first deciding in each time step whether to place $i_t$ in forest $F\setminus F_{t-1}$ or in forest $F_{t-1}$, and only afterwards fixing for $i_t$ a specific free site with uniform probability within the chosen subforest. Observe that if $i_{t}$ is placed $F\setminus F_{t-1}$, then in the given step of the considered process, the value of $M^-_i$ remains unchanged at time $t$. We will thus eliminate from the process all time steps $t$ such that  $i_{t} \in F\setminus F_{t-1}$, and by a slight abuse of notation, we will relabel time indices as if these steps never occurred. Thus, in each time step $t$, we assume that a free site is picked for $i_t$ from $F_{t-1}$ uniformly at random.

Consider now the random variable $\phi_t = |K_{t-1}| - |K_t|$, representing the number of paths $R$ cut off in time step $t$. The expectation of $\phi_t$ can be lower-bounded, regardless of the history of the process.

\smallskip
\emph{Claim (*).} $\E [\phi_t | F_{t-1}] \geq 1/2$, for any $F_{t-1}\neq \emptyset$.

\smallskip
\emph{Proof (of Claim).} Fix forest $F_{t-1}$. When inserting $i_t$, the number of free sites in layer $L^{*[i+1]}$ is at least:
$$
\sum_{k \in K_{t-1}} |R_k| \geq |K_{t-1}| (r^{[i+1]} - r^{[i]}).
$$
On the other hand, since each surviving path $P^{[i,j]}$ extends to some surviving path $R_k$, the total number of free sites for insertion of $i_t$ is upper-bounded by $ |K_{t-1}| (r^{[i+1]} - r^{[i-1]}$. Since insertion of $i_t$ into layer $L^{*[i]}$ means that $|K_t| =  |K_{t-1}|$, and insertion of $i_t$ into layer $L^{*[i+1]}$ means that  $|K_t| =  |K_{t-1}| -1$, we obtain:
$$
\E [\phi_t | F_{t-1}] \geq \frac{|K_{t-1}| (r^{[i+1]} - r^{[i]})}{|K_{t-1}| (r^{[i+1]} - r^{[i-1]})} \geq \frac{1}{2},
$$
which completes the proof of the claim.

\smallskip
Moreover, taking into that $\phi_t$ has bounded range $\phi_t \in \{0,1\}$, and that $\sum_{t\in \N} \phi_t \leq l_i$, we obtain a concentration result on the number of steps until the stopping of the process ($F_t = \emptyset$) directly from the Azuma-McDiarmid martingale inequality (cf.~e.g.~\cite{concen.pdf}[Thm.~16],~\cite{McDiarmid}). For parameter $T = 12 c l_i$, we obtain after some transformations:
$$
\Pr [F_{12 c l_i} \neq \emptyset] < e^{-4 c l_i}.
$$
Recalling that in each time step $t$ with $F_t \neq \emptyset$, the value of random variable $M^-_i$ increases by at most $1$, we obtain directly that $\Pr [M^-_i > 12 c l_i] < e^{-4 c l_i}$, which completes the proof.
\end{proof}

The rest of the argument proceeds as for the case of $X^+$, applying Lemma~\eqref{lem:layer_geometric_suffix} in place of Lemma~\eqref{lem:layer_geometric}. We eventually obtain the following analogue of Eq.~\eqref{eq:Xplus}: for any $c>1$, w.p.\ at least $1 - 4n^{-c}$:
\begin{equation}
\label{eq:Xminusfinal}
\sum_{\eta \in E(T^{**}_u)} X^-(\eta) \leq \sum_{r=0}^{19} |\cut_u^{*(r)}| +  6c \ln n + \max_{(\gamma_i)}\sum_{i\leq i_{max}} l_i \ln \gamma_i,
\end{equation}
where $(\gamma_i)_{i\leq i_{max}}$ satisfy condition~\eqref{eq:cond_gamma}.

\paragraph{Combining Bounds.} Introducing the bounds of Eq.~\eqref{eq:Xplusfinal} and~\eqref{eq:Xminusfinal} into Eq.~\eqref{eq:StoX} through a union bound we obtain the following statement: For any $c>1$, w.p.\ at least $1 - 8n^{-c}$:
\begin{equation}
\label{eq:Xdescriptive}
|S(u)| \leq 2\sum_{r=0}^{19} |\cut_u^{*(r)}| +  12c \ln n + 2 \max_{(\gamma_i)}\sum_{i\leq i_{max}} l_i \ln \gamma_i,
\end{equation}
where $(\gamma_i)_{i\leq i_{max}}$ satisfy condition~\eqref{eq:cond_gamma}.

Now, recalling the bounds on $l_i$ from Lemma~\ref{lem:path_decomposition}, the bound $i_{max} < 16 \ln \diam$, setting $c=2$, and applying a union bound over all vertices $u$, we obtain the main technical result of the Section. We present it first in its strongest form, and then provide two more useful corollaries.

\begin{theorem}\label{pro:technical}
With probability at least $1 - 8/n$, all nodes $u\in V$ satisfy the following bound on hub set size:
\begin{equation}\label{eq:technical}
S(u) \leq 2\sum_{r=0}^{19} |\cut_u^{*(r)}| +  24 \ln n + 2 \max_{(\gamma_i)}\sum_{i=1,2,\ldots,\lfloor 16 \ln \diam\rfloor}  l_i \ln \gamma_i,
\end{equation}
where $l_i \leq 2\min_{r \in [r^{[i+1]},r^{[i+2]}]} |\cut_u^{*(r)}|$ with $r^{[i]}= \left\lceil 15 \left( \frac{16}{15}\right)^i\right\rceil$, and the maximum is taken over positive integers $(\gamma_i)$ satisfying the condition: $\sum \gamma_i < 2112 \ln \diam + 528 \ln n$.
\end{theorem}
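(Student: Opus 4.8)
The plan is to obtain Theorem~\ref{pro:technical} as a consolidation of the per-node estimate already derived in Eq.~\eqref{eq:Xdescriptive}, specializing the free parameter $c$ and converting the remaining quantities into the explicit form stated in~\eqref{eq:technical}. Recall that Eq.~\eqref{eq:Xdescriptive} asserts that for a \emph{fixed} $u\in V$ and any $c>1$, with probability at least $1-8n^{-c}$ we have $|S(u)| \leq 2\sum_{r=0}^{19} |\cut_u^{*(r)}| + 12c\ln n + 2 \max_{(\gamma_i)}\sum_{i\leq i_{max}} l_i \ln \gamma_i$, where $(\gamma_i)$ ranges over positive integers satisfying condition~\eqref{eq:cond_gamma}, i.e.\ $\sum\gamma_i < 132 i_{max} + 264 c \ln n$. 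The first step is simply to set $c=2$: this makes the individual failure probability at most $8n^{-2}$, turns the additive term $12c\ln n$ into $24\ln n$, and turns the constraint on $(\gamma_i)$ into $\sum\gamma_i < 132 i_{max} + 528\ln n$.

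The second step substitutes the structural bounds established earlier in the section. From the ``Partition of $T^*_u$ into Layers'' paragraph we have $i_{max} < \log_{16/15}\diam < 16\ln\diam$, so the (random-choice-independent) summation range $i\leq i_{max}$ may be replaced by the fixed range $i=1,\ldots,\lfloor 16\ln\diam\rfloor$, and the constraint becomes $\sum\gamma_i < 132\cdot 16\ln\diam + 528\ln n = 2112\ln\diam + 528\ln n$, as claimed. I would then quote Lemma~\ref{lem:path_decomposition} for the admissible values of $l_i$, namely $l_i \leq 2\min_{r\in[r^{[i+1]},r^{[i+2]}]}|\cut_u^{*(r)}|$, together with the explicit radii $r^{[i]}=\lceil 15(16/15)^i\rceil$. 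Finally, a union bound over the $n=|V|$ nodes gives that with probability at least $1 - n\cdot 8n^{-2} = 1 - 8/n$ the bound~\eqref{eq:technical} holds simultaneously for all $u\in V$; this union bound is legitimate precisely because, as remarked in the derivation of Eqs.~\eqref{eq:Xplusfinal} and~\eqref{eq:Xminusfinal}, the combinatorial skeleton of the estimate (the layer decomposition, the path partition $L^{**[i]}=\bigcup_j P^{[i,j]}$, the enumerations $\eta^{+[i,j,k]}$, $\eta^{-[i,j,k]}$, and hence the index ranges of all the sums) is a deterministic function of the trees $\{T_v\}$, so the underlying events are well defined before $\rho$ is sampled and can be intersected over all $u$ at an $n$-fold cost in failure probability.

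There is no deep obstacle at this stage: all the probabilistic heavy lifting was done in Lemmas~\ref{lem:path_decomposition}, \ref{lem:layer_geometric}, and \ref{lem:layer_geometric_suffix} and in the Chernoff/negative-binomial tail arguments feeding Eqs.~\eqref{eq:Xplusfinal} and~\eqref{eq:Xminusfinal}, so the ``hard part'' here is purely bookkeeping --- propagating the constants ($12\cdot 2 = 24$, $132\cdot 16 = 2112$, $264\cdot 2 = 528$) and keeping the quantifier over $(\gamma_i)$ aligned with condition~\eqref{eq:cond_gamma}. The one caveat worth a sentence is the regime of very small $\diam$, where the layered sum is empty or near-empty; there the right-hand side of~\eqref{eq:technical} degenerates to the additive $2\sum_{r=0}^{19}|\cut_u^{*(r)}| + 24\ln n$ term and the statement holds trivially, so no separate treatment is needed.
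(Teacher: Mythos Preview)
Your proposal is correct and follows exactly the paper's own route: starting from Eq.~\eqref{eq:Xdescriptive}, setting $c=2$, invoking the bound $i_{max}<16\ln\diam$ and Lemma~\ref{lem:path_decomposition}, and union-bounding over the $n$ nodes. The constant bookkeeping ($24$, $2112$, $528$) matches, and the paper's one-line justification before the theorem statement is precisely the argument you spell out.
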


We provide two more convenient corollaries of Theorem~\ref{pro:technical}. For the case when the considered trees $T_u$ are close to scale-free, we simply bound the size of all cuts $\cut_u^{*(r)}$ through skeleton dimension: $|\cut_u^{*(r)}| \leq \sk$. Bound~\eqref{eq:technical} then takes the asymptotic form, for $i_{max} = \lfloor 16 \ln \diam \rfloor$:
\begin{equation}
\label{eq:su_bound}
S(u) \leq O(\sk) +  O(\log n) + O(\sk) \max_{(\gamma_i)}\sum_{i \leq i_{max}}  \ln \gamma_i,
\end{equation}
where the latter sum can be bounded using the concavity of the logarithm function as:
\begin{align*}
\max_{(\gamma_i)}\sum_{i \leq i_{max}} \ln \gamma_i &\leq i_{max} \max_{(\gamma_i)} \ln \left(\frac{1}{i_{max}}\sum_{i \leq i_{max}} \gamma_i\right) \leq i_{max} O\left(\max\left\{1,\log \frac{\log n}{\log \diam}\right\}\right) \leq\\
&\leq O\left(\log \diam \max\left\{1,\log \frac{\log n}{\log \diam}\right\}\right).
\end{align*}
We also observe the following link between the parameters $\sk$, $\diam$, and $n$. Since by Proposition~\ref{pro:doubling} graph $G$ has doubling dimension bounded by $2\sk + 1$, it follows that a radius-$\diam$ ball in $G$ may only contain at most $(2\sk+1)^{\lceil\log_2 \diam\rceil}$ nodes from $V$. Hence, $(2\sk+1)^{\lceil\log_2 D\rceil} \geq n$, and we obtain:
\begin{equation}\label{eq:nDk}
\log n = O(\log \sk \cdot \log \diam).
\end{equation}
Thus, the $O(\log n)$ additive factor in the bound~\eqref{eq:su_bound} on $S(u)$ is dominated in notation by the last factor of the sum, which is stated as at least $O(\sk \log \diam$).

Combining the above, we obtain the following corollary.
\begin{corollary}\label{cor:bound1}
With probability at least $1 - O(1/n)$, the hub set size of every node is bounded by: $$O\left(\sk \log \diam \max\left\{1,\log \frac{\log n}{\log \diam}\right\}\right).$$
\end{corollary}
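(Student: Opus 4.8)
The plan is to obtain the corollary as a direct simplification of Theorem~\ref{pro:technical}, by replacing every cut-size quantity with the skeleton dimension and then solving the residual $\max_{(\gamma_i)}$ optimization in closed form. Concretely, I would start from bound~\eqref{eq:technical} and use $|\cut_u^{*(r)}| \le \width(T_u^*) \le \sk$, which holds for all $r$ and all $u \in V$. This turns the prefix sum $2\sum_{r=0}^{19}|\cut_u^{*(r)}|$ into $O(\sk)$, and, since Theorem~\ref{pro:technical} guarantees $l_i \le 2\min_r |\cut_u^{*(r)}| \le 2\sk$, it also lets me factor $O(\sk)$ out of the last term, leaving
\[
|S(u)| \le O(\sk) + O(\log n) + O(\sk)\cdot \max_{(\gamma_i)} \sum_{i \le i_{max}} \ln \gamma_i,
\]
where $i_{max} = \lfloor 16 \ln \diam \rfloor$ and $(\gamma_i)$ ranges over positive integer tuples with $\sum_i \gamma_i < 2112 \ln \diam + 528 \ln n$.

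Next I would bound the optimization term using concavity of $\ln$ (Jensen's inequality): for any feasible $(\gamma_i)$,
\[
\sum_{i \le i_{max}} \ln \gamma_i \le i_{max} \ln\!\left( \frac{1}{i_{max}} \sum_i \gamma_i \right).
\]
Feeding in $i_{max} = \lfloor 16 \ln \diam \rfloor$ and $\sum_i \gamma_i < 2112 \ln \diam + 528 \ln n$ gives $\frac{1}{i_{max}}\sum_i \gamma_i = O\!\left(1 + \frac{\log n}{\log \diam}\right)$, hence the right-hand side is $O\!\left(\log \diam \cdot \max\left\{1, \log \frac{\log n}{\log \diam}\right\}\right)$, using $\ln(1+x) = O(\max\{1,\log x\})$. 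Substituting back,
\[
|S(u)| \le O(\sk) + O(\log n) + O\!\left(\sk \log \diam \cdot \max\left\{1, \log \frac{\log n}{\log \diam}\right\}\right).
\]

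Finally I would absorb the first two additive terms into the third. The $O(\sk)$ term is trivially dominated since the third term is $\Omega(\sk \log \diam)$. For the $O(\log n)$ term I would invoke~\eqref{eq:nDk}, namely $\log n = O(\log \sk \cdot \log \diam)$; this inequality in turn follows from Proposition~\ref{pro:doubling}, because a radius-$\diam$ ball then contains at most $(2\sk+1)^{\lceil \log_2 \diam \rceil}$ terminals, which must be at least $n$. Since $\log \sk \le \sk$, we get $\log n = O(\sk \log \diam)$, again dominated by the third term. Therefore $|S(u)| = O\!\left(\sk \log \diam \cdot \max\left\{1, \log \frac{\log n}{\log \diam}\right\}\right)$ for every $u \in V$, and the whole derivation takes place on the event of Theorem~\ref{pro:technical}, whose probability is at least $1 - 8/n = 1 - O(1/n)$.

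The main point is that all the real difficulty has already been discharged into Theorem~\ref{pro:technical}, so what remains is a bounded amount of arithmetic. The only steps needing some attention are the Jensen estimate --- where one should note that the worst feasible integer tuple $(\gamma_i)$ spreads its mass as evenly as possible because the only active constraint is on $\sum_i \gamma_i$, so the concavity bound is tight up to constants --- and the bookkeeping verifying, via~\eqref{eq:nDk}, that the additive $O(\log n)$ slack is genuinely absorbed rather than asymptotically significant.
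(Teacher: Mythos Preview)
Your proposal is correct and follows essentially the same route as the paper: bound all cut sizes by $\sk$ in~\eqref{eq:technical}, apply Jensen to the $\sum_i \ln\gamma_i$ term, and absorb the $O(\log n)$ slack via~\eqref{eq:nDk} (derived from Proposition~\ref{pro:doubling}). The paper's argument is laid out in the paragraphs between Theorem~\ref{pro:technical} and the corollary, and your write-up matches it step for step.
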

In particular, when the graph has sufficiently large diameter, $\diam = n^{\Omega(1)}$, we have that the hub set size of all nodes is bounded by $O(\sk \log \diam)$. For the general case, by introducing Eq.~\eqref{eq:nDk} into Corollary~\ref{cor:bound1}, we obtain the following statement.
\begin{corollary}\label{cor:klabelingworstcase}
With probability at least $1 - O(1/n)$, the hub set size of every node is bounded by: $$O\left(\sk \log \log \sk \log \diam\right).$$
\end{corollary}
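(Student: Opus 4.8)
The plan is to derive Corollary~\ref{cor:klabelingworstcase} directly from Corollary~\ref{cor:bound1} by controlling the ratio $\frac{\log n}{\log \diam}$ through the doubling dimension bound. First I would recall that Corollary~\ref{cor:bound1} already gives, with probability $1 - O(1/n)$, a uniform bound on all hub set sizes of the form $O\!\left(\sk \log \diam \max\{1, \log \tfrac{\log n}{\log \diam}\}\right)$, so it suffices to show that the quantity $\log \tfrac{\log n}{\log \diam}$ is $O(\log \log \sk)$, and then to simplify the resulting expression.

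For the key estimate I would invoke Proposition~\ref{pro:doubling}, which states that a graph of skeleton dimension $\sk$ has doubling dimension at most $2\sk+1$. A standard volume-growth argument for doubling metrics then shows that a ball of radius $\diam$ in $G$ contains at most $(2\sk+1)^{\lceil \log_2 \diam \rceil}$ terminal nodes; since $G$ itself has diameter $\diam$, this ball is all of $V$, so $(2\sk+1)^{\lceil \log_2 \diam \rceil} \geq n$, i.e.\ $\log n = O(\log \sk \cdot \log \diam)$, which is exactly Eq.~\eqref{eq:nDk}. Dividing through by $\log \diam$ gives $\frac{\log n}{\log \diam} = O(\log \sk)$, hence $\log \frac{\log n}{\log \diam} = O(\log \log \sk)$.

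Substituting this into the bound of Corollary~\ref{cor:bound1} yields $O\!\left(\sk \log \diam \max\{1, \log \log \sk\}\right)$, which I would then simplify to $O(\sk \log \log \sk \log \diam)$. The one point requiring a moment's care is the $\max\{1, \cdot\}$ term: for $\sk$ above a fixed constant we have $\log \log \sk \geq 1$, so the maximum is absorbed, while for smaller $\sk$ the bound of Corollary~\ref{cor:bound1} already reads $O(\sk \log \diam)$, which is trivially dominated by $O(\sk \log \log \sk \log \diam)$ once the hidden constant is allowed to accommodate the bounded range of $\sk$.

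I do not expect a genuine obstacle here---the statement is a corollary in the literal sense, obtained by inserting one displayed inequality into another. If anything, the most delicate ingredient is the volume-growth estimate for doubling metrics that underpins Eq.~\eqref{eq:nDk}, together with the mild implicit assumption (harmless under the paper's integer-length conventions) that $\diam \geq 2$, so that $\log_2 \diam$ and the logarithms above are well-behaved.
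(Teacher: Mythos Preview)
Your proposal is correct and follows essentially the same route as the paper: the authors also obtain Corollary~\ref{cor:klabelingworstcase} by substituting Eq.~\eqref{eq:nDk} (derived via Proposition~\ref{pro:doubling} exactly as you describe) into the bound of Corollary~\ref{cor:bound1}. Your handling of the $\max\{1,\cdot\}$ term and the small-$\sk$ edge case is a reasonable elaboration of what the paper leaves implicit.
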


When considering the case of trees in which the width of tree $T^*_u$ is far from uniform over different scales of distance, tighter bounds are obtained by relating the size of $S(u)$ to the integrated skeleton dimension $\isk(u)$. To do this, we apply in Eq.~\eqref{eq:technical} the rough bound: $\ln \gamma_i < \ln \sum_i \gamma_i = O(\log \log n + \log \log \diam)$. This leaves us with an expression of the form:
$$
S(u) \leq O(\isk(u)) +  O(\log n) + O(\log \log n + \log \log \diam) \sum_{i\leq i_{max}}  l_i \leq O(\log n + \isk(u) (\log \log n + \log \log \diam)),
$$
where we used the bound $\sum_{i\leq i_{max}}  l_i \leq O(\isk(u))$, which follows easily from the definition of the parameter $\isk$.

\begin{corollary}\label{cor:isk}
With probability at least $1 - O(1/n)$, the hub set size of every node $u\in V$ is bounded by $O(\log n + \isk(u) (\log \log n + \log \log \diam)).$
\end{corollary}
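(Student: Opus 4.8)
The plan is to obtain Corollary~\ref{cor:isk} as a direct consequence of Theorem~\ref{pro:technical}, by substituting two elementary estimates into bound~\eqref{eq:technical} and then invoking the union bound already contained in that theorem. Since Theorem~\ref{pro:technical} guarantees that~\eqref{eq:technical} holds simultaneously for all $u\in V$ with probability at least $1-8/n = 1-O(1/n)$, it suffices to show that the right-hand side of~\eqref{eq:technical} is $O\big(\log n + \isk(u)(\log\log n + \log\log\diam)\big)$ deterministically, term by term. The middle summand $24\ln n$ is already $O(\log n)$. For the first summand I would use the series form of the integrated skeleton dimension from~\eqref{eq:isk_def}: for $1\le r\le 19$ we have $|\cut_u^{*(r)}|\le 19\,|\cut_u^{*(r)}|/r$, so $\sum_{r=0}^{19}|\cut_u^{*(r)}| \le |\cut_u^{*(0)}| + 19\sum_{r\ge1}|\cut_u^{*(r)}|/r \le 1 + 19\,\isk(u) = O(\isk(u)) + O(1)$, and the residual $O(1)$ is absorbed into $O(\log n)$.

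The substantive part is bounding the last summand $\max_{(\gamma_i)}\sum_{i\le i_{max}} l_i\ln\gamma_i$, where the maximum ranges over positive integers $(\gamma_i)$ satisfying $\sum\gamma_i < 2112\ln\diam + 528\ln n$. I would first replace $\ln\gamma_i$ by the uniform upper bound $\ln\gamma_i \le \ln\sum_i\gamma_i < \ln(2112\ln\diam + 528\ln n) = O(\log\log n + \log\log\diam)$, which is valid for every admissible choice of $(\gamma_i)$ and hence for the maximizer. This reduces the task to proving that $\sum_{i\le i_{max}} l_i = O(\isk(u))$.

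For this last estimate I would argue as follows. By Lemma~\ref{lem:path_decomposition}, $l_i \le 2\min_{r\in[r^{[i+1]},r^{[i+2]}]}|\cut_u^{*(r)}|$, so for every integer $r$ in the half-open interval $[r^{[i+1]},r^{[i+2]})$ we have $|\cut_u^{*(r)}|\ge l_i/2$ and $1/r \ge 1/r^{[i+2]}$. The intervals $[r^{[i+1]},r^{[i+2]})$, $i\ge 1$, are pairwise disjoint, so summing gives
$$
\isk(u) \;=\; \sum_{r\ge1}\frac{|\cut_u^{*(r)}|}{r} \;\ge\; \sum_{i\ge1}\ \sum_{r\in[r^{[i+1]},r^{[i+2]})}\frac{l_i/2}{r^{[i+2]}} \;=\; \sum_{i\ge1}\frac{l_i}{2}\cdot\frac{r^{[i+2]}-r^{[i+1]}}{r^{[i+2]}}.
$$
It then remains to check that $\frac{r^{[i+2]}-r^{[i+1]}}{r^{[i+2]}}$ is bounded below by a positive absolute constant. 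Since $15(16/15)^{i+1}-15(16/15)^{i} = (16/15)^i\ge1$, the radii $r^{[i]}=\lceil 15(16/15)^i\rceil$ are strictly increasing, so the numerator is always at least $1$; more precisely $r^{[i+2]}-r^{[i+1]} \ge 15(16/15)^{i+2} - 15(16/15)^{i+1} - 1 = (16/15)^{i+1}-1$, while $r^{[i+2]} \le 15(16/15)^{i+2}+1$, and an easy case split according to whether $(16/15)^{i+1}$ is below or above $2$ shows the ratio is always at least an absolute positive constant. Hence $\sum_{i\ge1}l_i = O(\isk(u))$, so $\max_{(\gamma_i)}\sum_{i\le i_{max}} l_i\ln\gamma_i = O\big(\isk(u)(\log\log n + \log\log\diam)\big)$, and adding the three bounds yields $S(u) = O\big(\log n + \isk(u)(\log\log n + \log\log\diam)\big)$ on the high-probability event of Theorem~\ref{pro:technical}, which proves the corollary. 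The only mildly fussy point is this last one — that the ceiling operation in the definition of the $r^{[i]}$ does not spoil the constant-factor geometric spacing for the smallest few indices $i$ — but it is settled by the elementary gap bound above; everything else is bookkeeping on top of Theorem~\ref{pro:technical}.
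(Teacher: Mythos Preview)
Your proposal is correct and follows essentially the same route as the paper: start from Theorem~\ref{pro:technical}, replace each $\ln\gamma_i$ by the uniform bound $\ln\sum_i\gamma_i = O(\log\log n + \log\log\diam)$, and then use $\sum_{i\le i_{max}} l_i = O(\isk(u))$ together with the trivial estimate $\sum_{r\le 19}|\cut_u^{*(r)}| = O(\isk(u))$. The only difference is that the paper asserts $\sum_i l_i = O(\isk(u))$ ``follows easily from the definition of the parameter $\isk$'' without giving the argument, whereas you actually spell it out via Lemma~\ref{lem:path_decomposition} and the geometric spacing of the radii $r^{[i]}$; your check that $(r^{[i+2]}-r^{[i+1]})/r^{[i+2]}$ is bounded below by an absolute constant is fine.
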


\section{An Application to \texorpdfstring{$\DD$}{D}-preserving Distance Labeling}\label{sec:dpreserving}

As a slight extension of our results, we note that our technique based on analyzing tree skeletons for shortest path trees has direct application the $\DD$-preserving distance labeling problem in unweighted graphs, for some parameter $\DD>0$. We recall that a scheme is called $\DD$-preserving if for any queried pair of nodes $(u,v)$ with $\dist(u,v) > \DD$, the value returned by the decoder is equal to $\dist(u,v)$.

By analogy to the integrated skeleton dimension given by~\eqref{eq:isk_def}, we introduce a variant for this parameter which only considers cuts at distance more than $\DD/6$.

\begin{equation}
\label{eq:isk_def_D}
\isk_\DD(u) :=\sum_{v \in V_u^* : d(u,v) > \DD/6} \frac{1}{d_u(v)} = \sum_{r\in \N, r > \DD/6} \frac{|\cut_u^{*(r)}|}{r}.
\end{equation}

The claims of Lemma~\ref{lem:expected_hub_precise} and Corollary~\ref{cor:isk}, which give bounds on average hub size of $O(\isk(u))$ and $O(\log n + \isk(u) (\log \log n))$ in the unweighted setting, naturally translate to $\DD$-preserving labeling. For our techniques to be directly applicable, it suffices to subdivide each edge of the graph into a path of 12 vertices so that all distances between $u-v$ pairs are divisible by $12$, and to choose shortest path trees so that for any pair of nodes $u, v$, the intersection $T_u \cap T_v$ contains a shortest $u-v$ path (this may be achieved, for example, by enforcing a unique choice of shortest paths between any node pair, e.g., by choosing i.a.r. the length of each edge in the range $[1-1/n, 1])$. Then, the entire analysis holds, and we can eventually replace $\isk(u)$ by $\isk_\DD(u)$ in the statement of the claims.

We remark that it is an elementary property of the tree skeleton that $|\cut_u^{*(r)}| = O(n/r)$, since any node at distance $r$ from $u$ continues in $T_u$ along an independent path of length at least $r/2$. By performing the latter sum in~\eqref{eq:isk_def_D}, we obtain $\isk_\DD(u) = O(n/\DD)$. Thus, we obtain the following Proposition.

\begin{proposition}\label{pro:smalldpreserving}
There exists a hub labeling scheme for the $\DD$-preserving distance labeling problem with hubs of average size $O(n/\DD)$ and worst case size $O(\log n + (n/\DD) \log \log n)$. The size of the bit representations of the corresponding labels is  $O((n/\DD) \log n)$ and $O(\log^2 n + (n/\DD) \log n \log \log n)$, respectively.
\end{proposition}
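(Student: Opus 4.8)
The plan is to take the hub-labeling construction from Section~\ref{sec:hub_labeling} essentially verbatim, and to observe that in the $\DD$-preserving setting only the ``far'' part of each shortest-path tree contributes to the hub sets, which lets us replace $\isk(u)$ by $\isk_\DD(u)$ everywhere and then bound $\isk_\DD(u) = O(n/\DD)$. Concretely, I would first set up the graph: subdivide every edge of the input unweighted graph into a path of $12$ vertices (so that all $u$--$v$ distances become multiples of $12$, as required by Section~\ref{sec:hub_labeling}), and fix a family of shortest-path trees $\{T_u\}$ coming from a single globally consistent tie-breaking rule — e.g.\ perturb each edge length independently and uniformly into $[1-1/n,1]$ — so that $P_u(v) = P_v(u)$ for all pairs $u,v$, which is exactly the hypothesis needed for correctness of the edge hub labeling. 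Then run the randomized construction of the hub edges $\eta_u(v)$ \emph{only for pairs $u,v$ with $\dist(u,v) > \DD$}; correctness of the $\DD$-preserving scheme is immediate from the correctness argument in Section~\ref{sec:hub_labeling}, since for every such pair we get $\eta_u(v) \in S(u)\cap S(v)\cap P_u(v)$.

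Next I would track how restricting to pairs at distance $>\DD$ propagates through the analysis. The central subpath $P'_u(v)$ consists of the middle $d_u(v)/6$ edges, all at distance more than $5 d_u(v)/12 > \DD \cdot \tfrac{5}{12} > \DD/6$ from $u$ when $d_u(v) > \DD$; hence every hub edge $\eta\in S(u)$ has $d_u(\eta) > \DD/6$. Re-examining Lemma~\ref{lem:hubset_properties} and Lemma~\ref{lem:expected_hub_precise}: in the proof of Lemma~\ref{lem:expected_hub_precise} the sum over $y\in V^*_u$ may be restricted to $y$ with $d_u(y) > \DD/6$ (since $\eta_u(v)$ lies on $P_u(x_y,y)$ only for such $y$), so we get $\E|S(u)| \le 16\sum_{y\in V^*_u : d_u(y) > \DD/6} \tfrac{1}{d_u(y)} = 16\,\isk_\DD(u)$. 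Likewise, in the concentration analysis, only layers $L^{*[i]}$ with $r^{[i]} > \DD/6$ contribute, and the additive constant-layer term $\sum_{r=0}^{19}|\cut_u^{*(r)}|$ vanishes once $\DD/6 > 19$ (and is absorbed into $O(\log n)$ otherwise); the bound of Corollary~\ref{cor:isk} therefore becomes $|S(u)| \le O(\log n + \isk_\DD(u)(\log\log n + \log\log \diam))$ with probability $1 - O(1/n)$, and since the graph is unweighted $\diam \le n$ so $\log\log\diam = O(\log\log n)$.

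It then remains to bound $\isk_\DD(u)$. Here I use the elementary skeleton property already noted in the excerpt: any node at distance $r$ from $u$ lies on a descending path in $T_u$ that continues for length at least $r/2$ and is disjoint from the paths below the other nodes of $\cut_u^{*(r)}$, so $|\cut_u^{*(r)}| \cdot (r/2) \le n$, i.e.\ $|\cut_u^{*(r)}| = O(n/r)$. Plugging this into~\eqref{eq:isk_def_D} gives $\isk_\DD(u) = \sum_{r > \DD/6} \tfrac{|\cut_u^{*(r)}|}{r} = O\!\big(\sum_{r>\DD/6} \tfrac{n}{r^2}\big) = O(n/\DD)$. Combining: the average hub size is $O(n/\DD)$ (by Lemma~\ref{lem:expected_hub_precise} plus Markov, or simply by the expectation bound), and the worst-case hub size is $O(\log n + (n/\DD)\log\log n)$, which gives the claimed bounds; multiplying by $O(\log n)$ bits per stored distance yields the label-size bounds $O((n/\DD)\log n)$ and $O(\log^2 n + (n/\DD)\log n\log\log n)$.

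The only real subtlety — and the step I would write out most carefully — is verifying that restricting the construction to far-apart pairs does not break the \emph{structural} lemmas (Lemma~\ref{lem:path_decomposition} and Lemmas~\ref{lem:layer_geometric}/\ref{lem:layer_geometric_suffix}), whose path-decomposition and balls-into-bins arguments are stated for the full skeleton $T^*_u$; the point is that these lemmas bound $|S(u)|$ from above using \emph{any} edges satisfying Claims (1)--(3), so they remain valid with the smaller index set, and the restriction only removes terms. Everything else is a routine substitution of $\isk_\DD$ for $\isk$ together with the $O(n/r)$ cut bound.
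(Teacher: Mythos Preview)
Your proposal is correct and follows essentially the same approach as the paper: subdivide edges by $12$, fix consistent shortest-path trees via random perturbation, restrict the hub construction of Section~\ref{sec:hub_labeling} to pairs at distance $>\DD$, observe that this lets you replace $\isk(u)$ by $\isk_\DD(u)$ in Lemma~\ref{lem:expected_hub_precise} and Corollary~\ref{cor:isk}, and finally bound $\isk_\DD(u)=O(n/\DD)$ via the cut bound $|\cut_u^{*(r)}|=O(n/r)$. Your extra care in checking that the relevant $y$ in the proof of Lemma~\ref{lem:expected_hub_precise} satisfies $d_u(y)>\DD/6$ (via Claim~(5) of Lemma~\ref{lem:hubset_properties}) and that the structural lemmas remain valid under restriction is exactly the verification the paper leaves implicit.
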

The size of the obtained $\DD$-preserving hub-based labeling scheme is (almost) optimal, since there holds a lower bound of $\Omega(n/\DD)$ on both the average and worst-case size of hub sets~\cite{Sublinear}. In fact, our scheme can be modified slightly to obtain hub sets of worst-case size $O(n/\DD)$ up to a certain threshold value $\DD = \widetilde O(\sqrt n)$. We present the details of this modified scheme in the following Subsection.

\subsection{A Modified \texorpdfstring{$\DD$}{D}-preserving Labeling Scheme}

In this section we present an independent family of distance labeling schemes, which have the $\DD$-preserving property.   Whereas the scheme and the presented analysis are valid for any value of parameter $\DD>0$, we obtain an improvement on the previously discussed scheme only up to some threshold value $\DD = \widetilde O(\sqrt n)$.

\paragraph{Construction of the Labeling.}

Fix the value of parameter $\DD>0$, with $12|\DD$. The basic building block of our labeling is a construction of hub sets $S_\DD(u)$ for each node $u\in V$, which allow us to handle distance queries for pairs of nodes whose distance is in the range $[\DD, 5\DD/4]$.

Before providing the details of the constructions of sets $S_\DD(u)$, we first introduce some auxiliary notation. As before, for a pair of nodes $u,v\in V$, we denote by $P(u,v)$ a fixed shortest path between $u$ and $v$. In the definition of $P(u,v)$, ties between different shortest paths should be broken in a consistent manner over the whole graph, so that $P(u,v) = P(v,u)$, and the set of shortest paths rooted at a node $u$, $\bigcup_{v\in V} P(u,v)$, is a spanning tree of $G$.

\begin{figure}[t]
  \centering
    \vspace*{-2mm}
    \includegraphics[scale=1,trim={4.7cm 17.3cm 8.6cm 6.4cm},clip]{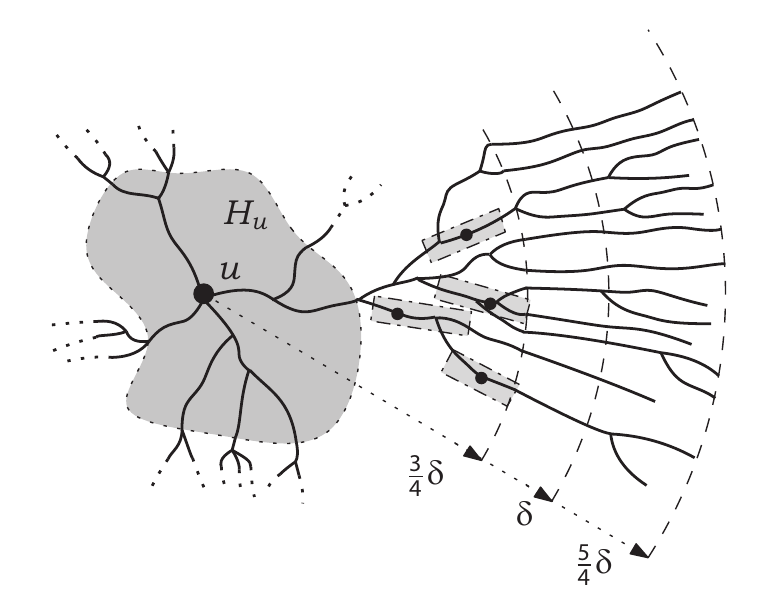}
    \vspace*{-2mm}
    \caption{Hub set selection for distance range $[\DD, 5\DD/4]$. The corresponding shortest path tree $T_u$ for some vertex $u$ is shown in the figure. The set of heavy vertices $H_u$ is shaded around vertex $u$; all remaining vertices up to distance $\frac34\DD$ belong to $L_u$. Vertices of $L'_u \subseteq L_u$ are also marked, with corresponding descending paths $P_d$ shaded.}
    \label{fig:randomized_hubs}
\end{figure}

For a node $u\in V$, we denote by $T_u$ the shortest path subtree of $G$, rooted at $u$, leading from $u$ to nodes at distance in the range $[\DD, 5\DD/4]$:
\[T_u = \bigcup_{v \in V \colon \cost(u,v)\in [\DD,5\DD/4]} P(u,v).\]
We denote by $T_u^*$ the subtree (skeleton) of tree $T_u$, also rooted at $u$ and truncated to its first $3\DD/4$ levels from the root: $T_u^* = T_u [\{v\in V(T_u) : \cost(u,v) \leq 3\DD/4\}]$. We remark that all descending paths in $T_u^*$ have reach of at least $\DD/4$ in $T_u$.

The set $S_\DD(u)$ will be constructed similarly as before, to include vertices from the central part of any $u-v$ path in the tree, for vertices $v$ with $\cost(u,v)\in [\DD,5\DD/4]$. However, we wish to control the number of possible bad events in which a descending path in the tree $T_u^*$ branches out at some level into too many subpaths, from each of which some representative node will need be chosen into $S_\DD(u)$. To do this, we will partition the vertex set of tree $T_u^*$ into two subsets, $H_u\cup L_u$, known as heavy and light vertices, respectively. A vertex $w$ of $T_u^*$ belongs to $H_u$ if the subtree of $T_u^*$ rooted at $w$ has at least $\DD$ leaves (all in the last level $3\DD/4$), and belongs to $L_u$ otherwise. We remark that $T_u^*[H_u]$ is a (possibly empty) subtree of $T_u^*$, whereas $T_u^*[L_u]$ is a sub-forest of $T_u^*$, in which each connected component is a tree with less than $\DD$ leaves. In all the considered trees, we maintain the same ancestry relation. In particular, we speak of a \emph{descending subpath} in a tree if one of its endpoints is an ancestor of the other with respect to the tree $T_u^*$ rooted at $u$.

We are now ready to define the hub sets $S_\DD(u)$, $u\in V$ by the following randomized construction. Assign to each node $v\in V$ a real value $\rho(v) \in [0,1]$, uniformly and independently at random. We now put for all $u\in V$:
\begin{equation}\label{eq:sdu}
S_\DD(u) := H_u \cup L'_u,
\end{equation}
where $L'_u \subseteq L_u$ is defined as the set of all vertices $v\in L_u$, such that there exists a descending subpath $P_d$ in $T_u^*[L_u]$, such that $v\in P_d$, $|P_d| = \DD/12$, and $v$ has the minimal value of $\rho$ along path $P_d$, $v = \arg\min_{w\in P_d} \rho(w) \equiv \arg\min \rho(P_d)$. See Fig.~\ref{fig:randomized_hubs} for an illustration.


\paragraph{Correctness.}

We start by showing that sets $S_\DD$ have the desired hub property, regardless of the choice of random values $\rho$ (which may only affect the size of these sets).

\begin{lemma}\label{lem:dpreserving_correctness}
For any pair of nodes $u,v \in V$ such that $\cost(u,v) \in [\DD, 5\DD/4]$, we have:
\[\cost(u,v) = \min_{w\in S_\DD(u)\cap S_\DD(v)} \left( \cost(u,w) + \cost(w,v)\right).\]
\end{lemma}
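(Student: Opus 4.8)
The plan is to exhibit an explicit witness $w$ lying in $S_\DD(u)\cap S_\DD(v)$ that sits on a shortest $u$--$v$ path; since $w$ on $P(u,v)$ gives $\cost(u,w)+\cost(w,v)=\cost(u,v)$, and the right-hand side of the claimed identity is always $\geq \cost(u,v)$ by the triangle inequality, proving membership of this $w$ in both hub sets suffices. The natural candidate is a vertex in the ``central'' part of the path $P(u,v)$, roughly at distance between $\DD/2$ and $3\DD/4$ from each endpoint, analogous to the central-subpath construction used earlier in the paper. First I would fix $P = P(u,v)$ and note that by the consistency of tie-breaking, $P$ is simultaneously a descending path in $T_u$ and (reversed) in $T_v$; moreover since $\cost(u,v)\in[\DD,5\DD/4]$, the portion of $P$ at distance in $[\DD/4,3\DD/4]$ from $u$ (equivalently at distance in roughly $[\DD/4 , 3\DD/4]$ from $v$, using $\cost(u,v)\le 5\DD/4$) lies inside both truncated trees $T_u^*$ and $T_v^*$.

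The next step is a case analysis according to whether vertices of this central part are heavy or light in $T_u$ (and in $T_v$). If some central vertex $w$ of $P$ is heavy in $T_u$, i.e. $w\in H_u$, then $w\in S_\DD(u)$ automatically, and I would then argue that $w$ is likewise heavy in $T_v$ or else, symmetrically, handle it. The cleaner route is: pick $w$ to be the vertex on $P$ at distance exactly (say) $5\DD/8$ from $u$, so it is also at distance $\cost(u,v)-5\DD/8 \in [3\DD/8, 5\DD/8]$ from $v$, hence it lies within the first $3\DD/4$ levels of both $T_u^*$ and $T_v^*$. Now I consider the descending path of length $\DD/12$ inside $P$ centered appropriately around $w$; if $w$ is light in $T_u$ along this path (the whole length-$\DD/12$ descending subpath of $P$ around $w$ lies in $T_u^*[L_u]$), then I would instead choose as witness the $\rho$-minimizer $w'$ of that length-$\DD/12$ subpath of $P$: by definition $w'\in L'_u\subseteq S_\DD(u)$, and because the same subpath of $P$ is a descending subpath of $T_v^*$ as well, and it is light in $T_v$ too (its reach toward the far endpoint is large), $w'\in L'_v\subseteq S_\DD(v)$. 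If on the other hand the subpath meets $H_u$, then some vertex of it is heavy, hence in $H_u\subseteq S_\DD(u)$; I would then take the \emph{deepest} heavy vertex on $P$ within the central band, which by the nested structure of $H_u$ versus $H_v$ lies in $H_v$ as well (a vertex at moderate depth on $P$ whose $T_u$-subtree has $\ge\DD$ leaves will, by symmetry of the distance band and the fact that $P$ passes through it, also have a $T_v$-subtree with $\ge \DD$ leaves, since those leaves continue as far from $v$ as from $u$ up to the $5/4$ slack).

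The main obstacle I anticipate is precisely this symmetry argument for the heavy/light classification: heaviness is defined relative to the truncated tree $T_u^*$ and its leaf set at level $3\DD/4$, so a vertex $w$ on $P$ may a priori be heavy with respect to $u$ but light with respect to $v$, because the $3\DD/4$-truncation windows from $u$ and from $v$ are offset. Reconciling this requires carefully tracking, for $w$ on $P$ at distance $a$ from $u$ and $\cost(u,v)-a$ from $v$, that the relevant subtree windows overlap enough when $a$ is in the central band and $\cost(u,v)\le 5\DD/4$; one likely needs to restrict the witness search to a narrower band (e.g. depth in $[\DD/2,3\DD/4]$ from \emph{both} endpoints simultaneously, which is nonempty precisely because the path length is at most $5\DD/4$) to make the windows line up. Once that band is pinned down, the remaining steps — that a length-$\DD/12$ descending subpath of $P$ fits inside it, that the $\rho$-minimizer is a common hub, and that a heavy common ancestor exists when the light case fails — are routine. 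I would also double-check the boundary arithmetic (the constants $\DD$, $5\DD/4$, $3\DD/4$, $\DD/12$, $\DD/4$) to confirm the band is genuinely nonempty and that reach-$\geq \DD/4$ holds for every descending path in the truncated skeletons, which is asserted just before the lemma.
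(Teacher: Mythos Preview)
Your overall strategy---find a witness on the central overlap $P^* = P \cap T_u^* \cap T_v^*$ of length at least $\DD/4$, and do a case analysis on heavy/light---is exactly the paper's approach. But there is a genuine gap: your handling of the asymmetric case is wrong, and the obstacle you anticipate is real and not resolvable by the ``windows line up'' argument you sketch.

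Specifically, you write that the deepest heavy vertex on $P$ ``by the nested structure of $H_u$ versus $H_v$ lies in $H_v$ as well.'' This is false in general. Heaviness of $w$ in $T_u^*$ counts leaves at level $3\DD/4$ in the subtree of $w$ inside $T_u$, i.e.\ nodes $x$ with $d(u,x)=d(u,w)+d(w,x)$ and $d(u,x)=3\DD/4$. Heaviness in $T_v^*$ counts a completely different set: nodes $x$ with $d(v,x)=d(v,w)+d(w,x)$ and $d(v,x)=3\DD/4$. These two subtrees of $w$ go in opposite directions along $P$ and need not have comparable leaf counts; no amount of narrowing the central band makes them coincide.

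The fix, which is the paper's key idea, is that you \emph{do not need} heaviness to be symmetric. Partition $P^*$ into the four pieces $P^*\cap H_u\cap H_v$, $P^*\cap L_u\cap H_v$, $P^*\cap H_u\cap L_v$, $P^*\cap L_u\cap L_v$ (each is a contiguous sub-interval of $P^*$, since $H_u$ and $H_v$ are root-containing subtrees). If the first is nonempty, any vertex there works. Otherwise $|P^*|\ge \DD/4$ forces one of the remaining three to have length $\ge \DD/12$ by pigeonhole. In the mixed case, say $|P^*\cap L_u\cap H_v|\ge \DD/12$, take a length-$\DD/12$ subpath $P_d$ inside it and set $w=\arg\min\rho(P_d)$: then $w\in L'_u$ by the definition of $L'_u$, and $w\in H_v$ trivially, so $w\in S_\DD(u)\cap S_\DD(v)$. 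The $L_u\cap L_v$ case is the one you already handle correctly.
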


\begin{proof}
Consider the path $P = P(u,v) = P(v,u)$. We have $P \subseteq T_u$ and $P \subseteq T_v$. Moreover, $|P| \leq 5\DD/4$ and $|P \cap T_u^*| = |P \cap T_v^*| = 3\DD/4$. Denoting by $P^*$ the subpath of $P$ which belongs to both trees $T_u^*$ and $T_v^*$, $P^* = P \cap T_u^* \cap T_v^*$, it follows that $|P^*| \geq \DD/4$. 
We now prove the claim of the lemma by showing that at least one vertex from $P^*$ has to belong to $S_\DD(u) \cap S_\DD(v)$. We achieve this by a case analysis, depending on the portions of path $P^*$ which belong to the sets $L_u, H_u, L_v$, and $H_v$.
\begin{itemize}
\item If $|P^* \cap H_u \cap H_v| > 0$, then there exists at least one vertex $w \in P^* \cap H_u \cap H_v \subseteq P^* \cap S_\DD(u)\cap S_\DD(v)$, which completes the proof.
\item If $|P^* \cap L_u \cap H_v| \geq \DD/12$, then there exists at least one descending subpath $P_d$ of length $\DD/12$ in $T_u^*[L_u]$ which is completely contained in $P^* \cap H_v$. Setting $w = \arg\min \rho(P_d)$, we have $w \in L'_u$, and so it follows that $w \in P^* \cap L'_u \cap H_v \subseteq P^* \cap S_\DD(u)\cap S_\DD(v)$.
\item If $|P^* \cap H_u \cap L_v| \geq \DD/12$, we obtain the result by applying analogous considerations as in the previous case.
\item Finally, in all other cases we must have $|P^* \cap L_u \cap L_v| \geq \DD/12$. It follows that there exists at least one subpath $P_d \subseteq P^*$ of length $\DD/12$, which is a descending subpath in both $T_u^*[L_u]$ and $T_v^*[L_v]$. Setting $w = \arg\min \rho(P_d)$, we obtain $w \in L'_u$ and $w \in L'_v$, hence $w \in P^* \cap L'_u \cap L'_v \subseteq P^* \cap S_\DD(u)\cap S_\DD(v)$.
\end{itemize}
\end{proof}


\paragraph{Analysis.}
We now consider the size of sets $S_\DD(u)$. The size of set $H_u$ is independent of the choice of random variables $\rho$; it can easily be bounded, taking into account that tree $T_u^*$ has $\bigo{n/\DD}$ leaves.
\begin{lemma}\label{lem:heavy}
For all $u\in V$, $|H_u| \leq 3 n /\DD$.
\end{lemma}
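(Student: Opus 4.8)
The plan is to bound $|H_u|$ by counting the leaves of the skeleton $T_u^*$ and exploiting the heavy/light distinction. First I would recall that $T_u^*$, being the truncation of $T_u$ to its first $3\DD/4$ levels, is (as already observed in the excerpt) a tree all of whose descending paths have reach at least $\DD/4$ in $T_u$. This means every leaf $w$ of $T_u^*$ lies on a descending path in $T_u$ of length at least $\DD/4$, i.e. can be continued independently down to distance $\DD$ from $u$. Since the graph has $n$ terminal nodes $V$ (and, after edge subdivision, the relevant counting is of terminal nodes at distance $\DD$ from $u$), the number of such pairwise edge-disjoint continuations is $O(n/\DD)$; concretely, a radius-$\DD$ ball contains at most $n$ nodes of $V$, and each leaf consumes at least $\DD/4$ "fresh" length (after subdivision, proportionally many edges), so $T_u^*$ has at most $4n/\DD$ leaves. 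I would write this elementary counting argument carefully, matching the $3n/\DD$ constant in the statement — the precise constant will come from bookkeeping the $3\DD/4$ truncation depth versus the full $[\DD,5\DD/4]$ range, giving reach $\geq \DD/4$ and hence at most $(5\DD/4)/(\DD/4) \cdot (\text{something})$; I expect the clean bound to fall out as $\leq 4n/\DD$ or even $\leq 3n/\DD$ with a slightly sharper continuation argument.

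Next, the key structural observation: a vertex $w \in H_u$ if and only if the subtree of $T_u^*$ rooted at $w$ has at least $\DD$ leaves. Thus $T_u^*[H_u]$ is exactly the union of root-to-$w$ paths over all such $w$, and in particular it is a subtree whose leaves are precisely those $w \in H_u$ whose proper descendants in $T_u^*$ are all light. If $T_u^*[H_u]$ has $\lambda$ leaves, then by the defining property these $\lambda$ leaves have pairwise-disjoint descendant leaf-sets in $T_u^*$ (two distinct leaves of $T_u^*[H_u]$ are incomparable, so their subtrees in $T_u^*$ are disjoint), each of size $\geq \DD$; hence $\lambda \DD \leq \#\{\text{leaves of }T_u^*\} \leq 4n/\DD \cdot \DD$... wait, more directly $\lambda \cdot \DD \leq \#\{\text{leaves of }T_u^*\} $, which does not immediately help unless I also bound the number of leaves. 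Combining: $\#\{\text{leaves of }T_u^*\} = O(n/\DD)$ from the first step, but that gives $\lambda = O(n/\DD^2)$, which is even better; however, $|H_u|$ counts all vertices on these paths, not just leaves. So the bound $|H_u| \leq 3n/\DD$ must instead come from noting that $T_u^*[H_u]$ has at most $3\DD/4$ levels and each level contains at most (number of leaves of $T_u^*$)$/\DD \leq (4n/\DD)/\DD$ heavy vertices — again too strong. The honest route is: each level $r$ of $T_u^*[H_u]$ has at most $|\cut_u^{*(r)}| = O(n/r)$ heavy vertices (elementary skeleton property, as in the excerpt: each such vertex continues along an independent path of length $\geq r/2$, but here I want length $\geq \DD/4$), and since heavy vertices at level $r$ each have $\geq \DD$ descendant leaves which are pairwise disjoint and total $\leq 4n/\DD$... again giving $O(n/\DD^2)$ per level and $O(n/\DD^2 \cdot \DD) = O(n/\DD)$ overall. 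This last chain is the one I would write out.

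So concretely the argument is: $T_u^*$ has at most $L := 4n/\DD$ leaves (step one); the heavy-vertex subtree $T_u^*[H_u]$ has at each of its $\leq 3\DD/4$ levels at most $L/\DD$ vertices, because distinct heavy vertices at a common level have disjoint subtrees each with $\geq \DD$ leaves, all drawn from the $\leq L$ leaves of $T_u^*$; summing over levels, $|H_u| \leq (3\DD/4)(L/\DD) = 3L/4 = 3n/\DD$. The main obstacle — and the only place requiring care — is nailing the leaf count $L \leq 4n/\DD$ of $T_u^*$ in the subdivided graph: I must argue that each leaf of $T_u^*$ extends in $T_u$ along a descending path of length $\geq \DD/4$ (this is the stated "reach of at least $\DD/4$" property), that these extensions are edge-disjoint across distinct leaves (immediate, since distinct leaves are incomparable in $T_u$), that each extension passes through at least $\DD/(4\cdot 12)$ original-edge-worth, equivalently at least $\lceil \DD/4 \rceil$ subdivided edges leading to a distinct terminal-or-internal node, and that the union fits inside a ball of $O(n)$ terminal nodes. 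Getting the constant exactly $3n/\DD$ rather than, say, $5n/\DD$ may require using the sharper bound that the leaves' extensions actually reach terminal nodes at distance $\leq 5\DD/4$, of which there are at most $n$, giving $L \cdot (\DD/4) / (\text{edges per terminal count})$; I would present whichever clean continuation argument yields the stated constant and not belabor an optimal constant.
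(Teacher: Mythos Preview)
Your final argument is correct and essentially the same as the paper's: both exploit that each leaf of $T_u^*$ continues by a disjoint path of length $\ge \DD/4$ in $T_u$, that heavy vertices dominate $\ge \DD$ such leaves, and that the depth of $T_u^*$ is $3\DD/4$, combining to give $|H_u|\le (3\DD/4)\cdot(4n/\DD^2)=3n/\DD$. The only organizational difference is that the paper first bounds the number of \emph{leaves} of $T_u^*[H_u]$ by $4n/\DD^2$ (each such leaf's subtree in $T_u$ has $\ge \DD^2/4$ nodes) and then multiplies by depth, whereas you bound heavy vertices \emph{per level} by $L/\DD\le 4n/\DD^2$ and sum over levels; these are two phrasings of the same double count.

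One clarification: your worries about subdivision and ``terminal nodes'' are misplaced here. In this subsection the graph is unweighted with $n$ vertices and $|T_u|\le n$ holds directly, so the leaf bound $L\le 4n/\DD$ follows immediately from the disjoint $\DD/4$-length extensions without any bookkeeping about original versus subdivided edges. Drop that discussion and your argument is clean and matches the intended constant exactly.
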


\begin{proof}
Let $l \in H_u$ be a leaf node of $T_u^* [H_u]$. By definition of $H_u$, we have that the subtree of $T_u^*$ rooted at $l$ has at least $\DD$ leaves. As every leaf of $T_u^*$ is at depth $3\DD/4$, and all leaves in tree $T_u$ are at depth at least $\DD$, it follows that the subtree of $l$ in $T_u$ contains at least $\DD$ disjoint descending paths of length $\DD/4$ each, and so it has at least $\DD^2/4$ nodes. Since the size of tree $T_u$ is at most $n$, we obtain that tree $T_u^* [H_u]$ has at most $4n / \DD^2$ leaves. Moreover, the distance of each node of $T_u^* [H_u]$ from its root $u$ is at most $3\DD/4$. Hence, $|T_u^*[H_u]| \leq (3\DD/4)(4n / \DD^2) = 3n/\DD$.
\end{proof}

The size of set $L'_u$ depends on the choice of random variables $\rho$. We start by bounding the expected number of elements of $L'_u$, belonging to specific connected components of $T_u^*[L_u]$. Suppose $T_u^*[L_u]$ be a forest consisting of $k_u$ trees, and let $L_u = L_u^{(1)} \cup \ldots \cup L_u^{(k_u)}$ be the partition of its vertex set such that $T_u^*[L_u^{(i)}]$ represents its $i$-th connected component. Let $l_u^{(i)}$ denote the number of leaves of tree $T_u^*[L_u^{(i)}]$. Finally, let $L_u'^{(i)} = L_u' \cap L_u^{(i)}$. Clearly, $L_u'^{(1)} \cup \ldots \cup L_u'^{(k_u)}$ is a partition of $L_u'$. In the following, we consider the random variable $|L_u'| = \sum_{i=1}^{k_u} |L_u'^{(i)}|$, showing that it has an expectation of $\bigo{n/\DD}$, and obtaining concentration results around this expectation.

First, we remark that each descending path of tree $T_u^*$ contributes $\bigo{1}$ elements in expectation to set $T_u^*$. Consequently, the expected size of set $|L_u'^{(i)}|$ can be related to the number of leaves in the considered connected component.
\begin{lemma}\label{lem:exp_component}
For all $u\in V$ and all $1\leq i \leq k_u$, $\E |L_u'^{(i)}| \leq 36 l_u^{(i)}$.
\end{lemma}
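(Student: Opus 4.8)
The plan is to bound $\E|L_u'^{(i)}|$ by a sum over descending paths in the $i$-th connected component $T_u^*[L_u^{(i)}]$, exploiting the fact that a vertex $v$ enters $L_u'$ only if it is the $\rho$-minimum of some descending path of length exactly $\DD/12$ contained in $T_u^*[L_u]$. First I would decompose the edge set of $T_u^*[L_u^{(i)}]$ into a family of maximal descending paths whose internal vertices all have degree $2$ in the component; by the same elementary leaves-versus-branchings counting used in the proof of Lemma~\ref{lem:path_decomposition}, the number of such paths is at most $2 l_u^{(i)}$. Each vertex $v \in L_u^{(i)}$ lies on exactly one such maximal path, say of length $m$.

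Next I would observe that if $v\in L_u'^{(i)}$ then $v$ is the $\rho$-minimum of \emph{some} descending subpath of length $\DD/12$ through $v$ inside the component; in particular, restricting attention to the maximal path $R$ containing $v$, the vertex $v$ must be the $\rho$-minimum of every length-$(\DD/12)$ descending subwindow of $R$ that contains it, and there are at most two ``extremal'' windows (one anchored at each feasible end), so $v$ is the $\rho$-minimum of a window of length at least $\min\{m,\DD/12\}$ in at least one of at most two positions. The probability that a fixed vertex is the minimum of $k$ uniform i.i.d.\ values is $1/k$; summing over the at most two such windows, $\Pr[v\in L_u'] \le 2/\min\{m+1,\DD/12\}$ roughly. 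Summing this over the at most $m$ vertices of the maximal path $R$ gives an expected contribution of $O(m / \min\{m,\DD/12\}) = O(1 + 12 m/\DD)$ from that path. (The subtlety here — which is why the constant comes out around $36$ rather than something smaller — is that windows must actually fit inside the light forest, so near the root of the component, where depth is below $\DD/12$, windows may be capped; but since descending paths in $T_u^*$ have reach $\ge \DD/4$ in $T_u$, there is always room, and in any case the crude bound $\E|L_u'^{(i)}\cap R| \le 36$ per maximal path suffices after we note $R$ has at most $3\DD/4$ edges, hence $1 + 12\cdot(3\DD/4)/\DD = 10$, with slack absorbed into the constant.)

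Then I would sum over all maximal paths: since there are at most $2 l_u^{(i)}$ of them and each contributes $O(1)$ in expectation — more precisely at most $18$ — linearity of expectation yields $\E|L_u'^{(i)}| \le 36 l_u^{(i)}$. I would be slightly careful that the $O(1)$-per-path bound holds even for very short maximal paths (where $m < \DD/12$ and no length-$(\DD/12)$ window fits entirely inside that single path): in that case $v$ can still be in $L_u'$ via a window that passes through a branching vertex into a sibling path, but then the contribution of such a vertex can be charged to a longer descending path in the component, and since the total number of maximal paths is $O(l_u^{(i)})$ the double-counting only costs a constant factor.

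The main obstacle I anticipate is precisely this bookkeeping of how a single vertex $v$ can be witnessed as the $\rho$-minimum of \emph{many} different length-$(\DD/12)$ descending windows — spanning branchings — inside the light component, so that a naive ``one window per vertex'' union bound over-counts; the clean way around it is to fix, for each vertex $v$, a canonical maximal descending path through $v$ and observe that membership in $L_u'$ forces $v$ to be the minimum of an \emph{extremal} window of that canonical path (the argument is structurally identical to the derivation of Claim~(4)/(5) in Lemma~\ref{lem:hubset_properties} and of the $Q_u(y)$ variable in Lemma~\ref{lem:expected_hub_precise}). Once that reduction is in place, the rest is the elementary $\sum 1/k$ harmonic-type estimate combined with the $l_u^{(i)}$-bound on the number of paths, and the stated constant $36$ follows with room to spare.
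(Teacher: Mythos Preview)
Your decomposition is the source of the trouble. You partition the edge set of $T_u^*[L_u^{(i)}]$ into maximal descending paths whose \emph{internal} vertices have degree~$2$ (the ``segment'' decomposition of Lemma~\ref{lem:path_decomposition}), and then try to argue that if $v\in L_u'$ then $v$ must be the $\rho$-minimum of an extremal length-$(\DD/12)$ window of the single segment $R$ containing it. This step fails: the witnessing window $P_d$ can pass through a branching point of the component, so $P_d\cap R$ may be arbitrarily short, and there is no reason $v$ should be the minimum of any long subinterval of $R$. Your proposed fix --- assign to each $v$ a canonical maximal descending path and invoke an extremal-window argument as in Claim~(4) of Lemma~\ref{lem:hubset_properties} --- does not close the gap either, because the ``down'' half of the witnessing window can go into a subtree of $v$ that is not on the canonical path; the analogue in Section~\ref{sec:hub_labeling} needed the whole $X^{-}$ analysis to deal with exactly this branching-outward direction.

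The paper sidesteps the issue entirely by using a different decomposition: take $\P_u^{(i)}$ to be the set of \emph{inclusion-wise} maximal descending paths of $T_u^*[L_u^{(i)}]$, i.e.\ the root-to-leaf paths. There are exactly $l_u^{(i)}$ of them, each of length at most $3\DD/4$, and---crucially---\emph{every} descending subpath of the component is contained in at least one of them. Hence for each $v$, the event $v\in L_u'^{(i)}$ is covered by the union over $P\in\P_u^{(i)}$ of the event $M_P(v)$ that some length-$(\DD/12)$ window of $P$ has $v$ as its minimum; a standard half-window argument gives $\Pr[M_P(v)]\le 48/\DD$, and the double sum $\sum_{P}\sum_{v\in P}\Pr[M_P(v)]\le l_u^{(i)}\cdot(3\DD/4)\cdot(48/\DD)=36\,l_u^{(i)}$ finishes. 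Note that this is a genuine union bound over \emph{all} root-to-leaf paths through $v$, not a charging to one canonical path --- which is why the overcounting you worried about never arises.
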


\begin{proof}
Let $\P_u^{(i)}$ be the set of (inclusion-wise) maximal descending paths in the tree $T_u^*[L_u^{(i)}]$. We remark that $|\P_u^{(i)}| \leq l_u^{(i)}$.

For a path $P \in \P_u^{(i)}$, let $M_P(v)$ be the event that there exists a subpath $P_d \subseteq P$, with $|P_d| = \DD/12$, such that $v = \arg\min \rho(P_d)$. We have $\Pr[M_P(v)] = 0$ for $v\notin P$. If $v \in P$, then we use the following folklore probability estimation: for $M_P(v)$ to hold, one of the two (at most) descending subpaths $P'$ of $P$ of length $\DD/24$, having $v$ as one of their endpoints, must satisfy $v = \arg\min \rho(P')$. It follows that for $v \in P$, we have $\Pr[M_P(v)] \leq 48 / \DD$. By linearity of expectation we now obtain a bound on $\E | L_u'^{(i)}| $:
\[
\E| L_u'^{(i)} | = \sum_{v\in V}\Pr[v \in L_u'^{(i)}] \leq \sum_{P\in \P_u^{(i)}} \sum_{v\in P} \Pr[M_P(v)] \leq |\P_u^{(i)}| \cdot \max_{P\in \P_u^{(i)}} |P| \cdot \frac{48}{\DD} \leq l_u^{(i)} \cdot \frac{3\DD}{4} \cdot \frac{48}{\DD} = 36 l_u^{(i)}.
\]
\end{proof}

By linearity of expectation, we can apply the claim of Lemma~\ref{lem:exp_component} over all connected components, obtaining the following result.
\begin{lemma}\label{lem:exp_all}
For all $u\in V$, $\E |L_u'| \leq 144 n/\DD$.
\end{lemma}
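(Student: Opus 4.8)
The plan is to combine Lemma~\ref{lem:exp_component} with the fact that the numbers of leaves $l_u^{(i)}$ of the connected components of the light forest $T_u^*[L_u]$ sum to at most the total number of leaves of $T_u^*$, which in turn is $O(n/\DD)$ by the reach argument already used in Lemma~\ref{lem:heavy}. More precisely, since every leaf of $T_u^*$ is at depth $3\DD/4$ and continues in $T_u$ along an independent descending path of length at least $\DD/4$ (the reach guarantee recorded right after the definition of $T_u^*$), there are at most $4n/\DD$ leaves in $T_u^*$ altogether, hence $\sum_{i=1}^{k_u} l_u^{(i)} \leq 4n/\DD$.

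The computation then proceeds by linearity of expectation: since $L_u' = L_u'^{(1)} \cup \cdots \cup L_u'^{(k_u)}$ is a partition of $L_u'$, we have $\E |L_u'| = \sum_{i=1}^{k_u} \E |L_u'^{(i)}|$, and applying Lemma~\ref{lem:exp_component} to each term gives $\E |L_u'| \leq \sum_{i=1}^{k_u} 36 l_u^{(i)} = 36 \sum_{i=1}^{k_u} l_u^{(i)} \leq 36 \cdot 4n/\DD = 144 n/\DD$. The only subtlety is the bound $\sum_i l_u^{(i)} \leq 4n/\DD$: each connected component of $T_u^*[L_u]$ is an induced subtree of $T_u^*$, so its leaves are either leaves of $T_u^*$ (at depth $3\DD/4$) or nodes at which the light forest was cut off from a heavy ancestor; but a cut-off point is unique to its component and is itself the top of a descending path that reaches depth $3\DD/4$ in $T_u^*$, so one can charge each such leaf to a genuine leaf of $T_u^*$ below it, and the total remains $O(n/\DD)$. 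Alternatively — and more cleanly — one observes directly that every leaf counted among the $l_u^{(i)}$ lies on some root-to-leaf path of $T_u^*$ and the components are vertex-disjoint, so $\sum_i l_u^{(i)}$ is at most the number of leaves of $T_u^*$, which is $\leq 4n/\DD$ exactly as in the proof of Lemma~\ref{lem:heavy}.

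There is essentially no obstacle here; the lemma is a one-line aggregation of the two preceding lemmas. The only point requiring a moment's care is making the leaf-counting bound $\sum_i l_u^{(i)} = O(n/\DD)$ fully rigorous when the leaves of a light component are not themselves leaves of $T_u^*$ — but since each light component is rooted at a light vertex whose subtree in $T_u^*$ has fewer than $\DD$ leaves and these subtrees are disjoint, bounding $\sum_i l_u^{(i)}$ by the total leaf count of $T_u^*$ is immediate.
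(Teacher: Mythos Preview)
Your proposal is correct and follows essentially the same route as the paper: apply linearity of expectation together with Lemma~\ref{lem:exp_component} to get $\E|L_u'|\le 36\sum_i l_u^{(i)}$, then bound $\sum_i l_u^{(i)}\le 4n/\DD$ via the disjoint-reach-$\DD/4$ argument.

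One small remark: your worry about leaves of a light component not being leaves of $T_u^*$ is unnecessary. Lightness is downward-closed in $T_u^*$ --- if $v\in L_u$ then any child $w$ of $v$ has a subtree contained in that of $v$, hence also fewer than $\DD$ leaves, hence $w\in L_u$. Consequently every leaf of $T_u^*[L_u]$ is already a leaf of $T_u^*$ at depth $3\DD/4$, and the paper simply states this directly without the detour you took.
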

\begin{proof}
By Lemma~\ref{lem:exp_component} we have:
\[
\E |L_u'| = \E\sum_{i=1}^{k_u} |L_u'^{(i)}| \leq 36 \sum_{i=1}^{k_u} l_u^{(i)}.
\]
The claim follows when we observe that $\sum_{i=1}^{k_u} l_u^{(i)} \leq 4n /\DD$. Indeed, this sum represents the total number of leaves in $T_u^*[L_u]$. Each such leaf (located at distance $3\DD/4$ from $u$) is the upper endpoint of a distinct descending path of length at least $\DD/4$ in the tree $T_u$, and $|T_u| \leq n$, hence we obtain the bound.
\end{proof}

In order to apply Chernoff bounds to the sum of random variables $|L_u'|$, we start by bounding the range of these variables.

\begin{lemma}\label{lem:lu_simple_bound}
For all $u\in V$ and all $1\leq i \leq k_u$, $|L_u'^{(i)}| \leq |L_u^{(i)}| < \DD^2$.
\end{lemma}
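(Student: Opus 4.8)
The plan is to establish the two inequalities separately, the first being trivial and the second being the actual content. The inequality $|L_u'^{(i)}| \leq |L_u^{(i)}|$ is immediate from the definition, since $L_u'^{(i)} = L_u' \cap L_u^{(i)} \subseteq L_u^{(i)}$. So the whole task reduces to showing $|L_u^{(i)}| < \DD^2$, i.e.\ bounding the size of a single connected component of the light forest $T_u^*[L_u]$.

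For the main inequality, I would argue as follows. Fix a component $T_u^*[L_u^{(i)}]$ and let $w$ be the vertex of this component closest to the root $u$; since the component is a subtree of $T_u^*$ with a single ``entry point'' from above, $w$ is the unique vertex such that every other vertex of $L_u^{(i)}$ is a descendant of $w$ in $T_u^*$. By the definition of heaviness, $w \in L_u$ means that the subtree of $T_u^*$ rooted at $w$ has strictly fewer than $\DD$ leaves; since $T_u^*[L_u^{(i)}]$ is contained in that subtree, it has fewer than $\DD$ leaves as well, so $l_u^{(i)} < \DD$. Next, every vertex of $T_u^*$ is at distance at most $3\DD/4$ from the root, so every leaf of a rooted subtree of $T_u^*$ is reached from the subtree root by a descending path of length at most $3\DD/4$. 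Hence the component $T_u^*[L_u^{(i)}]$, having fewer than $\DD$ leaves and being covered by the (at most $l_u^{(i)}$) root-to-leaf descending paths each of length at most $3\DD/4$, contains at most $l_u^{(i)} \cdot (3\DD/4) < \DD \cdot (3\DD/4) = 3\DD^2/4 < \DD^2$ vertices.

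I expect no real obstacle here; this is a short bookkeeping argument entirely parallel to the leaf-counting used in the proof of Lemma~\ref{lem:heavy}. The only point requiring a little care is to make precise that a connected component of $T_u^*[L_u]$ is indeed a subtree rooted at a single vertex $w \in L_u$ (so that the ``fewer than $\DD$ leaves'' property of $w$ transfers to the whole component) — this follows because the ancestry relation is inherited from $T_u^*$ and the complement $H_u$ is an ancestor-closed subtree of $T_u^*$, so $T_u^*[L_u]$ is exactly $T_u^*$ with an ancestor-closed set removed, whose components are downward-closed subtrees. With that observation in place, the bound follows by the leaf count times maximum depth, exactly as above.
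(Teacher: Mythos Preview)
Your proposal is correct and follows essentially the same argument as the paper's proof: the first inequality is the trivial containment $L_u'^{(i)}\subseteq L_u^{(i)}$, and the second follows because each component of $T_u^*[L_u]$ has fewer than $\DD$ leaves and depth at most $3\DD/4$, giving $|L_u^{(i)}|<\DD\cdot 3\DD/4<\DD^2$. Your additional justification that $H_u$ is ancestor-closed (so that each light component is precisely a full rooted subtree of $T_u^*$, inheriting the ``fewer than $\DD$ leaves'' property of its root) is a helpful elaboration of a point the paper states without proof just before the lemma.
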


\begin{proof}
We have $L_u'^{(i)} \subseteq L_u^{(i)}$. By the definition of set $L_u$, tree $T_u^*[L_u^{(i)}]$ has less than $\DD$ leaves and all its nodes are at distance at most $3\DD/4$ from its root. It follows that $|L_u^{(i)}| < \DD\cdot 3\DD/4 < \DD^2$.
\end{proof}

The above upper bound provides an estimate on the maximum value of each random variable $|L_u'^{(i)}|$. However, in order to be able to perform a concentration analysis in a range of fairly large $\DD$ (roughly, for $n^{1/3} < \DD < n^{0.5}$), we also need to bound more tightly the concentration of each  $|L_u'^{(i)}|$ around its expected value.

Let $V_0 = \{v \in V : \rho(v) < 50 \frac{\ln n}\DD\}$. We start by showing that with high probability, all elements of the sets $|L_u'|$ belong to $V_0$.
\begin{lemma}\label{lem:bad_x}
Denote by $X$ the ``bad'' event that there exists a node $u\in V$, such that $L_u' \not\subseteq V_0$. We have: $\Pr[X] < 1/n$.
\end{lemma}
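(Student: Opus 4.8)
The plan is to reduce the bad event $X$ to the existence of a single shortest path with $\DD/12$ edges all of whose vertices received a large $\rho$-value, and then take a union bound over all such paths; the point is that, by the assumed uniqueness of shortest paths, there are at most $n^2$ of them.

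First I would record the structural observation that whenever $v\in L'_u$ for some $u\in V$, there is a shortest path $P$ of $G$ with exactly $\DD/12$ edges such that $v\in V(P)$ and $v=\arg\min_{w\in V(P)}\rho(w)$. This is immediate from the definition of $L'_u$: the witnessing descending subpath $P_d\subseteq T_u^*[L_u]$ has $|P_d|=\DD/12$ and, being a descending path inside the shortest-path tree $T_u$, is itself a shortest path between its two endpoints, with $v=\arg\min\rho(P_d)$. Hence if additionally $v\notin V_0$, i.e.\ $\rho(v)\ge 50\ln n/\DD$, then every vertex of $P_d$ has $\rho$-value at least $50\ln n/\DD$. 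Thus $X$ is contained in the event that some shortest path $P$ with $|P|=\DD/12$ satisfies $\min_{w\in V(P)}\rho(w)\ge 50\ln n/\DD$.

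For the probability estimate, we may assume $\DD>50\ln n$, since otherwise $50\ln n/\DD\ge 1$ and $V_0=V$ almost surely, so $\Pr[X]=0$. For a fixed shortest path $P$ with $\DD/12$ edges (hence $|V(P)|\ge \DD/12$), independence and uniformity of the values $\rho$ give
$$
\Pr\Bigl[\min_{w\in V(P)}\rho(w)\ge \tfrac{50\ln n}{\DD}\Bigr]=\Bigl(1-\tfrac{50\ln n}{\DD}\Bigr)^{|V(P)|}\le e^{-\frac{50\ln n}{\DD}|V(P)|}\le e^{-\frac{50\ln n}{12}}=n^{-50/12}.
$$
By uniqueness of shortest paths there is at most one shortest path per pair of endpoints, so fewer than $n^2$ shortest paths have exactly $\DD/12$ edges; a union bound over them yields $\Pr[X]\le n^2\cdot n^{-50/12}=n^{-13/6}<1/n$, as required.

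The two probabilistic estimates are routine; the only step needing care is the first reduction — recognizing that the correct object to union over is the family of length-$\DD/12$ shortest paths, which has size $O(n^2)$, rather than, say, (vertex, branching witness-path) pairs inside the trees $T_u^*$, which could be far more numerous. I expect that reduction to be the only, and fairly mild, obstacle.
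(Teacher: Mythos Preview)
Your proof is correct. The approach is the same in spirit as the paper's---a union bound over witnessing descending paths of length $\DD/12$---but you organise it more cleanly. The paper unions over triples $(u,P,v)$ with $u\in V$, $P$ a descending $\DD/12$-path in $T_u^*[L_u]$, and $v\in P$; it bounds $\Pr[v=\arg\min\rho(P)\mid v\notin V_0]\le (1-50\ln n/\DD)^{\DD/12-1}<n^{-4}$ (for $\DD\ge 300$) and then multiplies by at most $n\cdot n\cdot\DD<n^3$ triples. You collapse the vertex layer by observing that if the minimiser on $P_d$ lies outside $V_0$ then \emph{every} vertex of $P_d$ does, which reduces the relevant event to a per-path event with probability $\le n^{-50/12}$; then uniqueness of shortest paths (a standing assumption from Section~1.3) cuts the number of paths to under $n^2$. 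This yields the sharper bound $n^{-13/6}$ and avoids the side condition $\DD\ge 300$. Your closing remark slightly overstates the difficulty: unioning over $(u,P,v)$ triples, as the paper does, is perfectly workable since there are only $O(n^2\DD)$ of them; your route is simply tidier, not necessary for correctness.
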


\begin{proof}
Consider first the probability $p_v$ that a fixed node $v\in V\setminus V_0$ satisfies $v = \arg\min \rho(P)$, where $P$ is any fixed path of $\DD/12$ nodes in $G$ which contains $v$. We have (with the last inequality holding when $\DD \ge 300$):
\[
p_v = \prod_{w\in P \setminus\{v\}} (1 - \rho(v)) = (1-\rho(v))^{\DD/12 -1} < (1 - 50\ln n /\DD)^{\DD/12-1} < n^{-4}.
\]

The probability of event $X$ occurring can be upper-bounded by performing a union bound over all nodes $u$, all descending paths $P$ in $T_u^*[L_u]$, and all nodes $v\in P$ of the event $[ v = \arg\min \rho(P) \cap  v\notin V_0]$ occurring. For each node $u$, there are at most $n$ such paths to consider, and less than $\DD$ possible nodes $v$ in each path. Overall, we obtain:
\[
\Pr[X] \leq n \cdot n \cdot \DD \cdot p_v < n^3 \cdot n^{-4} = 1/n.
\]
\end{proof}

Next, we show that with high probability, each connected component $L_u^{(i)}$ contains at most $\bigo{\DD \log n}$ nodes from $V_0$.
\begin{lemma}\label{lem:bad_y}
Denote by $Y$ the ``bad'' event that there exists a node $u\in V$ and $1\leq i \leq k_u$, such that $|L_u^{(i)} \cap V_0| > 100 \DD \ln n$. We have: $\Pr[Y] < 1/n$.
\end{lemma}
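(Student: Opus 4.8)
The plan is to bound $|L_u^{(i)}\cap V_0|$ for each fixed pair $(u,i)$ by a Chernoff estimate and then take a union bound over all such pairs, in the spirit of the proof of Lemma~\ref{lem:bad_x} but counting over connected components rather than paths.

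The first observation is that, in contrast to $L_u'$, the component $L_u^{(i)}$ does \emph{not} depend on the random values $\rho$: the tree $T_u$, its truncation $T_u^*$, and the heavy/light partition $H_u\cup L_u$ are all defined deterministically from the fixed shortest paths and from leaf counts of subtrees of $T_u^*$. Consequently, for fixed $(u,i)$ the quantity $|L_u^{(i)}\cap V_0|$ is a sum of $|L_u^{(i)}|$ independent Bernoulli random variables, one for each $v\in L_u^{(i)}$ indicating whether $\rho(v)<50\ln n/\DD$, each equal to $1$ with probability $p:=\min\{1,\,50\ln n/\DD\}$. By Lemma~\ref{lem:lu_simple_bound} we have $|L_u^{(i)}|<\DD^2$, so this sum has expectation strictly less than $\DD^2 p\le 50\,\DD\ln n$.

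Next I would split according to the size of $\DD$. If $\DD\le 100\ln n$, then $|L_u^{(i)}|<\DD^2\le 100\,\DD\ln n$ holds deterministically, so the bad event of the lemma cannot occur at all. If $\DD>100\ln n$, then $p<1/2$; writing $\mu_0:=50\,\DD\ln n$, which dominates the expectation, a standard Chernoff estimate (bounding the moment generating function by $\exp(\mu_0(e^\theta-1))$ and optimizing over $\theta>0$ at $\theta=\ln 2$) gives, since $100\,\DD\ln n=2\mu_0$,
$$
\Pr\!\left[\,|L_u^{(i)}\cap V_0|>100\,\DD\ln n\,\right]\;\le\; e^{-(2\ln 2-1)\mu_0}\;=\;n^{-(2\ln 2-1)\cdot 50\,\DD}\;\le\; n^{-19\DD}.
$$

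Finally I would union-bound over all $u\in V$ and all $1\le i\le k_u$. Since the components partition $L_u\subseteq V(T_u^*)$, each is nonempty and $k_u\le n$, so there are at most $n^2$ pairs to consider, whence $\Pr[Y]\le n^2\cdot n^{-19\DD}<1/n$ for every admissible value of $\DD$. The argument itself is routine; the only points requiring a little care are the degenerate regime of small $\DD$ (handled above by the deterministic size bound from Lemma~\ref{lem:lu_simple_bound}) and checking that the constant $50$ in the definition of $V_0$ is chosen large enough that the Chernoff exponent comfortably absorbs the $n^2$ loss from the union bound — which is precisely the role of that constant.
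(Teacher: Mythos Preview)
Your proof is correct and follows essentially the same route as the paper: bound the expectation of $|L_u^{(i)}\cap V_0|$ by $50\,\DD\ln n$ via Lemma~\ref{lem:lu_simple_bound}, apply a multiplicative Chernoff bound at the threshold $100\,\DD\ln n=2\mu_0$, and union-bound over the at most $n^2$ pairs $(u,i)$. The paper uses the cruder form $\Pr[X\ge 2\mu_0]\le e^{-\mu_0/3}$, yielding $n^{-16}$ per pair, while you optimize $\theta$ to get $n^{-19\DD}$; both comfortably survive the union bound. Your explicit observation that $L_u^{(i)}$ is determined independently of $\rho$, and your treatment of the degenerate regime $\DD\le 100\ln n$ (where the bound holds deterministically), are points the paper leaves implicit.
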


\begin{proof}
Let $Z(v)$ denote the indicator variable for node $v$ and set $V_0$, i.e., $Z(v)=1$ if $v\in V_0$ and $Z(v)=0$ otherwise. Clearly, $\Pr[Z(v)=1] = 50 \ln n / \DD$, and all random variables $Z(v)$, $v\in V$ are independent.
For any fixed $|L_u^{(i)}|$, we have:
\[
\E \sum_{v\in L_u^{(i)}} Z(v) \leq |L_u^{(i)}| \cdot 50 \ln n / \DD \leq 50 \DD \ln n,
\]
where we used~\ref{lem:lu_simple_bound} to bound $|L_u^{(i)}|$. Next, we proceed by apply a simple multiplicative Chernoff bound for the considered random variable:
\[
\Pr[|L_u^{(i)} \cap V_0| > 100 \DD \ln n] = \Pr[\sum_{v\in L_u^{(i)}} Z(v) > 100 \DD \ln n] \leq e^{-50 \DD\ln n / 3} < n^{-16}.
\]
Applying a union bound over $L_u^{(i)}$, for all $u\in V$ and $1\leq i \leq k_u$, gives the claim.
\end{proof}

We are now ready to apply a Chernoff-bound type analysis to the random variable $|L_u'|$.
\begin{lemma}\label{lem:light_small_d}
Let $\DD \leq \sqrt n / \ln n$. Then: $\Pr[\forall_{u \in V}\ |L_u'| < 800 n/\DD] = 1 - \bigo{1/n}$.
\end{lemma}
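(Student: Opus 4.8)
The goal is a high-probability bound $|L_u'| < 800n/\DD$ simultaneously over all $u\in V$, in the regime $\DD \le \sqrt n/\ln n$. The plan is to write $|L_u'| = \sum_{i=1}^{k_u} |L_u'^{(i)}|$ and apply a Chernoff-type bound to this sum of (for fixed $u$) independent random variables — independence across components $i$ follows because the paths $P_d$ witnessing membership in $L_u'^{(i)}$ lie in disjoint subtrees $T_u^*[L_u^{(i)}]$, so the relevant $\rho$-values are disjoint. The obstacle is that the naive range bound $|L_u'^{(i)}| < \DD^2$ from Lemma~\ref{lem:lu_simple_bound} is far too weak for a Bernstein/Chernoff argument when $\DD$ is as large as $\sqrt n/\ln n$: a term of magnitude $\DD^2 \approx n/\ln^2 n$ in the range would swamp the target bound $800n/\DD \approx 800\sqrt n \ln n$. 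So the main work is to condition on the good events $\neg X$ and $\neg Y$ of Lemmas~\ref{lem:bad_x} and~\ref{lem:bad_y}, which together cost only $O(1/n)$ in probability, and under which a much tighter range bound holds.

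\medskip
First I would condition on $\neg X \wedge \neg Y$. Under $\neg X$, every vertex of every $L_u'$ lies in $V_0$, so $|L_u'^{(i)}| = |L_u'^{(i)} \cap V_0| \le |L_u^{(i)} \cap V_0|$, and under $\neg Y$ this is at most $100\DD\ln n$. Thus, conditionally, each summand $|L_u'^{(i)}|$ has range bounded by $b := 100\DD\ln n$, which in the regime $\DD \le \sqrt n/\ln n$ satisfies $b \le 100\sqrt n$ — comfortably below the target $\Theta(\sqrt n \ln n)$. The conditioning does perturb the distribution of the $|L_u'^{(i)}|$, but only on an event of probability $O(1/n)$; one can either (i) argue that $\E[\,|L_u'^{(i)}| \mid \neg X \wedge \neg Y\,] \le \E|L_u'^{(i)}|/\Pr[\neg X\wedge\neg Y] \le 2\cdot 36\, l_u^{(i)}$ for $n$ large, keeping $\sum_i \E[\,|L_u'^{(i)}|\mid\cdot\,] = O(n/\DD)$ via $\sum_i l_u^{(i)} \le 4n/\DD$ (Lemma~\ref{lem:exp_all}'s proof), and that independence across $i$ is preserved because $\neg X \wedge \neg Y$ is itself (essentially) a product event over the components' $\rho$-values — or (ii) more cleanly, replace each $|L_u'^{(i)}|$ by the truncated variable $\min\{|L_u'^{(i)}|, b\}$, note this equals $|L_u'^{(i)}|$ on $\neg X\wedge\neg Y$, and run the Chernoff bound on the (now genuinely independent, bounded, small-expectation) truncated variables, paying the $O(1/n)$ for the conditioning at the very end by a union bound.

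\medskip
With range $b = 100\DD\ln n$ and total expectation $\mu := \sum_i \E[\,\cdot\,] = O(n/\DD)$, a Bernstein inequality for the sum $|L_u'|$ of independent variables bounded by $b$ gives, for a deviation $t$,
$$
\Pr\!\left[\,|L_u'| > \mu + t\,\right] \le \exp\!\left(\frac{-t^2/2}{\mu b + tb/3}\right).
$$
Taking $t = \Theta(n/\DD)$ (so that $\mu + t < 800n/\DD$, using the constant $144$ from Lemma~\ref{lem:exp_all} with room to spare) the exponent is of order $-\,\frac{(n/\DD)^2}{(n/\DD)\cdot \DD\ln n} = -\,\frac{n}{\DD^2 \ln n}$, and since $\DD \le \sqrt n/\ln n$ we have $\DD^2\ln n \le n/\ln n$, so the exponent is at most $-\ln n$ up to constants; choosing the constants generously yields a failure probability $n^{-2}$ (or better) for each fixed $u$. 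A union bound over the $n$ choices of $u$, together with the $\Pr[X] + \Pr[Y] < 2/n$ from Lemmas~\ref{lem:bad_x} and~\ref{lem:bad_y}, gives $\Pr[\exists u: |L_u'| \ge 800n/\DD] = O(1/n)$, which is the claim. The one delicate point to get right in the write-up is the interaction between conditioning on $\neg X \wedge \neg Y$ and the independence/expectation bounds feeding into Bernstein; the truncation device in (ii) above sidesteps it most transparently, so that is the route I would take.
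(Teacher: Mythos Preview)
Your proposal is correct and follows essentially the same route as the paper. The paper defines an auxiliary variable $\lambda_u'^{(i)}$ that equals $|L_u'^{(i)}|$ when the component-local versions of the good events in Lemmas~\ref{lem:bad_x} and~\ref{lem:bad_y} hold and is set to $0$ otherwise (your truncation device (ii) achieves the same effect), then applies a multiplicative Chernoff bound rather than Bernstein; the arithmetic in the exponent is the same $\Theta(n/(\DD^2\ln n)) \ge \Theta(\ln n)$ computation you sketch, and the final union bound over $u$ together with $\Pr[X]+\Pr[Y]<2/n$ closes the argument exactly as you describe.
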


\begin{proof}
Define the random variable $\lambda_u'^{(i)}$ as $|L_u'^{(i)}|$ when $L_u'^{(i)} \subseteq L_u^{(i)} \cap V_0$ and $|L_u^{(i)} \cap V_0| \leq 100 \DD\ln n$, and fix $\lambda_u'^{(i)} = 0$ otherwise.

Define $\lambda_u' = \sum_{i=1}^{k_u} \lambda_u'^{(i)}$. All $\lambda_u'^{(i)}$ are independent random variables, since they are functions of disjoint sets of random variables $(\rho(v) : v \in  L_u^{(i)})$. Moreover, we have $0\leq  \lambda_u'^{(i)} \leq 100 \DD \ln n$. An application of a simple multiplicative Chernoff bound gives for any $A>0$:
\[
\Pr[\lambda_u' \geq A ] < \left(e\frac{\E \lambda _u'}{A}\right)^{A/(100 \DD \ln n)} \leq \left(\frac{144e n/\DD}{A}\right)^{A/(100 \DD \ln n)} < \left(\frac{400 n/\DD}{A}\right)^{A/(100 \DD \ln n)},
\]
where we used the bound $\E \lambda_u' \leq \E |L_u'| \leq 144 n/\DD$ following from Lemma~\ref{lem:exp_all}. Putting $A = 800 n/\DD$ and taking into account that $\DD \leq \sqrt n / \ln n$, we get for sufficiently large $n$:
\[
\Pr[\lambda_u' \geq 800 n/\DD ] < 2^{-8 n/(\DD^2 \ln n)} < \frac{1}{n^2}.
\]
By applying a union bound over all nodes, we obtain that $\Pr[\forall_{u \in V}\ \lambda_u' < 800 n/\DD] > 1 - 1/n$. Taking into account Lemmas~\ref{lem:bad_x} and~\ref{lem:bad_y}, we also have:
\[
\Pr[\forall_{u \in V}\ \lambda_u' = |L_u'| ] > 1 - 2/n.
\]
Overall, we obtain:
\[
\Pr[\forall_{u \in V}\ |L_u'| < 800 n/\DD] > 1 - 3/n.
\]
\end{proof}

In view of the definition of the proposed hub set labeling (Eq.~\eqref{eq:sdu}), Lemmas~\ref{lem:heavy} and~\ref{lem:light_small_d} complete the analysis of the case of $\DD\leq \sqrt n / \ln n$, showing that our randomized construction yields with high probability hub sets of size $\bigo{n/\DD}$ for all nodes of the graph.

\begin{proposition}\label{pro:largedpreserving}
For any $0 < \DD \leq \sqrt n / \ln n$, $12|\DD$, there exists a hub labeling scheme which correctly decodes the distance between any pair of nodes lying at a distance in the range $[\DD, 5\DD/4]$, using hubs of size at most $O(n/\DD)$.\qed
\end{proposition}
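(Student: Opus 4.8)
The plan is to assemble the proposition directly from the construction of Eq.~\eqref{eq:sdu} and the lemmas already proved in this subsection; the proposition itself is essentially a bookkeeping step. We take the hub sets $S_\DD(u) = H_u \cup L'_u$ together with the randomized choice of values $\rho(v) \in [0,1]$, and decode distances by the standard hub-based minimum of Eq.~\eqref{eq:distance} restricted to these sets.

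First I would dispatch correctness: for any pair $u,v$ with $\cost(u,v) \in [\DD, 5\DD/4]$, Lemma~\ref{lem:dpreserving_correctness} guarantees, \emph{deterministically} for every realization of $\rho$, that $S_\DD(u) \cap S_\DD(v)$ contains a vertex lying on the shortest $u$--$v$ path, so the decoded value equals $\cost(u,v)$. Hence nothing about the decoding property depends on the random choices, and the only remaining task is to control $\max_{u \in V} |S_\DD(u)|$.

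Second I would bound the hub set size by splitting $|S_\DD(u)| \leq |H_u| + |L'_u|$. The term $|H_u|$ is bounded \emph{unconditionally} by $3n/\DD$ via Lemma~\ref{lem:heavy}. For $|L'_u|$ I invoke Lemma~\ref{lem:light_small_d}: under the hypothesis $\DD \leq \sqrt n / \ln n$, with probability $1 - O(1/n)$ every node satisfies $|L'_u| < 800\, n/\DD$. Adding the two estimates yields $|S_\DD(u)| < 803\, n/\DD = O(n/\DD)$ simultaneously for all $u \in V$, with probability $1 - O(1/n)$. Finally, to upgrade this from a distribution over labelings to a single fixed labeling scheme, I apply the probabilistic method: since the event ``$\forall u:\ |S_\DD(u)| = O(n/\DD)$'' has probability $1 - O(1/n) > 0$ for $n$ sufficiently large, there exists a deterministic assignment $\rho^\star$ of the random values realizing it; fixing $\rho = \rho^\star$ gives the claimed scheme, with the finitely many small values of $n$ absorbed into the $O(\cdot)$ by storing all pairwise distances directly.

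The genuinely hard part is not in this proposition but upstream, in the concentration estimate of Lemma~\ref{lem:light_small_d}. That argument is where the restriction $\DD = \widetilde O(\sqrt n)$ must be exploited: one first localizes all of $L'_u$ inside the small set $V_0$ and caps $|L_u^{(i)} \cap V_0|$ at $O(\DD \log n)$ (the ``bad'' events $X$ and $Y$ of Lemmas~\ref{lem:bad_x} and~\ref{lem:bad_y}), and only then can a Chernoff bound over the independent per-component contributions $\lambda_u'^{(i)}$ — each of range $O(\DD \log n)$ by Lemma~\ref{lem:lu_simple_bound} — beat a union bound over all $n$ roots, which forces $n/(\DD^2 \ln n) = \Omega(1)$, i.e.\ $\DD = \widetilde O(\sqrt n)$. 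Once that lemma is granted, the proposition follows by the routine combination described above.
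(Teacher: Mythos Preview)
Your proposal is correct and matches the paper's approach exactly: the proposition is marked with a \qed{} and no separate proof precisely because it is the immediate combination of Lemma~\ref{lem:dpreserving_correctness} (correctness), Lemma~\ref{lem:heavy} (the $|H_u|$ bound), and Lemma~\ref{lem:light_small_d} (the high-probability $|L'_u|$ bound under $\DD \le \sqrt{n}/\ln n$), which you assemble in the same way. Your added remarks on the probabilistic-method derandomization and on why the $\DD = \widetilde O(\sqrt n)$ threshold arises in Lemma~\ref{lem:light_small_d} are accurate and simply make explicit what the paper leaves implicit.
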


\subsection{Improved \texorpdfstring{$\DD$}{D}-preserving Distance Labeling for Arbitrary Distance}

For an arbitrary instance of the $\DD$-preserving distance labeling problem, we can now construct a hub set $S^+(u)$ by combining the results of Propositions~\ref{pro:largedpreserving} and~\ref{pro:smalldpreserving}, for large and small scales of distance, respectively. Formally, we put:
\begin{equation}
\label{eq:hub_sum}
S^+(u) := S_{\DD_{\max}}(u) \cup \bigcup_{\substack{i=0,1,2,\ldots\\ \DD_i < \DD_{\max}}} S_{\DD_i} (u),
\end{equation}
where in the first part of the expression, the value
$$
\DD_{\max} = \max\{\DD, \lfloor \sqrt n / \ln n \rfloor\}
$$
is a suitably chosen threshold parameter, and the hub set $S_{\DD_{\max}}(u)$ is constructed following Proposition~\ref{pro:largedpreserving}, thus providing a $\DD_{\max}$-preserving distance labeling. In the second part of the expression, we take care of smaller distances from the range $[\DD,\DD_{\max})$, by applying Proposition~\ref{pro:smalldpreserving} over a specifically chosen distance sequence $\{\DD_i\}_i$, to obtain hub sets $S_{\DD_i}(u)$, such that each set $S_{\DD_i}(u)$ intersects with a shortest $u-v$ path, for all nodes $v$ such that $\cost(u,v) \in [\DD_i, 5 \DD_i/4]$. To obtain a coverage of the entire distance range $[\DD,\DD_{\max})$, we put $\DD_0 = 12 \lfloor \DD / 12 \rfloor < \DD$, and choose $\DD_{i+1}$ as the largest integer such that $12 | \DD_{i+1}$ and $\DD_{i+1} < 5\DD_i/4$. Since the sequence $\{\DD_i\}_i$ is geometrically increasing, in view of Proposition~\ref{pro:smalldpreserving}, we obtain the following bound:
\begin{equation}
|S^+(u)| = O(n / \DD + |S_{\DD_{\max}}(u)|).
\end{equation}
Now, taking into account the definition of $\DD_{\max}$ and bounding $|S_{\DD_{\max}}(u)|$ by Proposition~
\ref{pro:largedpreserving}, we directly obtain the main result of this Section.

\begin{theorem}\label{thm:dpreserving}
There exists a $\DD$-preserving distance labeling scheme based on hub sets, such that:
\begin{itemize}
\item[(i)] When $\DD\leq \sqrt n / (\ln n \ln \ln n)$, the hub sets of all nodes are of size $\bigo{n/\DD}$, which corresponds to distance labels of size $\bigo{n \log \DD/\DD}$ per node.
\item[(ii)] For any $\DD>0$, the hub sets of all nodes are of size $\bigo{\log \DD + n \log \log \DD/\DD}$, which corresponds to distance labels of size $\bigo{\log^2 \DD + n \log \DD \log \log \DD/\DD}$ per node.
\item[(iii)] For any $\DD>0$, the average size of a hub set, taken over all nodes, is $\bigo{n/\DD}$, which corresponds to distance labels of average size $\bigo{n \log \DD /\DD}$ per node.
\end{itemize}
Furthermore, the corresponding labels can be constructed in expected polynomial time.\qed
\end{theorem}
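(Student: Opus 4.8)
The plan is to assemble the $\DD$-preserving labeling scheme from the two building blocks already constructed in the preceding subsections, verify its correctness, and then bound its size case by case. First I would establish correctness: given a query pair $(u,v)$ with $\cost(u,v) > \DD$, I need some $w \in S^+(u) \cap S^+(v)$ lying on a shortest $u$-$v$ path. If $\cost(u,v) \geq \DD_{\max}$, then by the choice of threshold $\DD_{\max} = \max\{\DD, \lfloor\sqrt n/\ln n\rfloor\}$ and Proposition~\ref{pro:largedpreserving}, the labeling $S_{\DD_{\max}}$ is $\DD_{\max}$-preserving and already handles this (strictly, Proposition~\ref{pro:largedpreserving} only covers the band $[\DD_{\max}, 5\DD_{\max}/4]$, so I should note that the full scheme of Proposition~\ref{pro:smalldpreserving} or a geometric ladder of bands $\{\DD_i\}$ extending beyond $\DD_{\max}$ actually covers all larger distances; this is the same kind of geometric-ladder argument used for the sub-$\DD_{\max}$ range). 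If instead $\cost(u,v) \in [\DD,\DD_{\max})$, then since $\{\DD_i\}$ is a geometrically increasing sequence with ratio close to $5/4$ covering $[\DD_0,\DD_{\max})$, there is an index $i$ with $\cost(u,v) \in [\DD_i, 5\DD_i/4]$, and the hub set $S_{\DD_i}$ from Proposition~\ref{pro:smalldpreserving} (via Lemma~\ref{lem:dpreserving_correctness}) provides the desired common hub.

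Next I would bound the sizes. The key observation is the geometric series: since $\DD_{i+1} \approx (5/4)\DD_i$, we have $\sum_i 1/\DD_i = O(1/\DD)$, so summing the per-scale bounds of Proposition~\ref{pro:smalldpreserving} over all $\DD_i < \DD_{\max}$ contributes $O(n/\DD)$ to the worst-case size and $O(n/\DD)$ to the average size, plus lower-order additive terms $O(\log \DD_i)$ which sum to $O(\log^2 \DD_{\max})$ in the worst case and telescope harmlessly in the average. Hence $|S^+(u)| = O(n/\DD + |S_{\DD_{\max}}(u)|)$ always, and on average $O(n/\DD)$. It then remains to plug in $|S_{\DD_{\max}}(u)|$. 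For part~(i), when $\DD \leq \sqrt n/(\ln n \ln\ln n)$, a short computation shows $\DD_{\max} = \lfloor \sqrt n/\ln n\rfloor$ (one must check $\DD \leq \DD_{\max}$), so Proposition~\ref{pro:largedpreserving} gives $|S_{\DD_{\max}}(u)| = O(n/\DD_{\max}) = O(\sqrt n \ln n) = O(n/\DD)$ in this regime, yielding worst-case size $O(n/\DD)$. For part~(ii), for arbitrary $\DD$ I would fall back on Proposition~\ref{pro:smalldpreserving} applied directly at scale $\DD_{\max}$ (or equivalently at $\DD$ when $\DD$ is already large), giving worst-case $O(\log \DD_{\max} + (n/\DD_{\max})\log\log \DD_{\max})$; since $\DD_{\max} \leq \max\{\DD, \sqrt n\}$ one checks $(n/\DD_{\max})\log\log\DD_{\max} = O(n\log\log\DD/\DD + \sqrt n \log\log n) = O(n\log\log\DD/\DD)$ and $\log\DD_{\max} = O(\log\DD + \log n) = O(\log\DD)$ after using $\log n = O(\log^2 \DD)$-type slack, giving the stated $O(\log\DD + n\log\log\DD/\DD)$. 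Part~(iii) follows from the average-size half of Proposition~\ref{pro:smalldpreserving} summed over the ladder. The bit-size bounds follow by multiplying hub counts by $O(\log \DD)$ bits per distance entry (distances below $\DD_{\max}$ fit in $O(\log \DD_{\max}) = O(\log \DD)$ bits, or can be stored relative to the query).

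Finally, for the polynomial-time construction claim, I would observe that each $S_{\DD_i}(u)$ and $S_{\DD_{\max}}(u)$ is computed by one single-source shortest-path computation to build the relevant truncated tree $T_u$, a linear-time traversal to identify $H_u$ and the components of $L_u$, and a linear-time scan with the precomputed random labels $\rho$ to extract $L'_u$; there are $O(\log_{5/4}(\DD_{\max}/\DD)) = O(\log n)$ scales, so the total work per node is polynomial, and the expectation is over the coin flips $\rho$ which succeed with probability $1 - O(1/n)$ (resampling on failure costs only a constant factor in expectation).

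The main obstacle I expect is not any single deep step but the bookkeeping in part~(ii): one has to be careful that the additive $O(\log\DD_{\max})$ and the $\log\log\DD_{\max}$ factors, together with the definition $\DD_{\max} = \max\{\DD,\lfloor\sqrt n/\ln n\rfloor\}$, actually collapse to the clean $O(\log\DD + n\log\log\DD/\DD)$ stated, rather than leaving a stray $\sqrt n$ or $\log n$ term; this requires invoking the doubling-dimension-type relation between $n$, $\DD$ and the graph (or simply the trivial bound $n/\DD \geq 1$ when $\DD \leq n$) to absorb the threshold contribution. The correctness argument across the seam at $\DD_{\max}$ — ensuring distances in $[\DD_{\max}, 5\DD_{\max}/4]$ and beyond are genuinely covered, not just those below — is the other place where a careless statement could leave a gap, and is handled by noting the ladder $\{\DD_i\}$ can be extended past $\DD_{\max}$ at no asymptotic cost.
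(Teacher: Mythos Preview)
Your overall architecture matches the paper's exactly: a geometric ladder of band labelings $S_{\DD_i}$ covering $[\DD,\DD_{\max})$ together with a single $\DD_{\max}$-preserving labeling on top, the observation that the bands sum geometrically to $O(n/\DD)$, and then a case analysis plugging in the bound for $|S_{\DD_{\max}}(u)|$.

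There is, however, a genuine mix-up in which proposition supplies which piece (the paper's prose unfortunately invites this, since its cross-references appear to be swapped). The band scheme for a single interval $[\DD_i,5\DD_i/4]$ with worst-case size $O(n/\DD_i)$ is Proposition~\ref{pro:largedpreserving}, and it is only valid for $\DD_i \leq \sqrt n/\ln n$; the full $\DD$-preserving scheme with worst-case size $O(\log n + (n/\DD)\log\log n)$ and average size $O(n/\DD)$ is Proposition~\ref{pro:smalldpreserving}. So the bands $S_{\DD_i}$ must come from Proposition~\ref{pro:largedpreserving} and the top layer $S_{\DD_{\max}}$ from Proposition~\ref{pro:smalldpreserving}. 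Once you make this assignment, a \emph{single} scheme satisfies (i)--(iii) simultaneously; you should not switch constructions between parts. In particular, your fallback of ``extending the ladder past $\DD_{\max}$ at no asymptotic cost'' does not work: Proposition~\ref{pro:largedpreserving} cannot be applied for $\DD_i > \sqrt n/\ln n$, which is precisely why the threshold $\DD_{\max}$ is placed there and the hand-off to Proposition~\ref{pro:smalldpreserving} occurs. Finally, the worst-case bound you quote for $S_{\DD_{\max}}$ in part~(ii) should read $O(\log n + (n/\DD_{\max})\log\log n)$, not $O(\log\DD_{\max} + (n/\DD_{\max})\log\log\DD_{\max})$; the conversion to $\log\DD$ and $\log\log\DD$ in the theorem then uses that either $\DD_{\max}=\DD \geq \sqrt n/\ln n$ (so $\log n = \Theta(\log\DD)$) or else the $n/\DD$ term already dominates everything.
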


\section{Computing Skeleton Dimension and Distance Labels} \label{sec:compute}

\paragraph{Discrete skeleton representation.}
Given a tree $T$ rooted at node $u$ with length function $\ell$,
a discrete representation of its skeleton $T^*$ can be obtained
as the sub-tree with edges $vw\in E(T)$ such that
$\reach_T(v) \ge \frac{1}{2} d_T(u,v)$ equipped with length function $\ell'$
defined by $\ell'(vw)=\ell(vw)$ if $\reach_T(w) \ge \frac{1}{2} d_T(u,w)$ and
$\ell'(vw) = r - d_T(u,v)$ with $r = \frac{2}{3} (d_T(u,w) + \reach_T(w))$
otherwise. The idea is that node $w$ is a leaf of $T^*$
that corresponds to the point $w'$ of edge $vw$ in $\T$
that satisfies $\reach_{\T}(w') = \frac{1}{2} d_{\T}(u,w')$.
To see this, let $x$ be a descendant of $w$ such $d_T(w,x)=\reach_T(w)$. We then 
have $r=\frac{2}{3}d_T(u,x)$. We thus get
$\reach_{\T}(w')=d_T(v,x)-\ell'(vw)=d_T(u,x)-r=\frac{1}{3} d_T(u,x)=\frac{r}{2}$
whereas $d_{\T}(u,w')=d_T(u,v)+\ell'(vw)=r$.

\paragraph{Skeleton dimension computation.}
Given a tree $T$, the reach of each node can be computed by a scan of vertices in reverse topological order. Obtaining the discrete skeleton representation is then straight-forward. Its width $\sk$ can be computed by scanning vertices by non-decreasing distance from the root using a priority queue for storing edges containing nodes in $\cut_r(T)$. This can be done in $O(n\log\log \sk)$ time using a dedicated integer priority queue~\cite{thorup04}.

The skeleton of a graph can thus be simply obtained from an all pair shortest path computation. With integer lengths and dedicated priority queues~\cite{thorup04}, this can be done in $O(nm+n^2\log\log n)$ time.

We remark that faster computation of tree skeletons could be obtained in practice by using classical heuristics for bounding reach of nodes~\cite{reach,RE}. The algorithm proposed in \cite{reach} alternates partial shortest-path tree computation with introduction of shortcuts to obtain efficiently reach bounds on the graph plus the added shortcuts. The computation of partial trees up to a given radius $2r$ allows to prune nodes with reach less than $r$. Shortcuts allow to reduce reach of nodes with degree 2: if a node $v$ has two neighbors $u,w$, a shortcut $uw$ with length $\ell(uv)+\ell(vw)$ is added to by-pass $v$. The algorithm results in reach bounds on the graph with shortcuts. Reach bounds on the original graph can then be obtained by removing shortcuts in reverse order and updating the reach bound $R(v)$ of a node $v$ by-passed by shortcut $uw$ as $R(v):=\max(R(v),\min(R(u)+\ell(uv),\ell(vw)+R(w)))$ where $R(u)$ and $R(w)$ denote the reach bounds obtained for $u$ and $w$ respectively. A subtree containing the tree skeleton of a node $u$ can then be obtained through a partial Dijkstra from $u$ where we prune nodes whose reach is known to be less than half of the current distance from $u$. In practice, we believe that this would allow to compute each skeleton tree in time comparable to an RE query. Our labeling algorithm can then be adapted to  take the resulting family of trees as input.

\paragraph{Distance label computation.}
Computing the hub set of a tree $T$ is more intricate as we have to emulate the subdivision of each edge of length $a$ into $12a$ unweighted edges to conform to the analysis of Section~\ref{sec:hub_labeling}.
For the sake of notation, we number these unweighted edges of the subdivision from 1 to $12a$, and let $\rho_i$, $1 \leq i \leq 12a$, denote the associated random number which is generated for each of the edges of the subdivision. Given a sample $\set{\rho_i\mid 1\le i\le 12a}$, let $M$ denote the set of indices of edges which are prefix minima or suffix minima in the sequence $(\rho_1,\ldots,\rho_{12a})$. For the purpose of our selection algorithm, we only need to generate set $M$ and the  $\rho_i$-values associated with $i\in M$. We start by generating those elements of $M$ which are prefix minima. By a slight abuse of notation, to initialise the process, we set $i_0=1$ and
generate $\rho_1$ uniformly at random in $[0,1]$, $\rho_1 := \rand(0,1)$. 
For successive $j = 0,1,2, \ldots$, we then generate the index $i_{j+1}$ of the first edge of $M$ after the edge with index $i_j$ (which is also the first index of $M$ having value less than $\rho_{i_j}$). As $i_{j+1}-i_j$ follows a geometric distribution with parameter $p = \rho_{i_j}$, this can be done in constant time by setting $i_{j+1} :=i_j+\floor{\frac{\log \rand(0,1) }{\log (1-p)}}$ (see e.g. \cite{devroye}). We then generate $\rho_{i_{j+1}}$ uniformly in $[0,\rho_{i_j}]$. We generate in this way indices $i_1,\ldots,i_x$, until for some $x \in \N$ we reach the $12a$ bound (i.e. $i_{x+1}>12a$). We then proceed similarly in reverse order for edges with index greater than $i_x$, to generate those edges with suffix minimal value (note that this time we have to sample values greater than $\rho_{i_x}$, rather than greater than $0$, and adapt all ranges accordingly, for consistency with the choice of prefix minima). In this way, we perform $O(\log a)$ constant-time sampling operations per edge of length $a$, obtaining $O(\log a)$ values (together with their positions) per edge, in expectation (and also w.h.p.\ with respect to $a$). Since random choices made for all edges of the original graph are independent, by a quick Chernoff bound, the total amortized sampling time over the whole graph is $O(m \log C)$, w.h.p., where $C$ denotes the maximum length of an edge. We remark that when constructing a hub set for a subset of nodes, each node only relies on the random choices made in its shortest-path tree, which can be evaluated in time $O(n \log C)$, w.h.p.


Our selection algorithm of the edge with minimal value in the middle window for a pair $u,v$ will necessarily select an edge that we have generated when the window contains a (real) edge extremity. Each time a virtual unweighted edge is selected as a hub, we indeed select the real edge it belongs to. We also have to manage the special case where the middle window entirely falls inside a long edge. In that case, the long edge is selected as hub.

The computation of the hub set is then a matter of scanning the tree by non-decreasing distance $r$ of generated vertices while maintaining a sliding middle window for each branch reaching distance $r$.
Using a balanced binary search tree per window for storing the virtual edges it contains, we obtain the hub set in $O(n\log C \log (n\log C))$ time.
\laurent{Better analysis by bounding nb of edges in each tree ? $\log D$ instead of $n\log C$ ?}
Distance labels can thus be computed in expected $O(nm+n^2\log C(\log n + \log \log C))$ time. Note that each of the $n$ labels can be computed independently (e.g. in parallel) in $O(m +n \log C (\log n + \log \log C))$ time per label, as long as randomness is shared (e.g. using random generators with same seeds).

\laurent{note of Laurent written by Adrian: why not use the reach algorithm as a heuristic for limiting (pruning) the Dijsktra tree when computing skeleton tree?}

\section{Generalizing the Definition of a Skeleton}
\label{sec:generalizations}

The definition of a skeleton, and the corresponding notion of skeleton dimension can be generalized in two ways, by using a different distinct distance metric to compute reach of a point in a tree, as well as by modifying the threshold value of reach required to retain a point in the skeleton.

\paragraph{Using two metrics.}
Suppose the graph $G$ is associated with two non-negative length functions, $\ell$ and $\ell'$, of its edges. For example, in road networks, one can typically consider travel time $\ell_t$ and geographic distance $\ell_d$, resulting in time and distance metrics, respectively. Another metric which may be of interest is hop count, corresponding to the constant function $\ell_h=1$ (i.e. $\ell_h(uv)=1$ for each edge $uv$).
Once a shortest path tree $T_u$ for node $u$ and its geometric realization $\T_u$ has been computed according to length function $\ell$, distance and reach within $\T_u$ can be computed according to $\ell'$.
Formally, extending the definition from Section~\ref{sec:skeldef}, the \emph{skeleton $T_u^{*\ell'}$} of $T_u$ is then defined as the subtree of $\T_u$ induced by $\{v\in V(\T_u)\mid \reach_{\T_u}^{\ell'}(v) \geq \frac{1}{2} d_{\T_u}^{\ell'}(u,v)\}$, where $d_{\T_u}^{\ell'}$ and $\reach_{\T_u}^{\ell'}$ denote distance and reach with respect to $\ell'$. The skeleton dimension of $G$ is then $k^{\ell'}=\max_{u\in V(G)}\width(T_u^{*\ell'})$. The advantage of this approach is that it sometimes results in smaller skeleton dimension (depending on the choice of metric $\ell'$), without affecting the correctness of any of the hub labeling schemes designed in this paper. As an example, the respective skeleton dimensions of the 9-th DIMACS New York graph~\cite{dimacs9th} for different choices of metrics turn out to be: $k^{\ell_t}=73$, $k^{\ell_d}=66$, and $k^{\ell_h}=56$, where $\ell_t$, $\ell_d$, and $\ell_h$ denote travel-time, geographic-distance, and hop-count length functions, respectively (considering shortest path trees for the metric $\ell_t$ in all three cases). 

We remark that a similar phenomenon, also taking advantage of two metrics, was observed and used in the reach-pruning approach~\cite{reach}. 

\paragraph{Modifying reach threshold.}
The choice of a reach threshold of $\frac{1}{2}$ in the definition of skeleton is arbitrary. 
Indeed, for any fixed $\a >0$, we can define the \emph{skeleton $T_u^{\a *}$} as the subtree of $\T_u$ induced by $\{v\in V(\T_u)\mid \reach_{\T_u}(v) \geq \a\, d_{\T_u}(u,v)\}$, and the skeleton dimension $k_\a$ of $G$ is given as $k_\a=\max_{u\in V(G)}\width(T_u^{\a*})$. The values of skeleton dimension for different values $\a$ and $\b$ of the reach threshold are related to each other by the following Proposition.

\begin{proposition}\label{pro:alphabeta}
For two constants $\a<\b$, the following bounds hold:
\[
k_\b \le
k_\a \le k_\b k_{\frac{\b+1}{\b/\a-1}}.
\]
\end{proposition}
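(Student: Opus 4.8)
## Proof Proposal for Proposition~\ref{pro:alphabeta}

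The lower bound $k_\b \le k_\a$ is immediate: when $\a < \b$, the reach condition $\reach_{\T_u}(v) \ge \b\, d_{\T_u}(u,v)$ is strictly stronger than $\reach_{\T_u}(v) \ge \a\, d_{\T_u}(u,v)$, so the vertex set of $T_u^{\b*}$ is contained in that of $T_u^{\a*}$. Since $T_u^{\b*}$ is a subtree of $T_u^{\a*}$ sharing the same root, every cut $\cut_r(T_u^{\b*})$ is a subset of $\cut_r(T_u^{\a*})$ (modulo the geometric realization), hence $\width(T_u^{\b*}) \le \width(T_u^{\a*})$ for every $u$, giving $k_\b \le k_\a$.

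For the upper bound, fix a node $u$ and a radius $r$, and consider the cut $C := \cut_r(T_u^{\a*})$; I want to bound $|C|$ by $k_\b \cdot k_{\frac{\b+1}{\b/\a-1}}$. The plan is a two-stage covering argument. Each point $v \in C$ lies on a path of the $\a$-skeleton, meaning $\reach_{\T_u}(v) \ge \a r$; let $x_v$ be a descendant of $v$ in $\T_u$ realizing this reach, so $d_{\T_u}(v, x_v) \ge \a r$. The first stage is to find, for each $v$, an ancestor $a_v$ of $v$ that lies on the $\b$-skeleton $T_u^{\b*}$ and is not too far above $v$; the second stage is to observe that the ancestors $a_v$ lie in a bounded-width cut of $T_u^{\b*}$, and that each such ancestor can be the chosen $a_v$ for only boundedly many $v \in C$, the latter bound coming from a width bound on a $\gamma$-skeleton for the auxiliary threshold $\gamma = \frac{\b+1}{\b/\a-1}$.

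Concretely: given $v \in C$ with descendant $x_v$ at distance $\ge \a r$ below, walk up from $v$ toward the root to the highest point $a_v$ that still satisfies the $\b$-reach condition witnessed by $x_v$, i.e. the point with $d_{\T_u}(a_v, x_v) = \b\, d_{\T_u}(u, a_v)$ (such a point exists by continuity in the geometric realization, since at $v$ the ratio $d_{\T_u}(v,x_v)/d_{\T_u}(u,v) \ge \a r / r = \a$, and the ratio is larger closer to the root). Then $a_v \in V(T_u^{\b*})$, and writing $\rho := d_{\T_u}(u, a_v)$ we get $\b\rho = d_{\T_u}(a_v, x_v) = d_{\T_u}(a_v, v) + d_{\T_u}(v, x_v)$ together with $d_{\T_u}(a_v, v) = r - \rho$ and $d_{\T_u}(v,x_v) \ge \a r$, which pins $\rho$ into a bounded range and also shows $d_{\T_u}(v, x_v)$ is bounded above (since $d_{\T_u}(v,x_v) = \b\rho - (r-\rho) = (\b+1)\rho - r$ and $\rho \le r$). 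Hence all the $a_v$ for $v \in C$ lie at a single fixed distance $\rho$ (after the usual $\eps$-perturbation to make all of $C$ have a common structure, exactly as in the proof of Claim~\ref{claim:skhd}), so they lie in $\cut_\rho(T_u^{\b*})$, of which there are at most $k_\b$. It remains to bound the number of $v \in C$ mapped to a common $a := a_v$: all such $v$ are descendants of $a$ at the same distance $r$ from $u$, each continuing down to its own $x_v$ at distance $\ge \a r$ below $v$, i.e. at distance $\ge r - \rho + \a r$ below $a$; comparing this reach-below-$a$ against $d_{\T_u}(a, v) = r - \rho$ shows $v$ lies on the $\gamma$-skeleton of the subtree rooted at $a$ for $\gamma$ equal to the ratio, which a short computation identifies as $\frac{\b+1}{\b/\a - 1}$ (using $\rho = \frac{(1+\a)r}{1+\b}$ from the equality case). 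Thus at most $k_\gamma$ points of $C$ map to each $a$, and $|C| \le k_\b \cdot k_\gamma$.

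The main obstacle I anticipate is getting the constants in the auxiliary threshold $\gamma$ exactly right: one must be careful about whether $\rho$ is forced to a single value or merely a short interval, and the perturbation argument (à la Claim~\ref{claim:skhd}) needs to be invoked so that the witnesses $x_v$ and the derived distances are simultaneously controlled for all points of the cut $C$. The inequalities $d_{\T_u}(v, x_v) \ge \a r$ (rather than equality) and the floor/ceiling-free continuous setting make this cleaner than the discrete analogue, but bookkeeping the direction of each inequality — so that $\gamma$ comes out as a clean function of $\a, \b$ and the two width bounds genuinely multiply — is where the care is needed.
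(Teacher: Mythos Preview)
Your two-stage covering plan is exactly the paper's: map each point of $\cut_r(T_u^{\a*})$ to an ancestor lying in a cut of $T_u^{\b*}$, then bound how many points share a given ancestor via a $\gamma$-skeleton bound. However, your first stage has a real gap. You pick the witness $x_v$ with $d_{\T_u}(v,x_v) = \reach_{\T_u}(v)$, and then define $a_v$ by the equation $d_{\T_u}(a_v, x_v) = \b\, d_{\T_u}(u, a_v)$. Solving gives $\rho = d_{\T_u}(u, a_v) = (r + \reach_{\T_u}(v))/(1+\b)$, which genuinely depends on $v$: different points of $C$ have different reaches (all at least $\a r$, but with no common upper bound short of the diameter), so the ancestors $a_v$ do \emph{not} land in a single cut of $T_u^{\b*}$, and you cannot bound their number by $k_\b$. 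The $\eps$-perturbation of Claim~\ref{claim:skhd} does not help here --- the spread in $\rho$ is of order $r$, not $\eps$. (There is also a boundary issue: when $\reach_{\T_u}(v) > \b r$, no ancestor of $v$ satisfies your equation at all.)

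The repair is immediate and is what the paper does: take the ancestor at the \emph{fixed} distance $r' = \frac{1+\a}{1+\b}\,r$ from $u$, independent of $v$ and of any witness. Its reach in $\T_u$ is at least $(r - r') + \a r = \b r'$, so it lies in $\cut_{r'}(T_u^{\b*})$ (at most $k_\b$ points); and each $v \in C$ sits at distance $r - r'$ below its ancestor with reach at least $\a r = \gamma(r - r')$ for $\gamma = \frac{\b+1}{\b/\a-1}$, placing it in the appropriate cut of the ancestor's $\gamma$-skeleton (at most $k_\gamma$ points). Equivalently, in your own setup, take $x_v$ at distance \emph{exactly} $\a r$ below $v$ in the geometric realization (possible since $\reach_{\T_u}(v) \ge \a r$) rather than the reach-realizing descendant; this forces $\rho = \frac{(1+\a)r}{1+\b}$ for every $v$, and then your computation of $\gamma$ and the rest of your argument go through unchanged.
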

\begin{proof}
The first relation is immediate since $T_u^{\b*}$ is a subtree of $T_u^{\a*}$ for $\b \ge \a$. The second relation is obtained by observing that $\cut_r(T_u^{\a*}) \subseteq \bigcup_{v\in \cut_{r'}(T_u^{\b*})} \cut_{r-r'}(T_v^{\b'*})$, with $r'=\frac{1+\a}{1+\b}$ and $\b'=\frac{\b+1}{\b/\a-1}$. Indeed,
for $w\in \cut_r(T_u^{\a*})$, we can consider in $\T_u$ the point $v$ at distance $r'$ from $u$ on the branch leading to $w$. The reach of $w$ in $\T_u$ is then at least
$r-r'+\reach_{\T_u}(v)\ge r-r'+\a r \ge \b r'$ for $(1+\a)r\ge (1+\b)r'$ and
$v$ thus belongs to $T_u^{\b*}$ for $r'=\frac{1+\a}{1+\b}$. Moreover, $w$ is at distance $r-r'$ from $v$ in $\T_v$ and has reach at least $\a r$, implying
$w\in V(T_v^{\b'*})$ for $\b'\le \frac{\a r}{r-r'}$, which is the case for $\b'=\frac{\b+1}{\b/\a-1}$ when $r'=\frac{1+\a}{1+\b}$.
\end{proof}
For $\beta=1$ and $\a<1$, the second relation of the above Proposition gives 
$k_\a \le k_1 k_{\frac{2\a}{1-\a}}$, which also implies that
$k_1\le k_\a\le k_1^2$ for $\a\ge 1/3$. More generally, we can derive the following bounds by repeatedly applying Proposition~\ref{pro:alphabeta} for $\b = 1$:
\[
k_1\le k_\a \le k_1^{\ceil{\log(1+1/\a)}}
\quad\mbox{ for }\quad
\a < 1\mbox{.}
\]
This shows that for a given graph, skeleton dimension $k_\a$ grows at most polynomially with $\frac{1}{\a}$.

\bigskip
Naturally, one can also apply both of the above-described generalizations together, obtaining a new skeleton dimension parameter $k_\a^{\ell'}$ with reach metric $\ell'$ and reach threshold $\a$. All the results of the paper about hub labelings and their computation in graphs with low skeleton dimension can be easily generalized to use $k_\a^{\ell'}$ instead of $k$, as long as $\a < 1$ (ensuring that any two skeleton trees $T_u^{\ell'*\a}$ and $T_v^{\ell'*\a}$ share a constant fraction of the $u-v$ shortest path). The particular choice of $k=k_{1/2}^\ell$ was made with the objective of clarity, and also on account of the simple relationship between $k_{1/2}^\ell$ and highway dimension.

\section{Conclusion} \label{sec:conclusion}

In this paper, we have proposed skeleton dimension as a measure of the network's amenability to shortest path schemes based on hub/transit nodes. We intend it as a parameter which is easy to describe and can be computed efficiently. Computations of hub sets based on skeleton dimension allow each node to individually and efficiently define its own hub set, subject only to a universal choice of random id-s. Such a construction is always correct, and gives small hub sets w.h.p. We remark that in a weighted network each node can compute its own appropriate labeling in $O(m+n\log C(\log n + \log \log C))$ time, where $C$ is the length of the longest integer weight in the network. The definition of hub sets, and the obtained bounds on their size, hold both for undirected and directed graphs. For directed graphs, skeleton dimension appears to be a parameter which is more directly usable than highway dimension.

Possible extensions of skeleton dimension, discussed in Section~\ref{sec:generalizations}, include variants of skeleton dimension with other values of reach threshold, as well as skeleton dimension defined using two separate distance metrics in the graph: one corresponding to the needs of the shortest path queries (used to construct shortest path trees), and another, potentially independent metric used internally in the computation of hub labelings, chosen so as to empirically minimize the width of the skeleton. When studying average-case parameters of a network, the integrated skeleton dimension given by~\eqref{eq:isk_def} (as well as its natural generalizations to weighted graphs) appear to be a natural parameter, which may be related to that of average highway dimension~\cite{hwVC}. We could also use the integrated skeleton dimension averaged over all nodes to get an even more accurate bound on average label size.

Finally, we remark on the interplay between skeleton and highway dimension. Skeleton dimension is always not greater than geometric highway dimension. We have also shown a clear case of separation in a weighted Manhattan-type network, where skeleton dimension is asymptotically much smaller than (geometric) highway dimension.

We remark that skeleton dimension appears particularly worthy of further theoretical study in the context of scale-free models of random graphs (cf.\ e.g.~\cite{Aldous28052013} for a discussion in the context of highway dimension and reach). For geometric percolation graphs, skeleton dimension displays a close link with the coalescence exponent for geodesics. Consequently, it may be easier to show rigorous theoretical bounds for skeleton dimension than for highway dimension.

\section*{Acknowledgment}

The authors thank Przemek Uzna\'nski and Olivier Marty for inspiring discussions on closely related problems. We also thank PU and Zuzanna Kosowska-Stamirowska for their help with the figures.

\bibliographystyle{plain}
\bibliography{paper}

\end{document}